\documentclass[12 pt,english]{article}
\usepackage[T1]{fontenc}
\usepackage[utf8]{inputenc}
\usepackage{geometry}
\usepackage[x11names]{xcolor}
\usepackage{babel}
\usepackage{verbatim}
\usepackage{prettyref}
\usepackage{float}
\usepackage{amsthm}
\usepackage{amsmath}
\usepackage{amsfonts}
\usepackage{amssymb}
\usepackage{bbm}
\usepackage{graphicx}
\usepackage{setspace}
\usepackage{esint}
\usepackage{xspace}
\usepackage{bm}

\usepackage{natbib}
\usepackage{multibib}
\newcites{supp}{References}

\usepackage{enumerate}
\usepackage{subcaption}
\usepackage[toc,page]{appendix}

\usepackage{pdflscape}
\usepackage{afterpage}
\usepackage{mathtools}

\DeclareFontFamily{U}{mathc}{}
\DeclareFontShape{U}{mathc}{m}{it}%
{<->s*[1.03] mathc10}{}

\DeclareMathAlphabet{\mathscr}{U}{mathc}{m}{it}

\usepackage[flushleft]{threeparttable}
\usepackage{booktabs,dcolumn,listings}
\usepackage{multirow}

\newcolumntype{d}[1]{D{.}{.}{#1}}

\usepackage[unicode=true,
 bookmarks=true,bookmarksnumbered=false,bookmarksopen=false,
 breaklinks=true,pdfborder={0 0 0},backref=false,colorlinks=true]
 {hyperref}

\newif\ifOutVer\OutVertrue
\OutVerfalse
\newif\ifFullVer\FullVerfalse
\newif\ifIncludeEmpirical\IncludeEmpiricalfalse
\newif\ifIncludeAppendixProof\IncludeAppendixProoffalse

\ifOutVer
\FullVertrue
\IncludeEmpiricalfalse
\newcommand*\mTODO[1]{}

\newcommand{\mQQC}[1]{}
\else
\newcommand*\mTODO[1]{{\scriptsize\color{blue}{\fbox{\textbf{TODO: {#1}}}}}}

\newcommand*{\mQQC}[1]{{\fbox{\scriptsize\color{magenta}\textbf{#1}}}}
\fi

\ifFullVer
\usepackage{tikz}
\usepackage{pgfplots}
\pgfplotsset{compat=newest}
\pgfplotsset{plot coordinates/math parser=false}
\usetikzlibrary{decorations.markings}
\newlength\figureheight
\newlength\figurewidth
\usepackage{chngcntr}
\usepackage{apptools}
\AtAppendix{\counterwithin{Lem}{section}}

\usepackage[figure]{hypcap}
\fi 

\makeatletter

\newtheorem{Lem}{Lemma}
\newtheorem{AppLem}{Lemma}
\numberwithin{AppLem}{section}
\newtheorem{AppThe}{Theorem}
\numberwithin{AppThe}{section}
\newtheorem{The}{Theorem}
\newtheorem{Cor}{Corollary}
\newtheorem{AppCor}{Corollary}
\numberwithin{AppCor}{section}
\theoremstyle{definition}

\newtheorem{Ass}{Assumption}
\newtheorem{Rem}{Remark}
\numberwithin{Rem}{section}
\newtheorem{Ex}{Example}
\newtheorem*{Ex*}{Example}

\numberwithin{equation}{section}
  \theoremstyle{plain}
  \newtheorem*{prop*}{\protect\propositionname}

\hypersetup{
    linkcolor=red,          
    citecolor=blue,        
    filecolor=magenta,      
    urlcolor=blue           
}

\newcommand{\pr}{\mathbb{P}}
\newcommand{\ind}{\mathbbm{1}}
\newcommand{\e}{\mathbb{E}}

\newcommand{\prob}{\overset{p}{\rightarrow}}

\newcommand{\sumin}{\sum_{i=1}^n}

\newcommand{\lsup}{\limsup_{n \rightarrow \infty}}

\renewcommand{\qed}{\hfill Q.E.D.}

\DeclareMathOperator*{\argmin}{\mathrm{arg}\!\min\limits}

\newcommand*\norm[1]{\left\Vert #1\right\Vert}
\newcommand{\abs}[1]{\left\vert #1\right\vert}
\newcommand{\pro}[1]{\pr\left(#1\right)}
\newcommand{\ex}[1]{\e\left[#1\right]}

\newcommand{\supp}[1]{\text{supp}\left(#1\right)}

\newcommand{\dxi}{\frac{1}{2 d_\xi}}
\newcommand{\hatN}{\hat{\mathcal N}}
\newcommand{\hatM}{\hat{\mathcal M}}

\newcommand{\msd}{\mathscr{d}}
\newcommand{\msh}{\mathscr{h}}
\newcommand{\hatC}{\hat{\mathcal C}}

\newcommand*{\QEDA}{\hfill\ensuremath{\blacksquare}}

\usepackage[capitalise]{cleveref}

\emergencystretch=\maxdimen
\makeatother

\providecommand{\propositionname}{Proposition}

\providecommand{\propositionname}{Proposition}

\title{{Identification and Estimation of Network Models\\ with Nonparametric Unobserved Heterogeneity}}
\author{Andrei \textsc{Zeleneev}\thanks{\textsf{a.zeleneev@ucl.ac.uk.} University College London, Department of Economics, and CeMMAP. I am grateful to my advisors Bo Honor{\'e}, Ulrich M{\"u}ller and especially Kirill Evdokimov for their continuous guidance and support. I also thank Michal Koles{\'a}r, Mikkel Plagborg-Møller, Christopher Sims, Mark Watson, and the participants of numerous seminars and conferences for valuable comments and suggestions.}}
\date{\today \\ First version: November, 2019\thanks{The orignal JMP version is available \href{https://www.homepages.ucl.ac.uk/~uctpzel/azeleneev_jmp.pdf}{here}.}} 

\begin{document}

\maketitle


\begin{abstract}
Homophily based on observables is widespread in networks. Therefore, homophily based on unobservables (fixed effects) is also likely to be an important determinant of the interaction outcomes. Failing to properly account for latent homophily (and other complex forms of unobserved heterogeneity) can result in inconsistent estimators and misleading policy implications. To address this concern, we consider a network model with nonparametric unobserved heterogeneity, leaving the role of the fixed effects unspecified. We argue that the interaction outcomes can be used to identify agents with the same values of the fixed effects. The variation in the observed characteristics of such agents allows us to identify the effects of the covariates, while controlling for the fixed effects. Building on these ideas, we construct several estimators of the parameters of interest and characterize their large sample properties. Numerical experiments illustrate the usefulness of the suggested approaches and support the asymptotic theory.

\bigskip

\noindent \textbf{Keywords:} network data, homophily, fixed effects

\end{abstract}
\newpage

\section{Introduction}

Unobserved heterogeneity is pervasive in economics. The importance of accounting for unobserved heterogeneity is well recognized in microeconometrics, in general, as well as in the network context, in particular. For example, since \citet*{Abowd1999}, estimating a linear regression with additive (two-way) fixed effects has become a standard approach to analyzing interaction data. Originally employed to account for workers and firms fixed effects in the wage regression context, this technique has become a standard tool to control for two-sided unobserved heterogeneity and decompose it into agent specific effects.\footnote{For example, recent applications to employer-employee matched data feature \citet*{Card2013,Helpman2017,Song2019} among others. The numerous applications of this approach also include the analysis of students-teachers \citep*{Hanushek2003,Rivkin2005,Rothstein2010}, patients-hospitals \citep*{Finkelstein2016}, firms-banks \citep*{Amiti2018}, and residents-counties matched data \citep*{chetty2018impacts}.} Since the seminal work of \citet*{Anderson2003}, the importance of controlling for exporters and importers fixed effects has also been well acknowledged in the context of the international trade network, including nonlinear settings of \citet*{SantosSilva2006} and \citet*{Helpman2008}. \citet*{Graham2017} stresses the importance of accounting for agents' degree heterogeneity (captured by the additive fixed effects) in network formation models.

While the additive fixed effects framework is commonly employed to control for unobservables in networks, it is not flexible enough to capture more complicated forms of unobserved heterogeneity, which are likely to appear in many settings. This concern can be vividly illustrated in the context of estimation of homophily effects, one of the main focuses of the empirical network analysis.\footnote{The term homophily typically refers to the tendency of individuals to assortatively match based on their characteristics. For example, individuals tend to form social connections based on gender, race, age, education level, and other socioeconomic characteristics. Similarly, countries that share a border, have the same legal system, language or currency, are more likely to have higher trade volumes.} Since homophily (assortative matching) based on observables is widespread in networks (e.g., \citealp*{McPherson2001}), homophily based on unobservables (fixed effects) is also likely to be an important determinant of the interaction outcomes. Since observed and unobserved characteristics (i.e., covariates and fixed effects) are typically correlated, the presence of latent homophily significantly complicates identification of the homophily effects associated with the observables (e.g., \citealp*{Shalizi2011}). Failing to properly account for homophily based on unobservables (and other complex forms of unobserved heterogeneity, in general) is likely to result in inconsistent estimators and misleading policy implications.

To address the concern discussed above, we consider a dyadic network model with a flexible (nonparametric) form of unobserved heterogeneity, where the outcome of the interaction between agents $i$ and $j$ is given by
\begin{align}
  \label{eq: intro model}
  Y_{ij} = F(W_{ij}' \beta_0 + g(\xi_i,\xi_j)) + \varepsilon_{ij}.
\end{align}
Here, $W_{ij}$ is a $p \times 1$ vector of pair-specific observed covariates, $\beta_0 \in \mathbb R^p$ is the parameter of interest, $\xi_i$ and $\xi_j$ are unobserved fixed effects, and $\varepsilon_{ij}$ is an idiosyncratic error. The fixed effects are allowed to interact via the coupling function $g(\cdot,\cdot)$, which is treated as $\emph{unknown}$. Importantly, we do not require $g(\cdot,\cdot)$ to have any particular structure and do not specify the dimension of $\xi$. Finally, $F(\cdot)$ is a known (up to location and scale normalizations) invertible link function. The presence of $F(\cdot)$ ensures that \eqref{eq: intro model} is flexible enough to cover a broad range of the previously studied dyadic network models with unobserved heterogeneity, including important nonlinear specifications such as network formation or Poisson regression models.\footnote{For example, with $F(\cdot)$ equal to the logistic CDF and $g(\xi_i,\xi_j) = \xi_i + \xi_j$, \eqref{eq: intro model} corresponds to the network formation model of \citet*{Graham2017}.}

Being agnostic about the dimensions of the fixed effects and the nature of their interactions, \eqref{eq: intro model} allows for a wide range of forms of unobserved heterogeneity, including homophily based on unobservables.

\begin{Ex}[Nonparametric homophily based on unobservables]
  \label{ex: intro}
	Let $\xi = (\alpha, \nu)' \in \mathbb R^2$ and
	\begin{align*}
		g(\xi_i,\xi_j) = \alpha_i + \alpha_j - \psi(\nu_i,\nu_j),
	\end{align*}
	where $\psi(\cdot,\cdot)$ is some function satisfying $\psi(\nu_i,\nu_j) = 0$ whenever $\nu_i = \nu_j$ and increasing in $\abs{\nu_i - \nu_j}$ (e.g., $\psi(\nu_i,\nu_j) = c \abs{\nu_i - \nu_j}^{\zeta}$ for some $c>0$ and $\zeta \geqslant 1$). Here $\alpha$ represents the standard additive fixed effect, and $\psi(\cdot,\cdot)$ captures latent homophily based on $\nu$: agents with similar values of $\nu$ tend to interact with higher intensity compared to agents distant in terms of $\nu$. Again, since the dimension of $\xi$ is not specified, \eqref{eq: intro model} can also incorporate homophily based on several unobserved characteristics (multivariate $\nu$) in a similar manner. \QEDA
\end{Ex}

We study identification and estimation of \eqref{eq: intro model} under the assumption that we observe a single network of a growing size.\footnote{The large single network asymptotics is standard for the literature focusing on identification and estimation of network models with unobserved heterogeneity. See, for example, \citet*{Graham2017,Dzemski2018,Candelaria2016,Jochmans2018,Toth2017,gao2020nonparametric,gao2023logical}.} First, we focus on a simpler version of \eqref{eq: intro model}
\begin{align}
	\label{eq: intro linear}
	Y_{ij} = W_{ij}' \beta_0 + g(\xi_i,\xi_j) + \varepsilon_{ij}.
\end{align}
We argue that the outcomes of the interactions can be used to identify agents with the same values of the unobserved fixed effects. Specifically, we introduce a certain pseudo-distance~$d_{ij}$ measuring similarity between agents $i$ and $j$ in terms of their latent characteristics. We argue that (i) $d_{ij} = 0$ if and only if $\xi_i = \xi_j$, and (ii) $d_{ij}$ is identified and can be estimated from the data. Consequently, agents with the same values of $\xi$ can be identified based on the pseudo-distance $d_{ij}$. Then, the variation in the observed characteristics of such agents allows us to identify the parameter of interest $\beta_0$ while controlling for the impact of the fixed effects. Importantly, this result is not driven by the linearity or particular functional form of~\eqref{eq: intro linear}: we show that a similar identification argument also applies when $W_{ij}'\beta_0$ is replaced by an unknown (nonparametric) function of observables.

Demonstrating that the introduced pseudo-distance $d_{ij}$ is identified plays a central role in the argument outlined above. First, we show that, if the idiosyncratic errors $\varepsilon_{ij}$ are homoskedastic, $d_{ij}$ can be readily estimated from the data using simple pairwise-difference regressions. Second, we extend this argument to models with general heteroskedasticity by leveraging and advancing recent developments in the matrix estimation/completion literature. Specifically, we demonstrate that the error free outcomes $Y_{ij}^* \coloneqq W_{ij}' \beta_0 + g(\xi_i,\xi_j)$ are identified and can be uniformly (across all pairs of agents) consistently estimated. This is a powerful result allowing us to treat $Y_{ij}^*$ as effectively observed and thus greatly simplifying the analysis. In particular, working with $Y_{ij}^*$ instead of $Y_{ij}$ effectively reduces~\eqref{eq: intro linear} to a model without the error term $\varepsilon_{ij}$, which can be interpreted as an extreme form of homoskedasticity. This, in turn, allows us to establish identification of $\beta_0$ by applying the same argument as in the homoskedastic model.

Building on these ideas, we construct an estimator of $\beta_0$ and characterize its rate of convergence. Following the identification argument, we first estimate the pseudo-distances $\hat d_{ij}$, which are used to find agents with similar latent characteristics. Then, we estimate $\beta_0$ by combining pairwise-difference regressions of $Y_{ik} - Y_{jk}$ on $W_{ik} - W_{jk}$ for all pairs of agents $i$ and $j$ sufficiently similar in terms of $\hat d_{ij}$. Consistency of this estimator crucially relies on the matched agents being different in terms of their observables, resulting in sufficient residual variation in $W_{ik} - W_{jk}$. We characterize the asymptotic behavior of the estimator's bias due to imperfect matching, and provide conditions under which it is consistent. 

In the general heteroskedastic case, construction of $\hat d_{ij}$ also involves preliminary estimation of the error free outcomes $Y_{ij}^*$. To provide estimators of $Y_{ij}^*$ and $d_{ij}$ valid under heteroskedastic errors, we build on and extend the approach of \citet*{Zhang2017} originally employed in the context of nonparametric graphon estimation. Moreover, to formally establish identification of the error free outcomes, we also propose a modified version of \citet{Zhang2017}'s estimator and demonstrate its consistency in the \emph{max} (matrix) norm, meaning that, in a large network, our estimator recovers $Y_{ij}^*$'s for all pairs of agents with a high precision at once. To the best of our knowledge, this result is also new to the statistics literature on graphon estimation, which has previously focused on constructing estimators $\hat Y_{ij}^*$ and deriving their rates of convergence in terms of the mean square error/Frobenius norm \citep{Chatterjee2015,Gao2015,Klopp2017,Zhang2017,li2019nearest}.

Finally, we want to stress that identification of the error free outcomes is a powerful result, which applies to a general class of dyadic network models beyond \eqref{eq: intro model} and can be used as a foundation for establishing new identification results. To the best of our knowledge, this result has not been previously recognized and leveraged in the econometrics literature. In particular, building on it, we demonstrate how the proposed identification and estimation strategies can be naturally extended to cover model \eqref{eq: intro model}, as well as its nonparametric version. We also argue that the pair-specific fixed effects $g_{ij} = g(\xi_i,\xi_j)$ are identified for all pairs of agents $i$ and $j$ and can be (uniformly) consistently estimated. Identification of $g_{ij}$ is an important result in itself since in many applications the fixed effects are the central objects of interest. Moreover, this result is also of special significance when $F$ is nonlinear because identification of $g_{ij}$ allows us to identify important policy relevant quantities such as pair-specific and average partial effects.

\bigskip

\noindent This paper contributes to the literature on econometrics of networks and, more generally, two-way models. The distinctive feature of our model is allowing for flexible nonparametric unobserved heterogeneity: the fixed effects can interact via the unknown coupling function $g(\cdot,\cdot)$. Importantly, we do not require $g(\cdot,\cdot)$ to have any particular structure (other than satisfying a weak smoothness requirement) and do not specify the dimensionality of the fixed effects. This is in contrast to most of the existing approaches, which either explicitly specify the form of $g(\cdot,\cdot)$ or impose additional restrictive assumptions on its shape and smoothness.

Among explicitly specified forms of $g(\cdot,\cdot)$, the additive fixed effects structure ${g(\xi_i,\xi_j) = \xi_i + \xi_j}$ is by far the most popular way of incorporating unobserved heterogeneity in dyadic network models, e.g., see \citet{Graham2017}, \citet{Charbonneau2017}, \citet{Jochmans2018}, \citet{Dzemski2018}, \citet*{Yan2019}, \citet{Candelaria2016}, \citet{gao2020nonparametric} and \citet{Toth2017} among others. While this specification provides a practical way of controlling for degree heterogeneity, it does not account for more complicated forms of unobserved heterogeneity including latent homophily. Recent semiparametric extensions of \citet{Graham2017} and related frameworks feature \citet{gao2023logical} relaxing separability between $W_{ij}' \beta_0$, $\xi_i$, and $\xi_j$. However, \citet{gao2023logical} still require $\xi$ to be scalar and assume that the linking probability is increasing in $\xi_i$ and $\xi_j$, thus ruling out homophily based on unobservables.

The linear factor specification $g(\xi_i,\xi_j) = \xi_i' \xi_j$ is widely used in both network and panel models as a generalization of the additive fixed effect framework.\footnote{Recent studies considering network models with unobserved effects having a linear factor structure include, among others, \citet{chen2021nonlinear,ma2022detecting,zeleneev2025tractable}.} While, as argued in \citet*{chen2021nonlinear}, this specification allows for certain forms of latent homophily, approximating a general function $g(\cdot, \cdot)$ by the linear factor model requires a growing number of factors (e.g., \citealp{fernandez2021low}). Such low-rank approximations, for example, are utilized by \citet{freeman2023linear} and \citet{beyhum2024inference} who consider a panel variation of \eqref{eq: intro linear}. To control the accuracy of the proposed low-rank approximations, both of these papers require $g(\cdot,\cdot)$ to have sufficiently many continuous derivatives and, more importantly, to ensure consistency of their estimator of $\beta_0$, they require regressors $W$ to be ``high-rank''. Both of these requirements are restrictive in the network setting;\footnote{Non-differentiable functions $g(\cdot,\cdot)$, such as the one provided in Example~\ref{ex: intro} for $\zeta = 1$, are a common feature of popular latent homophily models (e.g., \citealp{Hoff2002,handcock2007model}).} see Section~\ref{sssec: ID comparison} for a detailed comparison of the frameworks and approaches to identification. Allowing for both low-rank regressors $W$ and possibly non-differentiable $g(\cdot,\cdot)$ is an important and unique combination of features of our setting differentiating this paper from the rest of the literature.\footnote{Low-rank regressors are typically ruled out to ensure identification of $\beta_0$ even in linear factor models; see, for example, \citet{bai2009panel,moon2015linear,armstrong2022robust}.}

Employing clustering methods to partition units into groups with similar unobserved characteristics is another method commonly employed to control for interactive (time-varying) unobserved heterogeneity. This approach was proposed and applied to likelihood panel models in the seminal work by \citet{bonhomme2022discretizing} and then extended to semiparametric panel regressions by \citet{beyhum2024inference}. While the specific clustering methods (involving the observed covariates as well) employed in these papers proved to be instrumental in panel settings, they would not allow one to identify and estimate $\beta_0$ in network models when the regressors take the typical form of $W_{ij} = w(X_i,X_j)$, where $X_i$ and $X_j$ are observed characteristics of agents $i$ and $j$. In this case, the previously employed methods would cluster units into groups with similar values of both $\xi$ and $X$ preventing one from disentangling the effects of observables and unobservables and thus from identifying $\beta_0$ as well.\footnote{Moreover, the clustering methods employed in \citet{bonhomme2022discretizing} and \citet{beyhum2024inference} are based on individual-specific moments. Such clustering methods might fail to meaningfully group units in network settings, in which informative individual-specific moments might not exist at all; see Section~\ref{sssec: ID comparison} and specifically Footnote~\ref{fn: clustering failure} for details.} Importantly, unlike the previously employed clustering methods, our approach can be used to find units with similar values of $\xi$ and yet different values of $X$ allowing us to identify $\beta_0$. Thus, we also complement the methodology of \citet{bonhomme2022discretizing} by providing a new grouping approach which can be instrumental in both network and panel models with low-rank regressors.

When $\xi$ is discrete, the considered model \eqref{eq: intro model} belongs to the class of stochastic block models (SBM) with covariates (e.g., \citealp{mele2023spectral,ma2022detecting}). While our framework and estimation approach are general enough to cover this special setup, in this paper, we focus on the case when unobserved heterogeneity is continuous.\footnote{In fact, when $\xi$ is discrete, agents can be correctly classified into groups with the same values of $\xi$ based on the already introduced pseudo-distance $\hat d_{ij}$, which greatly simplifies the asymptotic analysis; we will discuss this in more detail shortly after we introduce Assumption~\ref{ass: xi dist} in Section~\ref{sec: LST}.} In a recent study, \citet{kitamura2024estimating} study a nonseparable nonparametric variation of SBM with covariates. Since their approach crucially relies on the discreteness of $\xi$ whereas ours exploits separability between the observables and unobservables as in \eqref{eq: intro model}, the frameworks considered in this paper and by \citet{kitamura2024estimating} are non-nested and complementary.

Another strand of the literature emphasizes the importance of using network data to control for endogeneity in peer effects and other related models (e.g., \citealp*{goldsmith2013social,johnsson2021estimation,auerbach2022identification,starck2025improving}). In these papers, the agents' latent characteristics $\xi$ affect both the individual outcomes of interest as well as the network formation process. To tackle this problem, \citet{auerbach2022identification} and \citet{johnsson2021estimation} propose using certain network statistics to identify agents with the same values of $\xi$ allowing the authors to control for the unobservables in the other (cross-sectional) regression of interest (e.g., in a peer effects model). The important difference between their and our work is that we use the \emph{same} network data both to control for unobserved heterogeneity and to identify the parameters of interest, allowing us to disentangle the effects of observables and unobservables within a single network model.\footnote{For example, similarly to the clustering methods of \citet{bonhomme2022discretizing} and \citet{beyhum2024inference}, a direct application of \citet{auerbach2022identification}'s approach to the observed network only allows one to identify agents with the same values of both $\xi$ and $X$ once again precluding identification of $\beta_0$.}

Finally, we emphasize that the considered model \eqref{eq: intro model} does not incorporate interaction externalities. Specifically, we assume that conditional on the agents' observed and unobserved characteristics, the interaction outcomes are independent. This assumption is plausible when the interactions are primarily bilateral. For excellent recent reviews of econometrics of networks, with and without strategic interactions, we refer the reader to \citet{graham2020econometric,GRAHAM2020111,de2020econometric}.

\bigskip

{The rest of the paper is organized as follows. In Section~\ref{sec: ID}, we formally introduce the framework and provide (heuristic) identification arguments for the semiparametric regression model~\eqref{eq: intro linear}. Section \ref{sec: estimation} turns these ideas into estimators of the parameters of interest. In Section \ref{sec: LST}, we establish consistency of the proposed estimators and derive their rates of convergence. In Section \ref{sec: ext}, we generalize the proposed identification argument to cover more general settings including the nonlinear model \eqref{eq: intro model} as well as its nonparametric analogue. Section \ref{sec: numerical} provides numerical and empirical illustrations.  A supplementary appendix contains all proofs, and additional discussions and illustrations. 

\section{Identification of the Semiparametric Model}
\label{sec: ID}
In this section, we introduce the framework and provide a conceptual discussion of identification of the semiparametric regression model. This discussion is supposed to illustrate the anatomy of the model and to highlight the main insights of our identification strategy, and is deliberately not formalized here. In Section~\ref{sec: estimation}, we will turn these ideas into practical estimators. Identification then is formally demonstrated in Section~\ref{sec: LST}, where establish consistency of our estimators and provide their rates of convergence.

\subsection{The model}
We consider a network consisting of $n$ agents. Each agent $i$ is endowed with characteristics $Z_i = (X_i,\xi_i)$, where $X_i \in \mathcal X$ is observed by the econometrician, while $\xi_i \in \mathcal E$ is not. We consider the following semiparametric regression model, where the (scalar) outcome of the interaction between agents $i$ and $j$ is given by
\begin{align}
  \label{eq: Y_ij def}
  Y_{ij} = w(X_i,X_j)'\beta_0 + g(\xi_i,\xi_j) + \varepsilon_{ij}, \quad i \neq j.
\end{align}
Here, $w: \mathcal X \times \mathcal X \rightarrow \mathbb R^p$ is a known function, which transforms the observed characteristics of agents $i$ and $j$ into a pair-specific vector of covariates $W_{ij} \coloneqq w(X_i,X_j)$, $\beta_{0} \in \mathbb R^p$ is the parameter of interest, and $\varepsilon_{ij}$ is an unobserved idiosyncratic error. Note that unlike $w(\cdot,\cdot)$, the coupling function $g: \mathcal E \times \mathcal E \rightarrow \mathbb R$ is \emph{unknown}, and the dimension of the fixed effect $\xi_i \in \mathcal E$ is \emph{not specified}. For simplicity of exposition, we will suppose that $\xi_i \in \mathbb R^{d_\xi}$ even though, in principle, the same insights apply when $\mathcal E$ is a general metric space.

For concreteness, we will also focus on an undirected model with $Y_{i j} = Y_{j i}$, so $w(\cdot,\cdot)$ and $g(\cdot,\cdot)$ are symmetric functions, and $\varepsilon_{ij} = \varepsilon_{ji}$. The methodology presented in this paper straightforwardly extends to directed networks and general two-way settings including panel models; see Section \ref{ssec: two-way} for a more detailed discussion.

The following assumption formalizes the sampling process.

\begin{Ass}
  \label{ass: DGP}
  \noindent
  \begin{enumerate}[(i)]
    \item \label{item: iid} $\{Z_i\}_{i=1}^n$ are i.i.d.;
    \item \label{item: errors} conditional on $\{Z_i\}_{i=1}^n$, the idiosyncratic errors $\{\varepsilon_{ij}\}_{i<j}$ are independent draws from $P_{\varepsilon_{ij}|Z_i,Z_j}$ with $\ex{\varepsilon_{ij}|Z_i,Z_j} = 0$, and $\varepsilon_{ij} = \varepsilon_{ji}$;
    \item \label{item: observables} the econometrician observes $\{X_i\}_{i=1}^n$ and $\{Y_{ij}\}_{i \neq j}$ determined by \eqref{eq: Y_ij def}.
  \end{enumerate}  
\end{Ass}

Assumption \ref{ass: DGP} is standard for the networks literature.\footnote{See, e.g., \citet{Graham2017,gao2020nonparametric,gao2023logical}.} The sampling process could be thought of as follows. First, the characteristics of agents $\{Z_i\}_{i=1}^n$ are independently drawn from some population distribution. Then, conditional on the drawn characteristics, the idiosyncratic errors $\{\varepsilon_{ij}\}_{i < j}$ are independently drawn from the conditional distributions, which potentially depend on the characteristics of the corresponding agents $Z_i$ and $Z_j$.

\begin{Rem}
  For simplicity of exposition, we assume that we observe $Y_{ij}$ for all pairs of agents $i$ and $j$. In Section \ref{ssec: missing}, we will discuss how to incorporate missing outcomes and sparse networks into the considered framework.
\end{Rem}

\subsection{Identification of $\beta_0$: main insights}
\label{ssec: beta ID}

We study identification and estimation of $\beta_0$ under the large network asymptotics, which takes $n \rightarrow \infty$. The identification argument is based on the following observation. Suppose that we can identify two agents $i$ and $j$ with the same unobserved characteristics, i.e., with $\xi_i = \xi_j$. Then, for any third agent $k$, the difference between $Y_{ik}$ and $Y_{jk}$ is given by
\begin{align}
  \label{eq: delta Y delta W reg def}
  Y_{ik} - Y_{jk} = \underbrace{(w(X_i,X_k) - w(X_j,X_k))'}_{(W_{ik} - W_{jk})'} \beta_0 + \varepsilon_{ik} - \varepsilon_{jk}.
\end{align}
The conditional mean independence of the regression errors now guarantees that $\beta_0$ can be identified from the regression of $Y_{ik} - Y_{jk}$ on $W_{ik} - W_{jk}$, provided that we have ``enough'' variation in $W_{ik} - W_{jk}$. Formally, we have
\begin{align}
  \label{eq: beta_0 ID}
  \beta_0 = \ex{(W_{ik} - W_{jk})(W_{ik} - W_{jk})'|Z_i,Z_j}^{-1} \ex{(W_{ik} - W_{jk})(Y_{ik} - Y_{jk})|Z_i,Z_j},
\end{align}
provided that $\ex{(W_{ik} - W_{jk})(W_{ik} - W_{jk})'|Z_i,Z_j}$ is invertible. Since agents $i$ and $j$ are treated as fixed, the expectations are conditional on their characteristics $Z_i$ and $Z_j$. At the same time, $Z_k$, the characteristics of agent $k$, and the idiosyncratic errors $\varepsilon_{ik}$ and $\varepsilon_{jk}$ are treated as random and integrated over. Note that the invertibility requirement insists on $X_i$ and $X_j$, the observed characteristics of agents $i$ and $j$, to be ``sufficiently different''. Indeed, if not only $\xi_i = \xi_j$ but also $X_i = X_j$, this condition is clearly violated since $W_{ik} - W_{jk} = 0$ for any agent $k$: in this case, $\beta_{0}$ cannot be identified from the regression \eqref{eq: delta Y delta W reg def}.

Hence, the problem of identification of $\beta_0$ can be reduced to the problem of identification of agents $i$ and $j$ with the same values of the unobserved fixed effects ($\xi_i = \xi_j$) but with ``sufficiently different'' values of $X_i$ and $X_j$.

Let $Y_{ij}^*$ be the error free part of $Y_{ij}$, i.e.,
\begin{align*}
  Y_{ij}^* \coloneqq \ex{Y_{ij}|Z_i,Z_j} = w(X_i,X_j)' \beta_0 + g(\xi_i,\xi_j).
\end{align*}
Consider the following (squared) pseudo-distance between agents $i$ and $j$
\begin{align}
  \label{eq: d_ij star def}
  d_{ij}^2 &\coloneqq \min_{\beta \in \mathcal B} \ex{(Y_{ik}^* - Y_{jk}^* - (W_{ik} - W_{jk})' \beta)^2|Z_i,Z_j} \\
  &=\min_{\beta \in \mathcal B} \e \bigg[ (\underbrace{g(\xi_i,\xi_k) - g(\xi_j,\xi_k)}_{=0,\text{ when } \xi_i = \xi_j} - (W_{ik} - W_{jk})'(\beta - \beta_0))^2|Z_i,Z_j \bigg], \nonumber
\end{align}
where $\mathcal B \ni \beta_0$ is some parameter space. Here, the expectation is conditional on the characteristics of agents $i$ and $j$ and is taken over $Z_k$. Clearly, $d_{ij}^2 = 0$ when $\xi_i = \xi_j$: in this case, the minimum is achieved at $\beta = \beta_0$. Moreover, under a suitable (rank) condition (which we will formally discuss in Section \ref{ssec: beta rate}), $d_{ij}^2 = 0$ also necessarily implies that $\xi_i = \xi_j$. Consequently, if $d_{ij}^2$ were available, agents with the same values of $\xi$ could be identified based on this pseudo-distance.

However, the expectation \eqref{eq: d_ij star def} cannot be directly identified, since the error free outcomes $Y_{ij}^*$ are not observed. In the following Sections~\ref{ssec: homo id} and \ref{ssec: hetero ID}, we will argue that the pseudo-distances $d_{ij}^2$ (or their close analogues) are identified for all pairs of agents $i$ and $j$ and, hence, can be used to identify agents with the same values of $\xi$ (and different values of $X$).

\begin{Rem}
  Notice that none of the arguments provided in this section changes if $\xi_i = \xi_j$ is understood as equivalence of the associated functions $g(\xi_i, \cdot)$ and $g(\xi_j, \cdot)$ (in terms of the $L^2$ distance associated with the distribution $P_\xi$). Thus, for simplicity of exposition, we will implicitly assume that different values of $\xi$ are associated with different functions $g(\xi,\cdot)$. This assumption is not restrictive since we do not normalize the distribution of $\xi$, e.g., we do not impose $\xi \sim U [0,1]$. In particular, the absence of unobserved heterogeneity is allowed: in this case, we have $\xi = \xi_0$ for all agents for some fixed $\xi_0$.
\end{Rem}

\subsubsection{Comparison with the existing approaches to identification}
\label{sssec: ID comparison}

Before we proceed with identification of $d_{ij}^2$, we would like to compare our identification argument with the existing results for (interactive fixed effects) panel and network models and highlight some fundamental differences.

One of the most common ways to ensure identification of $\beta_0$ in panel regression models with interactive fixed effects is to assume that covariates $W_{it}$ have sufficient independent variation over $i$ and $t$. This assumption is also commonly referred to as the ``high-rank'' regressors condition. It plays a crucial role in establishing consistency and deriving asymptotic properties of various estimators of $\beta_0$ in models with unobserved heterogeneity following a linear factor form (e.g., \citealp{bai2009panel,moon2015linear,armstrong2022robust}) and in settings with more general nonparametric structures of unobserved heterogeneity similar to the one studied in this work \citep{bonhomme2022discretizing,freeman2023linear,beyhum2024inference}.

In the context of this paper, the ``high-rank'' regressors condition translates into assuming that $W_{ij} = w(X_i,X_j) + \eta_{ij}$, where $\eta_{ij}$ exhibits sufficient independent variation across dyads, e.g., $\eta_{ij}$'s are (conditionally) independent. Identification of $\beta_0$ then can be ensured using the ``exogenous'' (``high-rank'') variation in $\eta_{ij}$ orthogonal to the unobserved (low-dimensional or even low-rank) component $g(\xi_i,\xi_j)$. While this strategy is commonly employed in panel models with high-rank regressors, it cannot be applied in network models with $W_{ij}$ typically taking the (low-dimensional) form of $W_{ij} = w(X_i,X_j)$ \emph{without} the additional exogenous ``high-rank'' component $\eta_{ij}$.

The absence of the $\eta_{ij}$ component in the model substantially complicates both identification of $\beta_0$ as well as the asymptotic analysis of the subsequently constructed analogue estimators. To see this, notice that if $W_{ij} = w(X_i,X_j) + \eta_{ij}$, it would suffice to match agents with the same values of \emph{both} $X$ and $\xi$, i.e., with $X_i = X_j$ and $\xi_i = \xi_j$, because in this case $\beta_0$ can still be identified as in \eqref{eq: beta_0 ID} thanks to the remaining variation in $W_{ik} - W_{jk} = \eta_{ik} - \eta_{jk}$. This matching strategy, for example, is employed by \citet{beyhum2024inference} who generalized the clustering approach of \citet{bonhomme2022discretizing} to semiparametric panel regressions. However, when $W_{ij} = w(X_i,X_j)$ and the $\eta_{ij}$-component is absent, such matching approaches fail to identify $\beta_0$ due to the lack of the remaining variation in $W_{ik} - W_{jk} = 0$.

In this paper, we overcome this difficulty and establish identification of $\beta_{0}$ by demonstrating how to find agents with the same values of $\xi$ but \emph{different} values of $X$ allowing us to use the variation in $W_{ik} - W_{ij}$ \emph{without} relying on the ``high-rank'' exogenous variation in $\eta_{ij}$ like the panel literature does. The latter is the most important and fundamental difference between our work and the recent panel papers by \citet{freeman2023linear} and \citet{beyhum2024inference}. Likewise, our identification approach is new and structurally different from the ones previously employed in the literature, and while we focus on network models in this paper, the proposed methodology can also be instrumental in establishing new identification results in panels with low-rank regressors.

Finally, we want to stress that the proposed approach to finding units with the same values of $\xi$ and different values of $X$ is new and structurally different from the previously employed approaches used to match agents with similar latent characteristics in panel and network models. In particular, our framework features two types of individual specific-characteristics, observed $X_i$ and unobserved $\xi_i$, which are needed to be treated differently: we want to identify the ``causal'' effect of $X_i$ while keeping $\xi_i$ fixed. At the same time, the existing matching methods, including k-means clustering employed in \citet{bonhomme2022discretizing} and \citet{beyhum2024inference}, and the similarity based approach pioneered by \citet{Zhang2017} and \citet{auerbach2022identification}, either drop $X_i$ altogether or effectively match on $Z_i = (X_i,\xi_i)$, which, as explained above, does not allow one to identify $\beta_0$.\footnote{Also, notice that, even if one abstracts from having or explicitly controlling for observed $X_i$ in the studied network model, clustering based on individual-specific moments $h_i$ originally proposed by \citet{bonhomme2022discretizing} and subsequently employed in \citet{beyhum2024inference} might fail to match agents with similar values of $\xi$. Specifically, their approach requires availability of (consistently estimable) $h_i = \varphi(\xi_i)$ informative about $\xi_i$ but such individual-specific moments might not exist at all in the network setting. To see this, consider a simplified version of the studied model $Y_{ij} = g(\xi_i,\xi_j) + \varepsilon_{ij}$ without covariates, where (i) $\xi_i$ is uniformly distributed on some sphere in $\mathbb R^{d_\xi}$, (ii) $g(\xi_i,\xi_j) = \mathscr g (\norm{\xi_i - \xi_j})$ is effectively determined by $\norm{\xi_i - \xi_j}$ exclusively and captures homophily based on $\xi$, (iii) $\varepsilon_{ij}$'s are iid and independent from all the $\xi$'s. It is clear that, given the spherical symmetry of this model, any individual-specific moments $h_i$ are completely uninformative about $\xi_i$, rendering the described clustering method inappropriate.\label{fn: clustering failure}}

\subsection{Identification under conditional homoskedasticity}
\label{ssec: homo id}
In this section, we consider the case when the regression errors are homoskedastic, i.e., when 
\begin{align}
  \label{eq: homosked eps}
  \ex{\varepsilon_{ij}^2|Z_i,Z_j} = \sigma^2 \quad {\text{a.s.}}
\end{align}

For a pair of agents $i$ and $j$, consider the following conditional expectation
\begin{align}
  \label{eq: q_ij def}
  q_{ij}^2 \coloneqq \min_{\beta \in \mathcal B} \ex{(Y_{ik} - Y_{jk} - (W_{ik} - W_{jk})' \beta)^2|Z_i,Z_j}.
\end{align}
Essentially, $q_{ij}^2$ is a feasible analogue of $d_{ij}^2$ with $Y_{ik}$ and $Y_{jk}$ replacing $Y_{ik}^*$ and $Y_{jk}^*$. Importantly, unlike $d_{ij}^2$, $q_{ij}^2$ is immediately identified and can be estimated by
\begin{align}
  \label{eq: hat q_ij def}
  \hat q_{ij}^2 \coloneqq \min_{\beta \in \mathcal B} \frac{1}{n-2} \sum_{k \neq i,j} (Y_{ik} - Y_{jk} - (W_{ik} - W_{jk})' \beta)^2.
\end{align}
Notice that since $Y_{ik} = Y_{ik}^* + \varepsilon_{ik}$ and $Y_{jk} = Y_{jk}^* + \varepsilon_{jk}$,
\begin{align}  
  q_{ij}^2 &= \min_{\beta \in \mathcal B} \ex{(Y_{ik}^* - Y_{jk}^* - (W_{ik} - W_{jk})' \beta + \varepsilon_{ik} - \varepsilon_{jk})^2|Z_i,Z_j} \nonumber \\
  &= \min_{\beta \in \mathcal B} \ex{(Y_{ik}^* - Y_{jk}^* - (W_{ik} - W_{jk})' \beta)^2|Z_i,Z_j} + \ex{\varepsilon_{ik}^2 + \varepsilon_{jk}^2|Z_i,Z_j} \nonumber \\
  &= d_{ij}^2 + \ex{\varepsilon_{ik}^2|Z_i} + \ex{\varepsilon_{jk}^2|Z_j}, \label{eq: d decomposition}
\end{align}
where the second and the third equalities follow from Assumption \ref{ass: DGP}\eqref{item: errors}. Hence, when the errors are homoskedastic and \eqref{eq: homosked eps} holds, we have
\begin{align}
  \label{eq: homo d}
  q_{ij}^2 = d_{ij}^2 + 2 \sigma^2.
\end{align}
Thus, for every pair of agents $i$ and $j$, $q_{ij}^2$ differs from $d_{ij}^2$ by a constant term $2 \sigma^2$.

Imagine that for a fixed agent $i$, we are looking for a match $j$ with the same value of $\xi$. As discussed in Section \ref{ssec: beta ID}, such an agent can be identified by minimizing $d_{ij}^2$ over potential matches. Then, \eqref{eq: homo d} ensures that, in the homoskedastic setting, such an agent can also be identified by minimizing $q_{ij}^2$. Hence, agents with the same values of $\xi$ (and different values of $X$) can be identified based on $q_{ij}^2$, which can be directly estimated.

\begin{Rem}
The identification argument provided for the homoskedastic model can be naturally extended to allow for $\e[\varepsilon_{ij}^2|Z_i,Z_j] = \e[\varepsilon_{ij}^2|X_i,X_j]$. Indeed, if the skedastic function does not depend on the unobserved characteristics, conditioning on some fixed value $X_j = x$ makes the last term $\e[\varepsilon_{jk}^2|X_j=x,\xi_j] = \e[\varepsilon_{jk}^2|X_j = x]$ in \eqref{eq: d decomposition} constant again. In this case, like in the homoskedastic model, $q_{ij}^2$ is minimized whenever $d_{ij}^2$ is, which allows us to identify agents with the same values of $\xi$. 
\end{Rem}

\subsection{Identification under general heteroskedasticity}
\label{ssec: hetero ID}

Under general heteroskedasticity of the errors, the identification strategy based on $q_{ij}^2$ no longer guarantees finding agents with the same values of $\xi$. Consider the same process of finding an appropriate match $j$ for a fixed agent $i$. As shown in \eqref{eq: d decomposition}, $q_{ij}^2$ can be represented as a sum of three components. The first term $d_{ij}^2$, which we will call the signal, identifies agents with the same values of $\xi$. The second term $\ex{\varepsilon_{ik}^2|X_i,\xi_i}$ does not depend on $j$. However, under general heteroskedasticity, the third term $\e[\varepsilon_{jk}^2|X_j,\xi_j]$ depends on $\xi_j$ and distorts the signal. Hence, the identification argument provided in Section \ref{ssec: homo id} is no longer valid in this case.

In this section, we will address this issue and extend the arguments of Sections \ref{ssec: beta ID} and \ref{ssec: homo id} to a model with general heteroskedasticity. Specifically, we will (heuristically) argue that the error free outcomes $Y_{ij}^*$ are identified for all pairs of agents $i$ and $j$. As a result, the pseudo-distance $d_{ij}^2$ introduced in \eqref{eq: d_ij star def} is also identified and can be directly employed to find agents with the same values of $\xi$ (and different values of $X$).

\subsubsection{Identification of $Y_{ij}^*$}
\label{ssec: Y star ID}
With $Y_{ij}^*$ and $Y_{ij} = Y_{ij}^* + \varepsilon_{ij}$ collected as entries of $n \times n$ matrices $Y^*$ and $Y$ (with diagonal elements of $Y$ missing), the problem of identification and estimation of $Y^*$ based on its noisy proxy $Y$ can be interpreted as a particular variation of the classic matrix estimation/completion problem. Specifically, it turns out that the considered network model \eqref{eq: Y_ij def} is an example of the latent space model (see, for example, \citet*{Chatterjee2015} and the references therein). In the standard formulation of the latent space model, the entries of (symmetric) matrix $Y$ have the form of
\begin{align}
  \label{eq: LVM}
  Y_{ij} = f(Z_i,Z_j) + \varepsilon_{ij},
\end{align}
where $f$ is some (unknown) symmetric function, $Z_1, \ldots, Z_n$ are some latent variables associated with the corresponding rows and columns of $Y$, and the errors $\{\varepsilon_{ij}\}_{i < j}$ are assumed to be (conditionally) independent.\footnote{As noted, for example, in \citet*{bickel2009nonparametric} and \citet*{bickel2011method}, the latent space model is natural in exchangeable settings due to the Aldous-Hoover theorem \citep*{aldous1981representations,hoover1979relations}. For a detailed discussion of this result and other representation theorems for exchangeable random arrays, see, for example, \citet*{kallenberg2005probabilistic} and \citet*{orbanz2015bayesian}.}\footnote{The problem of estimation of $Y_{ij}^* = f (Z_i,Z_j)$ from $Y$ is also known as nonparametric regression without knowing the design \citep*{Gao2015} or blind regression \citep*{li2019nearest}. If $Y_{ij}$ is binary, $Y$ can be interpreted as the adjacency matrix of a random graph. In this case, the function $f(\cdot,\cdot)$ is called a graphon, and this problem is commonly referred to as graphon estimation (see, for example, \citealp*{Gao2015,Klopp2017,Zhang2017}).} Notice, that the studied model fits this general formulation, even though, in our setting, we observe $X_i$ (a subvector of $Z_i$).

It turns out that the particular structure of the latent space model \eqref{eq: LVM} allows one to construct a consistent estimator of $Y^*$ based on a single measurement $Y$. For example, \citet*{Chatterjee2015,Gao2015,Klopp2017,Zhang2017} construct such estimators and establish their consistency in terms of the mean square error (MSE).

In particular, we build on the estimation strategy of \citet*{Zhang2017} to argue that the error free outcomes $Y_{ij}^*$ are identified for all pairs of agents $i$ and $j$. The proposed identification strategy consists of two main steps. First, we argue that we can identify agents with the same values of (both) $X$ and $\xi$. Then, building on this result, we demonstrate how $Y_{ij}^*$ can be constructively identified.

\bigskip

\noindent
\textbf{Step 1: Identification of agents with the same values of $X$ and $\xi$}\\
Consider a subpopulation of agents with a fixed value of $X = x$ exclusively. Let $g_x (\xi_i,\xi_j) \coloneqq w(x,x)' \beta_0 + g(\xi_i,\xi_j)$ and $P_{\xi|X}(\xi|x)$ denote the conditional distribution of $\xi$ given $X = x$. In this subpopulation, consider the following (squared) pseudo-distance between agents $i$ and $j$
\begin{align*}
  d_{\infty}^2 (i,j;x) & \coloneqq \sup_{\xi_k \in \supp{\xi|X=x}} \abs{\ex{(Y_{i \ell} - Y_{j \ell}) Y_{k \ell} | \xi_i, \xi_{j}, \xi_k, X = x}} \\
          &= \sup_{\xi_k \in \supp{\xi|X=x}} \abs{\int (g_x(\xi_i,\xi_\ell) - g_x(\xi_{j},\xi_\ell)) g_x(\xi_k, \xi_\ell) dP_{\xi|X}(\xi_\ell;x)},
\end{align*}
where the second equality uses Assumption \ref{ass: DGP}\eqref{item: errors}.

The finite sample analogue of $d_\infty^2$ was originally introduced in \citet*{Zhang2017} in the context of nonparametric graphon estimation. It is also closely related to the so-called similarity distance inducing a weak topology on graphons (e.g., see, \citet{lovasz2012large} and the references therein).

First, notice that (under weak smoothness conditions) $d_{\infty}^2(i,j;x)$ is directly identified and, if a sample of $n_x$ agents with $X = x$ is available, it can be estimated by
\begin{align*}
  \hat d_{\infty}^2(i,j;x) \coloneqq \max_{k \neq i,j} \Big\vert{(n_x - 3) \sum_{\ell \neq i,j,k} (Y_{i \ell} - Y_{j \ell}) Y_{k \ell}}\Big\vert.
\end{align*}
Second, note that $d_\infty^2 (i,j;x) = 0$ implies that
\begin{align}
  \int \left(g_x(\xi_i,\xi_\ell) - g_x(\xi_j, \xi_\ell) \right) g_x(\xi_k,\xi_\ell) dP_{\xi|X}(\xi_\ell;x) = 0 \label{eq: d_int to l2 arguement}
\end{align}
for almost all $\xi_k$. Evaluating \eqref{eq: d_int to l2 arguement} at $\xi_k = \xi_i$ and $\xi_k = \xi_j$ and subtracting the latter from the former, we conclude
\begin{align*}
   \int \left(g_x(\xi_i,\xi_\ell) - g_x(\xi_{j}, \xi_\ell) \right)^{2} dP_{\xi|X}(\xi_\ell;x) = \int \left(g(\xi_i,\xi_\ell) - g(\xi_{j}, \xi_\ell) \right)^{2} dP_{\xi|X}(\xi_\ell;x) = 0.
\end{align*}
Thus, $d_{\infty}^2(i,j;x) = 0$ implies that $g(\xi_i,\cdot)$ and $g(\xi_j,\cdot)$ are the same (in terms of the $L^2$ distance associated with the conditional distribution of $\xi|X=x$).\footnote{Similar arguments are also provided in \citet*{lovasz2012large} and \citet*{auerbach2022identification}.} Finally, since the equivalence of $g(\xi_i, \cdot)$ and $g(\xi_j, \cdot)$ is understood as $\xi_i = \xi_j$, we can use $d_{\infty}^2(i,j;x)$ to identify agents with the same values of both $X$ and $\xi$.

\begin{Rem}
The idea of using different versions of $d^2_{\infty}$ for finding agents with similar characteristics is not new. For example, \citet{Zhang2017} originally employed it for nonparametric graphon estimation, and \citet{auerbach2022identification} used it to control for unobservables in a partially linear model using network data.\footnote{Other recent applications of this methodology also include construction of network resampling methods \citep{nowakowicz2024nonparametric}, estimation of grouped fixed effects models \citep{mugnier2025simple}, and treatment effect estimation in panels \citep{athey2025identificationaveragetreatmenteffects,deaner2025inferring}.} Once again, we want to stress that this strategy alone only allows one to identify agents with the same values of $\emph{both}$ $X$ and $\xi$. Such matches, however, cannot be used to disentangle the effects of observables and unobservables and to identify $\beta_0$, which is the primary focus of this paper.
\end{Rem}

\noindent
\textbf{Step 2: Identification of $Y_{ij}^*$}\\
Now, being able to identify agents with the same values of $X$ and $\xi$, we can also identify the error free outcome $Y_{ij}^* = w(X_i,X_j)' \beta_0 + g(\xi_i,\xi_j)$ for any pair of agents $i$ and $j$. Specifically, for a fixed agent $i$, we can construct a collection of agents with $X = X_i$ and $\xi = \xi_i$, i.e., $\mathcal N_i \coloneqq \{i': X_{i'} = X_i, \xi_{i'} = \xi_i \}$. Similarly, we construct $\mathcal N_j \coloneqq \{j': X_{j'} = X_j, \xi_{j'} = \xi_j\}$. Then,
\begin{align}
  \frac{1}{n_i n_j} \sum_{i' \in \mathcal N_i} \sum_{j' \in \mathcal N_j} Y_{i' j'} 
  &= \frac{1}{n_i n_j} \sum_{i' \in \mathcal N_i} \sum_{j' \in \mathcal N_j} (w(X_{i},X_{j}) + g(\xi_{i},\xi_{j}) + \varepsilon_{i' j'} ) \nonumber \\
  &= Y_{ij}^* + \frac{1}{n_i n_j} \sum_{i' \in \mathcal N_i} \sum_{j' \in \mathcal N_j} \varepsilon_{i' j'} \quad \prob \quad Y_{ij}^*, \quad \quad n_i, n_j \rightarrow \infty, \label{eq: Y star ID}
\end{align}
where $n_i$ and $n_j$ denote the number of elements in $\mathcal N_i$ and $\mathcal N_j$, respectively. Since in the population we can construct arbitrarily large $\mathcal N_i$ and $\mathcal N_j$, \eqref{eq: Y star ID} implies that $Y_{ij}^*$ is identified.

\begin{Rem}
  Although the identification argument provided above is heuristic, it captures the main insights and will be formalized later. Specifically, in Section \ref{ssec: uniform estimation of Y}, we will construct a particular estimator $\tilde Y_{ij}^*$ and establish its uniform consistency, i.e., we will demonstrate that $\max_{i, j} \vert{\tilde Y_{ij}^* - Y_{ij}^*}\vert = o_p(1)$. This formally proves that $Y_{ij}^*$ is identified for all $i$ and $j$.
\end{Rem}

Identifiability of $Y_{ij}^*$ is a strong result, which, to the best of our knowledge, is new to the econometrics literature on identification of network and, more generally, two-way models. Importantly, it is not due to the specific parametric form or additive separability (in $X$ and $\xi$) of the model \eqref{eq: Y_ij def}. In fact, by essentially the same argument, the error free outcomes $Y_{ij}^* = f(X_i,\xi_i,X_j,\xi_j)$ are also identified in a fully non-separable nonparametric model
\begin{align*}
  Y_{ij} = f(X_i,\xi_i,X_j,\xi_j) + \varepsilon_{ij}, \quad  \ex{\varepsilon_{ij}|X_i,\xi_i,X_j,\xi_j} = 0.
\end{align*}

The established result implies that for studying identification of such models, the error free outcome $Y_{ij}^*$ can be treated as directly observed. Removing the error part greatly simplifies the analysis and provides a powerful foundation for establishing further identification results in more complicated settings; see Section~\ref{sec: ext} for details and examples.

For example, in the particular context of the model \eqref{eq: Y_ij def}, identifiability of $Y_{ij}^*$ implies that the pseudo-distances $d_{ij}^2$ are also identified for all pairs of agents $i$ and $j$. Hence, as discussed in Section \ref{ssec: beta ID}, agents with the same values of $\xi$ (and different values of $X$) and, subsequently, $\beta_0$ can be identified based on $d_{ij}^2$.

\section{Estimation of the Semiparametric Model}
\label{sec: estimation}
In this section, we turn the ideas of Section \ref{sec: ID} into an estimation procedure. First, we construct an estimator of $\beta_0$ assuming that some estimator of the pseudo-distances $\hat d_{ij}^2$ is already available for the researcher. Then, we discuss how to construct $\hat d_{ij}^2$ in the homoskedastic and general heteroskedastic settings.

\subsection{Estimation of $\beta_0$}
\label{ssec: est of beta}
Suppose that we start with some (uniformly consistent) estimator of the pseudo-distances $d_{ij}^2$ denoted by $\hat d_{ij}^2$. Using $\hat d_{ij}^2$, we construct the following kernel based estimator of $\beta_0$
\begin{align}
  \label{eq: hat beta def}
  \hat \beta \coloneqq \left(\sum_{i < j} K\left(\frac{\hat d_{ij}^2}{h_n^2}\right) \sum_{k \neq i,j} \Delta W_{ijk} \Delta W_{ijk}' \right)^{-1} \left(\sum_{i < j} K\left(\frac{\hat d_{ij}^2}{h_n^2}\right) \sum_{k \neq i,j} \Delta W_{ijk} \Delta Y_{ijk}\right),
\end{align}
where $\Delta W_{ijk} \coloneqq W_{ik} - W_{jk}$ and $\Delta Y_{ijk} \coloneqq Y_{ik} - Y_{jk}$, $K: \mathbb R_+ \rightarrow \mathbb R$ is some kernel supported on $[0,1]$, and $h_n$ is a bandwidth, which needs to satisfy $h_n \rightarrow 0$ (and some additional requirements) as $n \rightarrow \infty$. Hereafter, we will also use the notations $\sum_{i < j} \coloneqq \sum_{i,j \in [n], i < j}$ and $\sum_{k \neq i,j} \coloneqq \sum_{k \in [n], k \neq i,j}$, where $[n] = \{1, \dots, n\}$.

As discussed previously, $\beta_0$ can be estimated by the regression of $\Delta Y_{ijk}$ on $\Delta W_{ijk}$ with fixed agents $i$ and $j$ satisfying $\xi_i = \xi_j$, and, consequently, $d_{ij}^2 = 0$; see \eqref{eq: delta Y delta W reg def} and \eqref{eq: beta_0 ID}. However, in a finite sample, we are never guaranteed to find a pair of agents with exactly the same values of unobserved characteristics. The proposed estimator $\hat \beta$ addresses this issue: it combines all of the pairwise-difference regressions weighted by $K({\hat d_{ij}^2}/{h_n^2})$. Typically, the smaller $\hat d_{ij}^2$ is, the closer agents $i$ and $j$ appear to be in terms of $\xi_i$ and $\xi_j$, and the higher weight is given to the corresponding pairwise-difference regression. Specifically, we show that, with probability approaching one, only the pairs that satisfy $\norm{\xi_i - \xi_j} \leqslant \alpha h_n$ are given positive weights, where $\alpha$ is some positive constant. Since $h_n \rightarrow 0$, the quality of these matches increases and the bias introduced by the imperfect matching vanishes as the sample size grows. In Section \ref{ssec: beta rate}, we formalize this discussion by providing the necessary regularity conditions and establishing the rate of convergence for $\hat \beta$.

\subsection{Estimation of $d_{ij}^2$}
\label{ssec: est of d}

The kernel based estimator \eqref{eq: hat beta def} builds on the estimated pseudo-distances $\{\hat d_{ij}^2\}_{i \neq j}$. In this section, we construct particular estimators of $d_{ij}^2$ for both homoskedastic and general heteroskedastic settings. Their asymptotic properties will be established in Section~\ref{ssec: d star rates}.

\subsubsection{Estimation of $d_{ij}^2$ under conditional homoskedasticity of $\varepsilon_{ij}$}

We start with considering the homoskedastic setting. Recall that in this case, the pseudo-distance of interest $d_{ij}^2$ is closely related to another quantity $q_{ij}^2$ defined in \eqref{eq: q_ij def}. Specifically, according to \eqref{eq: homo d}, $q_{ij}^2 = d_{ij}^2 + 2 \sigma^2$, where $\sigma^2$ stands for the conditional variance of $\varepsilon_{ij}$. Moreover, unlike $d_{ij}^2$, $q_{ij}^2$ can be directly estimated from the raw data as in \eqref{eq: hat q_ij def}.

Then, a natural way to estimate $d_{ij}^2$ is to subtract $2 \hat \sigma^2$ from $\hat q_{ij}^2$, where $\hat \sigma^2$ is an estimator $\sigma^2$. One candidate estimator of $\sigma^2$ is given by
\begin{align}
  \label{eq: 2 sigma hat def}
  2 \hat \sigma^2 = \min_{i, j \neq j} \hat q_{ij}^2.
\end{align}
Indeed, in large samples, we expect $\min_{i, j \neq i} d_{ij}^2$ to be small since we are likely to find a pair of agents similar in terms of $\xi$. Hence, in large samples, $\min_{i, j \neq i} q_{ij}^2 = \min_{i, j \neq i} d_{ij}^2 + 2 \sigma^2$ is expected to be close to $2 \sigma^2$. Then, $d_{ij}^2$ can be estimated by
\begin{align}
  \label{eq: hat d def}
  \hat d_{ij}^2 = \hat q_{ij}^2 - 2 \hat \sigma^2 = \hat q_{ij}^2 - \min_{i, j \neq i} \hat q_{ij}^2.
\end{align}

\subsubsection{Estimation of $d_{ij}^2$ under general heteroskedasticity of $\varepsilon_{ij}$}
As suggested by Section \ref{ssec: hetero ID}, under general heteroskedasticity of the errors, the first step of estimation of $d_{ij}^2$ is to construct an estimator of $Y_{ij}^*$. Once $\hat Y_{ij}^*$ are constructed for all pairs of agents, $d_{ij}^2$ can be estimated by
\begin{align}
  \label{eq: hat d star def}
  \hat d_{ij}^2 \coloneqq \min_{\beta \in \mathcal B} \frac{1}{n} \sum_{k=1}^n (\hat Y_{ik}^* - \hat Y_{jk}^* - (W_{ik} - W_{jk})' \beta )^2.
\end{align}

As pointed out in Section~\ref{ssec: Y star ID}, the problem of estimation of $Y_{ij}^*$ is closely related to the classic matrix (graphon) estimation/completion problem, with a number of candidate estimators available in the literature (see, e.g., \citealp{Chatterjee2015}). The researcher can pick an appropriate estimator of $\hat Y_{ij}^*$ depending on the context.

In this paper, we propose estimating $Y_{ij}^*$ extending the approach of \citet{Zhang2017}. For simplicity, first, we consider the case when $X$ is discrete and takes finitely many values. Formal statistical guarantees for the proposed estimator and a discussion of the general case are provided in Section \ref{sssec: general hetero rates}.

First, for all pairs of agents $i$ and $j$, we estimate
\begin{align}
  \label{eq: hat d inf def}
  \hat d_{\infty}^2(i,j) \coloneqq \max_{k \neq i,j} \Big\vert{(n-3)^{-1} \sum_{\ell \neq i,j,k} (Y_{i \ell} - Y_{j \ell}) Y_{k \ell}}\Big\vert.
\end{align}
Then, for any agent $i$, we define its neighborhood $\hatN_i (n_i)$ as a collection of $n_i$ agents closest to agent $i$ in terms of $\hat d_{\infty}^2$ among all agents with $X = X_i$, i.e.
\begin{align}
  \label{eq: N_i def}
  \hatN_i (n_i) \coloneqq \{i': X_{i'} = X_i, \text{Rank}(\hat d_{\infty}^2(i,i')|X = X_i) \leqslant n_i \}.
\end{align}
Also notice that by construction, $i \in \hatN_i (n_i)$, so agent $i$ is always included in its neighborhood. Essentially, for any agent $i$, its neighborhood $\hatN_i (n_i)$ is a collection of agents with the same observed and similar unobserved characteristics. Note that since $X$ is discrete and takes finitely many values, we require $X_{i'} = X_i$. Also, note that the number of agents included in the neighborhoods should grow (at a certain rate) as the sample size increases.

Once the neighborhoods are constructed, we estimate $Y_{ij}^*$ by
\begin{align}
  \label{eq: hat Y def}
  \hat Y_{ij}^* = \frac{\sum_{i' \in \hatN_i(n_i)} Y_{i' j}}{n_i},
\end{align}
where, for the ease of notation, we let $Y_{i' j} = 0$ whenever $i' = j$. Note that $\hat Y_{ij}^*$ is also defined for $i = j$: despite $Y_{ii}$ is not observed, we still can estimate the associated error free outcome $Y_{ii}^* \coloneqq w(X_i,X_i) + g(\xi_i,\xi_i)$.

\begin{Rem}
  Notice that the proposed estimator \eqref{eq: hat Y def} differs from the one discussed in Section \ref{ssec: Y star ID}. Specifically, \eqref{eq: Y star ID} suggests using
  \begin{align}
    \label{eq: tilde Y def}
    \tilde Y_{ij}^* = \frac{1}{n_i n_j} \sum_{i' \in \hatN_i(n_i)} \sum_{j' \in \hatN_j(n_j)} Y_{i' j'}.
  \end{align}
  Next, we will demonstrate that the rate of convergence for $\hat \beta$ depends on the asymptotic properties of the first step estimator $\hat d_{ij}^2$. While $\tilde Y_{ij}^*$ is a natural and (uniformly) consistent estimator of $Y_{ij}^*$, i.e., we will later establish $\max_{i,j} \vert\tilde Y_{ij}^* - Y_{ij}^*\vert = o_p(1)$, it turns out that using $\hat Y_{ij}^*$ as in \eqref{eq: hat Y def} delivers better rates of converges for $\hat d_{ij}^2$ and, consequently, for $\hat \beta$ too.
\end{Rem}

\section{Large Sample Theory}
\label{sec: LST}
In this section, we formally study the asymptotic properties of the estimators we provided in Section \ref{sec: estimation}. The following set of basic regularity conditions will be used throughout the rest of the paper.

\begin{Ass}
  \label{ass: basic}
  \leavevmode
  \begin{enumerate}[(i)]
    \item \label{item: w} $w: \mathcal X \times \mathcal X \rightarrow \mathbb R^{p}$ is a symmetric bounded function, where $\supp{X} \subseteq \mathcal X$;

    \item \label{item: xi space} $\supp{\xi} \subseteq \mathcal E$, where $\mathcal E$ is a compact subset of $\mathbb R^{d_\xi}$;

    \item \label{item: lipschitz g} $g: \mathcal E \times \mathcal E \rightarrow \mathbb R$ is a symmetric bounded function; moreover, for some $\overline G > 0$, we have
    \begin{align*}
      \abs{g(\xi_1, \xi) - g(\xi_2,\xi)} \leqslant \overline G \norm{\xi_1 - \xi_2} \quad \text{for all $\xi_1,\xi_2,\xi \in \mathcal E$;}
    \end{align*}

    \item \label{item: eps moments} for some $c > 0$, $\ex{e^{\lambda \varepsilon_{ij}}|X_i,\xi_i,X_j,\xi_j} \leqslant e^{c \lambda^2}$ for all $\lambda \in \mathbb R$ a.s.
  \end{enumerate}
\end{Ass}

Conditions \eqref{item: w} and \eqref{item: xi space} are standard. Condition \eqref{item: lipschitz g} requires $g(\cdot,\cdot)$ to be (uniformly) Lipschitz continuous. Condition \eqref{item: eps moments} requires the conditional distribution of the error $\varepsilon_{ij}|X_i,\xi_i,X_j,\xi_j$ to be sub-Gaussian, uniformly over $(X_i,\xi_i,X_j,\xi_j)$. It allows us to invoke certain concentration inequalities and derive rates of uniform convergence.

\subsection{Rate of convergence for $\hat \beta$}
\label{ssec: beta rate}
In this section, we provide necessary regularity conditions and establish the rate of convergence for the kernel based estimator $\hat \beta$ introduced in \eqref{eq: hat beta def}. We will do that assuming that our candidate estimator $\hat d_{ij}^2$ converges to $d_{ij}^2$ (uniformly across all pairs) at a certain rate $R_n \rightarrow \infty$, i.e.,
\begin{align}
  \label{eq: R_n def}
  \min_{i,j\neq i} \vert\hat d_{ij}^2 - d_{ij}^2\vert = O_p(R_n^{-1}).
\end{align}
We will first derive the result treating \eqref{eq: R_n def} as a high-level assumption, and then verify that it holds and characterize $R_n$ for the candidate estimators \eqref{eq: hat d def} and \eqref{eq: hat d star def}.

We establish consistency and derive a guaranteed rate of convergence of $\hat \beta$ under the following regularity conditions. For simplicity of exposition, we state these conditions for the case when $\xi$ is scalar but the main result of this section still holds for multivariate $\xi$ under minimal appropriate modifications of the assumptions below.

\begin{Ass}
  \label{ass: xi dist}
  \leavevmode
  \begin{enumerate}[(i)]
    \item \label{item: xi pdf}  $\xi \in \mathbb R$, and $\xi|X = x$ is continuously distributed for all $x \in \supp{X}$; its conditional density $f_{\xi|X}$ (with respect to the Lebesgue measure) satisfies $\sup_{x \in \supp{X}} \sup_{\xi \in \mathcal E} f_{\xi|X}(\xi|x) \leqslant \overline f_{\xi|X}$ for some constant $\overline f_{\xi|X} > 0$;
    
    \item \label{item: xi cont} for all $x \in \supp{X}$, $f_{\xi|X}(\xi|x)$ is continuous at almost all $\xi$ (with respect to the conditional distribution of $\xi|X=x$); moreover, there exist positive constants $\overline \delta$ and $\gamma$ such that for all $\delta \in (0, \overline \delta)$ and for all $x \in \supp{X}$,
    \begin{align}
      \label{eq: xi cont bound}
      \pr\left(\xi_i \in \{\xi: f_{\xi|X}(\xi|x) \text{ is continuous on } B_{\delta}(\xi) \} |X_i = x\right) \geqslant 1 - \gamma \delta;
    \end{align}

    \item \label{item: xi lipschitz} there exists $C_\xi > 0$ such that for all $x \in \supp{X}$ and for any convex set $\mathcal D \in \mathcal E$ such that $f_{\xi|X}(\cdot;x)$ is continuous on $\mathcal D$, we have $\abs{f_{\xi|X}(\xi_1|x) - f_{\xi|X}(\xi_2|x)} \leqslant C_\xi \abs{\xi_1 - \xi_2}$.    
  \end{enumerate}
\end{Ass}


Assumption \ref{ass: xi dist} describes the properties of the conditional distribution of $\xi|X$. Note that we focus on the case when $\xi|X=x$ is continuously distributed for all $x \in \supp{X}$ even though our framework straightforwardly allows for the (conditional) distribution of~$\xi$ to have point masses or to be discrete. In fact, the asymptotic analysis is substantially simpler in the latter case. Specifically, if $\xi$ is discrete (and takes finitely many values), the agents can be consistently clustered into groups with the same values of $\xi$ based on the same pseudo-distance $\hat d_{ij}^2$. In this case, $\hat \beta$ is asymptotically equivalent to the oracle pairwise-difference estimator using the knowledge of the true cluster membership, and, as a result, it is asymptotically normal and unbiased. Moreover, in this case, $\beta_0$ can also be estimated by the pooled linear regression, which includes additional interactions of the dummy variables for the estimated cluster membership.\footnote{Similar ideas are also explored in \citet{mugnier2025simple} in the context of grouped panel models.}

Conditions \eqref{item: xi cont} and \eqref{item: xi lipschitz} are weak smoothness requirements. The second part of Condition~\eqref{item: xi cont} bounds the probability mass of $\xi|X=x$, for which $f_{\xi|X}(\xi|X)$ is not potentially continuous on a ball $B_\delta(\xi)$. It allows us to control the probability mass of $\xi$ close to the boundary of its support, where $f_{\xi|X}(\xi|X)$ is allowed to be discontinuous.

\begin{Ex*}[Illustration of Assumption \ref{ass: xi dist}\eqref{item: xi cont}]
  Suppose $\xi|X=x$ is supported and continuously distributed on $[0,1]$ for all $x \in \supp{X}$. Then $f_{\xi|X}(\xi|x)$ is continuous on $B_\delta (\xi)$ for all $\xi \in [\delta, 1 - \delta]$. Then \eqref{eq: xi cont bound} is satisfied with $\gamma = 2 \overline f_{\xi|X}$, where $\overline f_{\xi|X}$ is as in Assumption \ref{ass: xi dist}\eqref{item: xi pdf}. \QEDA
\end{Ex*}

\begin{Ass}
  \label{ass: beta ID}
  \leavevmode
  \begin{enumerate}[(i)]
    \item \label{item: xi full rank} there exist $\underline \lambda > 0$ and $\underline \delta > 0$ such that
    \begin{align*}
      \pr \left( (X_i,X_j) \in \left\{ (x_1,x_2): \lambda_{min}(\mathcal C(x_1,x_2)) > \underline \lambda, \int f_{\xi|X}(\xi|x_1) f_{\xi|X}(\xi|x_2) d \xi > \underline \delta \right\} \right) > 0,
    \end{align*}
    where
    \begin{align}
      \label{eq: C x1 x2 def}
      \mathcal C (x_1,x_2) \coloneqq \ex{(w(x_1,X) - w(x_2,X)) (w(x_1,X) - w(x_2,X))'};
    \end{align}

    \item \label{item: d* xi ID} for each $\delta > 0$, there exists $C_\delta > 0$ such that
    \begin{align*}
      \inf_{\beta} \e\left[\left(g(\xi_i,\xi_k) - g(\xi_j,\xi_k) - \left(w(X_i,X_k) - w(X_j,X_k)\right)'\beta\right)^2 |X_i,\xi_i,X_j,\xi_j\right] > C_\delta
    \end{align*}
    a.s. for $(X_i,\xi_i)$ and $(X_j,\xi_j)$ satisfying $\abs{\xi_i - \xi_j} \geqslant \delta$;
  
    \item $d_{ij}^2 \equiv d^2(X_i,\xi_i,X_i,\xi_j) = c(X_i,X_j,\xi_i) (\xi_j - \xi_i)^2 + r(X_i,\xi_i,X_j,\xi_j)$, where $\abs{r(X_i,\xi_i,X_j,\xi_j)} \leqslant C \abs{\xi_j - \xi_i}^3$ a.s. for some $C > 0$, and $0 < \underline c < c(X_i,X_j,\xi_i) < \overline c$ a.s. \label{item: d* approximation}
  \end{enumerate}  
\end{Ass}

Assumption \ref{ass: beta ID} is a collection of identification conditions. Specifically, Condition \eqref{item: xi full rank} is the identification condition for $\beta_0$. It ensures that in a growing sample, it is possible to find a pair of agents $i$ and $j$ such that (i) $X_i$ and $X_j$ are ``sufficiently different'', so the minimal eigenvalue $\lambda_{min}(\mathcal C(X_i,X_j)) > \underline \lambda > 0$, (ii) and yet $\xi_i$ and $\xi_j$ are increasingly similar. The latter is guaranteed by $\int f_{\xi|X}(\xi|X_i) f_{\xi|X}(\xi|X_j) d \xi > \underline \delta$, which implies that the conditional supports of $\xi_i|X_i$ and $\xi_j|X_j$ have a non-trivial overlap. Condition \eqref{item: xi full rank} is crucial for establishing consistency of $\hat \beta$. 

Condition \eqref{item: d* xi ID} ensures that $d_{ij}^2$ is bounded away from zero whenever $\abs{\xi_i - \xi_j}$ is. Notice that it also guarantees that agents that are close in terms of $d_{ij}^2$, must also be similar in terms of $\xi$. Hence, Condition \eqref{item: d* xi ID} justifies using the pseudo-distance $d_{ij}^2$ for finding agents with similar values of $\xi$ in finite samples. It also can be interpreted as a rank type condition: for fixed agents $i$ and $j$ with $\xi_i \neq \xi_j$, $g(\xi_i,\xi_k) - g(\xi_j,\xi_k)$ cannot be expressed as a linear combination of the components of $W_{ik} - W_{jk}$.

Condition \eqref{item: d* approximation} is a local counterpart of Condition \eqref{item: d* xi ID}. It says that, as a function of $\xi_j$, $d^2(X_i,X_j,\xi_i,\xi_j)$ can be locally quadratically approximated around $\xi_j = \xi_i$, and the approximation remainder can be uniformly bounded as $O(\abs{\xi_j - \xi_i}^3)$. Condition \eqref{item: d* approximation} also explains why we divide $\hat d_{ij}^2$ by $h_n^2$ for computing the kernel weights. Indeed, locally $d_{ij}^2 \propto (\xi_j - \xi_i)^2$, so the bandwidth $h_n$ effectively controls how large $\abs{\xi_j - \xi_i}$ can be for the pair of agents $i$ and $j$ to get a positive weight $K(\frac{\hat d_{ij}^2}{h_n^2})$.

\begin{Ass}
  \label{ass: K and h}
  \leavevmode
  \begin{enumerate}[(i)]
    \item \label{item: kernel} $K: \mathbb R_+ \rightarrow \mathbb R$ is supported on $[0,1]$ and bounded by $\overline K < \infty$. $K$ satisfies $\mu_K \coloneqq \int K(u^2) du > 0$ and $\abs{K(z) - K(z')} \leqslant \overline K' \abs{z - z'}$ for all $z,z' \in \mathbb R_+$ for some $\overline K' > 0$;


    \item \label{item: h_n} $h_n \rightarrow 0$, $n h_n / \ln n \rightarrow \infty$ and $R_n h_n^2 \rightarrow \infty$ for some $R_n \rightarrow \infty$ satisfying \eqref{eq: R_n def}.
  \end{enumerate}
\end{Ass}

Assumption \ref{ass: K and h} specifies the properties of the kernel $K$ and the bandwidth $h_n$. Condition \eqref{item: kernel} imposes a number of fairly standard restrictions on $K$ including Lipschitz continuity. Condition \eqref{item: h_n} restricts the rates at which the bandwidth is allowed to shrink towards zero. The requirement $n h_n / \ln n \rightarrow \infty$ ensures that we have a growing number of potential matches as the sample size increases. Additionally, to get the desired results we need $R_n h_n^2 \rightarrow \infty$: the bandwidth cannot go to zero faster than $R_n^{-1/2}$. This requirement allows us to bound the effect of the sampling variability coming from the first step (estimation of $\{d_{ij}^2\}_{i \neq j}$) on the second step (estimation of $\beta_0$).

\begin{Ass}
  \label{ass: local smooth g} There exists a bounded function $G: \mathcal E \times \mathcal E \rightarrow \mathbb R$ such that for all $\xi_1,\xi_2,\xi \in \mathcal E$
    \begin{align*}
      g(\xi_1,\xi) - g(\xi_2,\xi) = G(\xi_1,\xi) (\xi_1 - \xi_2) + r_g (\xi_1,\xi_2,\xi);
    \end{align*}
    and there exists $C > 0$ such that for all $\delta_n \downarrow 0$
    \begin{align*}
      \lsup \frac{\sup_{\xi} \sup_{\xi_1: \abs{\xi_1 - \xi} > \delta_n } \sup_{\xi_2: \abs{\xi_2 - \xi_1} \leqslant \delta_n} \abs{r_g (\xi_1, \xi_2, \xi)} }{\delta_n^2}  < C.
    \end{align*}
\end{Ass}

Assumption \ref{ass: local smooth g} is a weak smoothness requirement. It guarantees that as a function of $\xi_2$, the difference $g(\xi_1,\xi) - g(\xi_2,\xi)$ can be (locally) linearized around $\xi_2 = \xi_1$ provided that $\xi_2$ is close to $\xi_1$ relative to the distance between $\xi_1$ and $\xi$ (guaranteed by the restrictions $\abs{\xi_1 - \xi} > \delta_n$ and $\abs{\xi_2 - \xi_1} \leqslant \delta_n$). The goal of introducing these restrictions is to allow for a \emph{possibly non-differentiable} $g$, e.g., $g(\xi_i,\xi_j) = \kappa \abs{\xi_i - \xi_j}$, since such models of latent homophily are common in the literature (e.g., \citealp{Hoff2002,handcock2007model}). We provide an illustration of Assumption~\ref{ass: local smooth g} in Appendix~\ref{sec: non-diff illustration}.

\begin{The}
  \label{the: kernel rate}
  Suppose that~\eqref{eq: R_n def} holds for some $R_n \rightarrow \infty$. Then, under Assumptions \ref{ass: DGP}-\ref{ass: local smooth g},
  \begin{align}
    \label{eq: the rate}
    \hat \beta - \beta_0 = O_p \left(h_n^2 + \frac{R_n^{-1}}{h_n} + \frac{R_n^{-1}}{h_n^2} \left(\frac{\ln n}{n}\right)^{1/2} + n^{-1} \right).
  \end{align}
\end{The}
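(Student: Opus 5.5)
We write $\hat\beta-\beta_0 = \hat A^{-1}\hat b$, with
\begin{align*}
\hat A &= \frac{1}{n^2 h_n}\sum_{i<j}\hat K_{ij}\sum_{k\neq i,j}\Delta W_{ijk}\Delta W_{ijk}', &
\hat b &= \frac{1}{n^2 h_n}\sum_{i<j}\hat K_{ij}\sum_{k\neq i,j}\Delta W_{ijk}(\Delta g_{ijk}+\Delta\varepsilon_{ijk}),
\end{align*}
where $\hat K_{ij}=K(\hat d_{ij}^2/h_n^2)$ and $\Delta g_{ijk}=g(\xi_i,\xi_k)-g(\xi_j,\xi_k)$. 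The normalization $n^2h_n$ is chosen because, by Assumption~\ref{ass: beta ID}\eqref{item: d* approximation}, $K(d_{ij}^2/h_n^2)\neq 0$ only when $|\xi_i-\xi_j|\lesssim h_n$, which has probability of order $h_n$. Each of $\hat A$ and $\hat b$ is handled in three steps: replace $\hat K_{ij}$ by the oracle weight $K_{ij}=K(d_{ij}^2/h_n^2)$, study the oracle U-statistic, and bound the replacement error. By Lipschitz continuity of $K$, $|\hat K_{ij}-K_{ij}|\le \overline K' R_n^{-1}h_n^{-2}$ uniformly, using \eqref{eq: R_n def} read as a uniform (max) bound. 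Since $R_nh_n^2\to\infty$, this difference is nonzero only for pairs with $d_{ij}^2\le 2h_n^2$, i.e.\ $|\xi_i-\xi_j|\lesssim h_n$. Assumption~\ref{ass: beta ID}\eqref{item: d* xi ID} together with \eqref{item: d* approximation} guarantees this localization uniformly.

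\emph{Denominator.} For the oracle part, $\mathbb E[K_{ij}\Delta W\Delta W'\mid X_i,X_j,\xi_i]/h_n$ converges, via the local quadratic form of $d^2$ and the density smoothness in Assumption~\ref{ass: xi dist}, to $\int K(c u^2)du\, f_{\xi|X}(\xi_i|X_j)\,\mathcal C(X_i,X_j)$. Assumption~\ref{ass: beta ID}\eqref{item: xi full rank} then makes the limit positive definite. Concentration of the U-statistic (Hoeffding decomposition, with $nh_n/\ln n\to\infty$) gives $\hat A\to A>0$ in probability. The replacement error is $O_p(R_n^{-1}h_n^{-2}\cdot h_n/h_n)=o_p(1)$, so $\hat A^{-1}=O_p(1)$.

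\emph{Numerator, bias.} The bias term is $\frac{1}{n^2h_n}\sum K_{ij}\sum_k\Delta W_{ijk}\Delta g_{ijk}$. Apply Assumption~\ref{ass: local smooth g} with $\delta_n\asymp h_n$: for $k$ with $|\xi_k-\xi_i|>\delta_n$, we have $\Delta g_{ijk}=G(\xi_i,\xi_k)(\xi_i-\xi_j)+O(h_n^2)$. The remaining $k$ form a set of probability $O(h_n)$ on which $|\Delta g|\le \overline G h_n$, so they contribute $O(h_n^2)$. The linear term $\mathbb E[K_{ij}(\xi_i-\xi_j)\mid\cdot]$ is odd in $\xi_j-\xi_i$ to first order. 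Its nonvanishing part comes from the cubic remainder in $d^2$ and the Lipschitz density, both $O(h_n\cdot h_n)$ after dividing by $h_n$. The boundary/discontinuity mass is $O(h_n)$ by \eqref{eq: xi cont bound} and multiplies a term of size $h_n$. Together these give the $h_n^2$ rate, and the centered fluctuation is absorbed.

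\emph{Numerator, noise and first-step error.} The oracle noise term $\frac{1}{n^2h_n}\sum K_{ij}\sum_k\Delta W\Delta\varepsilon$ has mean zero. Its variance is dominated by the sums over $k$ that share $\varepsilon_{ik}$ across $j$. Because $\sum_j K_{ij}\approx nh_n$, this yields an $O_p(n^{-1})$ contribution; I expect the bookkeeping of this step to give exactly the $n^{-1}$ term. Replacing $K_{ij}$ by $\hat K_{ij}$ in the bias part costs at most $R_n^{-1}h_n^{-2}\cdot h_n\cdot h_n/h_n=R_n^{-1}/h_n$, since $|\Delta g|\lesssim h_n$ on the relevant pairs. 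In the noise part the weights $\hat K_{ij}$ depend on the $\varepsilon$'s, so I would bound the replacement crudely. Write it as $\max_{i,j}\big|\frac1n\sum_k\Delta W_{ijk}\Delta\varepsilon_{ijk}\big|=O_p((\ln n/n)^{1/2})$, using the sub-Gaussian tails of Assumption~\ref{ass: basic}\eqref{item: eps moments} and a union bound. Multiply this by $R_n^{-1}h_n^{-2}$ and by the normalized count of localized pairs, which is $O_p(1)$. This gives the term $R_n^{-1}h_n^{-2}(\ln n/n)^{1/2}$.

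The main obstacle is the $h_n^2$ bias computation. The first-order term has to cancel through the symmetry of the kernel around $\xi_j=\xi_i$ even though $g$ may be non-differentiable. This requires careful control of the agents $k$ near $\xi_i$ via Assumption~\ref{ass: local smooth g}, and of the boundary regions via Assumption~\ref{ass: xi dist}\eqref{item: xi cont}.
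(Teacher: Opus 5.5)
Your proposal follows the paper's proof essentially step for step. It uses the same split $\hat A_n^{-1}(\hat B_n+\hat C_n)$ and the same oracle-weight replacement via Lipschitz continuity of $K$, localized to $|\xi_i-\xi_j|\le\alpha h_n$. The denominator and bias are handled by the same U-statistic limits, with the odd-symmetry, Lipschitz-density and boundary-mass argument giving $h_n^2$. A conditional-on-$\{Z_i\}$ linear-in-$\varepsilon$ bound gives $n^{-1}$, and the same crude max bound gives $R_n^{-1}h_n^{-2}(\ln n/n)^{1/2}$.

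Two small points. First, your normalization should be of order $n^3h_n$ (the paper uses $\binom{n}{2}^{-1}h_n^{-1}(n-2)^{-1}$) for $\hat A$ to converge; this is harmless since it cancels in $\hat A^{-1}\hat b$. Second, the ``absorbed'' centered fluctuation of the oracle bias U-statistic, which the paper handles with Arcones' Bernstein inequality, also follows from a plain Hoeffding variance bound, since $h_n n^{-1/2}\le h_n^2+n^{-1}$.
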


Theorem~\ref{the: kernel rate} provides a guaranteed rate of convergence of $\hat \beta$ and, thus, formally establishes identification of $\beta_0$. The derived rate consists of four terms. The $h_n^2$ term accounts for the bias due to the imperfect $\xi$-matching. The terms involving $R_n$ are due to the fact that $\{d_{ij}^2\}_{i \neq j}$ are unknown and need to be estimated. Finally, $n^{-1}$ is the standard sampling variability term in a regression with $O(n^2)$ observations.

In Section \ref{ssec: d star rates}, we will derive $R_n$ for the considered estimators of $\{d_{ij}^2\}_{i \neq j}$ and complete the characterization of the rate of convergence for $\hat \beta$ provided in \eqref{eq: the rate}. In particular, we will demonstrate that, under certain conditions, the candidate estimators \eqref{eq: hat d def} and \eqref{eq: hat d star def} satisfy \eqref{eq: R_n def} with $R_n = (\frac{n}{\ln n})^{1/2}$, or with an even slower-growing $R_n$ depending on $d_\xi$. For such values of $R_n$, \eqref{eq: the rate} effectively simplifies as
\begin{align}
  \label{eq: beta rate}
  \hat \beta - \beta_0 = O_p \left(h_n^2 + \frac{R_n^{-1}}{h_n} \right),
\end{align}
where we also used $\frac{R_n^{-1}}{h_n^2} = o(1)$ implied by Assumption \ref{ass: K and h}\eqref{item: h_n}. In particular, under $h_n \propto R_n^{-1/3}$, $\hat \beta$ is guaranteed to achieve the following rate of convergence
\begin{align}
  \label{eq: beta opt rate}
  \hat \beta - \beta_0 = O_p\left(R_n^{-2/3}\right).
\end{align}

The rate of convergence established by Theorem~\ref{the: kernel rate} is not necessarily optimal and potentially can be improved, especially if additional smoothness conditions are imposed. The main goal of Theorem~\ref{the: kernel rate} is to demonstrate consistency of $\hat \beta$ and thus complement Section~\ref{sec: ID} by formally establishing identification of $\beta_0$ under minimal primitive conditions.

Finally, we want to highlight again two important features of the setting differentiating it from the previously studied frameworks and complicating the asymptotic analysis. First, as discussed in more detail in Section~\ref{sssec: ID comparison}, we consider low-dimensional (or even low-rank) regressors $W_{ij} = w(X_i,X_j)$ typical for network models instead of relying on ``high-rank'' regressors like most of the related literature does. Second, we only impose minimal smoothness assumptions on $g(\cdot, \cdot)$ and do not require it to be differentiable whereas the alternative approaches building low-rank approximations of interactive unobserved heterogeneity rely on the existence of multiple continuous derivatives of $g(\cdot,\cdot)$ (e.g., \citealp{fernandez2021low,freeman2023linear,beyhum2024inference}); see also Remark~\ref{rem: smoothness} below for a clarification of the role of Assumption~\ref{ass: local smooth g}.

As discussed earlier, these features are highly representative of network models. Moreover, their unique combination makes the asymptotic analysis highly nonstandard, potentially invalidating the previously proposed methods and established statistical guarantees. For this reason, in this paper, we do not attempt to construct an asymptotically normal estimator of $\beta_0$ or provide an inference method at the cost of imposing additional assumptions restrictive in network models. Instead, we focus on establishing identification of $\beta_0$ by providing a new estimator and showing its consistency in the general setting of the paper.

\begin{Rem}
  \label{rem: smoothness}
  Assumption~\ref{ass: local smooth g} is not essential for consistency of $\hat \beta$. It can be dropped at the cost of increasing the magnitude of the bias of $\hat \beta$ from $h_n^2$ to $h_n$. Even when Assumption~\ref{ass: local smooth g} is imposed, bounding the bias of $\hat \beta$ is non-trivial. In particular, it requires careful accounting for the agents with $\xi$ ``close'' to the boundary of its support because kernel smoothing methods suffer from larger biases on the boundary even when the estimated function is smooth.
\end{Rem}

\begin{Rem}[Extension to $d_\xi > 1$]
  \label{rem: multivariate xi}
  Importantly, the result of Theorem~\ref{the: kernel rate} remains the same for $d_\xi > 1$ provided that the condition $n h_n/\ln n \rightarrow \infty$ in Assumption \ref{ass: K and h}\eqref{item: h_n} is generalized as $n h_n^{d_{\xi}}/\ln n \rightarrow \infty$, and the other conditions are analogously restated in terms of multivariate $\xi$, if needed. Also notice that, the rates provided in \eqref{eq: the rate} and \eqref{eq: beta rate} would still implicitly depend on $d_\xi$ through $R_n$ (and the restrictions imposed on $h_n$ by Assumption \ref{ass: K and h}\eqref{item: h_n}).
\end{Rem}

\subsection{Rates of uniform convergence for $\hat d_{ij}^2$}
\label{ssec: d star rates}
In this section, we complement the result of Theorem~\ref{the: kernel rate} by characterizing $R_n$, the rate of uniform convergence for $\hat d_{ij}^2$ defined in \eqref{eq: R_n def}, for the candidate estimators \eqref{eq: hat d def} and \eqref{eq: hat d star def} in the homoskedastic and then in the general heteroskedastic settings.

\subsubsection{Homoskedastic model}
\label{sssec: homoskedastic rate}

First, we consider the homoskedastic case, i.e., we assume that the idiosyncratic errors satisfy \eqref{eq: homosked eps}. As discussed in Section \ref{ssec: est of d}, in this case, the suggested estimator is given by $\hat d_{ij}^2 = \hat q_{ij}^2 - 2 \hat \sigma^2$, and $d_{ij}^2 = q_{ij}^2 - 2 \sigma^2$ (see \eqref{eq: hat d def} and \eqref{eq: homo d}, respectively). Thus,
\begin{align}
  \max_{i, j \neq i} \abs{\hat d_{ij}^2 - d_{ij}^2} = \max_{i, j \neq i}\abs{(\hat q_{ij}^2 - q_{ij}^2) - (2 \hat \sigma^2 - 2 \sigma^2)} \leqslant \max_{i, j \neq i} \abs{\hat q_{ij}^2 - q_{ij}^2} + \abs{2 \hat \sigma^2 - 2 \sigma^2}. \label{eq: homo d star bound}
\end{align}
This bound, together with the following lemma, allows us to characterize $R_n$.

\begin{Lem}
  \label{lem: homosked d UC}
  Suppose that \eqref{eq: homosked eps} holds and $\mathcal B$ is compact. Then, under Assumptions \ref{ass: DGP} and \ref{ass: basic},
  \begin{align}
    \max_{i, j \neq i} \abs{\hat q_{ij}^2 - q_{ij}^2} &= O_{p}\left((\ln n / n)^{1/2}\right),\label{eq: hat d ij UC}\\
    2 \hat \sigma^2 - 2 \sigma^2 &=  \overline G^2 \min_{i \neq j} \norm{\xi_i - \xi_j}^2 + O_{p}\left((\ln n / n)^{1/2}\right),\label{eq: 2 hat sigma rate}
  \end{align}
  where $\hat q_{ij}^2$, $q_{ij}^2$, $2 \hat \sigma^2$ are given by \eqref{eq: hat q_ij def}, \eqref{eq: q_ij def}, \eqref{eq: 2 sigma hat def}, and $\overline G$ is defined in Assumption \ref{ass: basic}\eqref{item: lipschitz g}.
\end{Lem}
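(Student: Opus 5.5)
The plan is to treat $\hat q_{ij}^2$ as the minimum over the compact set $\mathcal B$ of a sample quadratic form in $\beta$, and $q_{ij}^2$ as the minimum of its population counterpart. For a fixed pair $(i,j)$ write
\begin{align*}
\hat Q_{ij}(\beta) = \hat A_{ij} - 2\hat b_{ij}'\beta + \beta'\hat C_{ij}\beta, \qquad Q_{ij}(\beta) = A_{ij} - 2 b_{ij}'\beta + \beta' C_{ij}\beta.
\end{align*}
Here $\hat A_{ij}$, $\hat b_{ij}$, $\hat C_{ij}$ are the averages over $k \neq i,j$ (normalized by $n-2$) of $\Delta Y_{ijk}^2$, $\Delta W_{ijk}\Delta Y_{ijk}$ and $\Delta W_{ijk}\Delta W_{ijk}'$. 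The quantities $A_{ij}$, $b_{ij}$, $C_{ij}$ are the corresponding conditional expectations given $Z_i,Z_j$.

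Since $|\min_{\mathcal B}\hat Q_{ij} - \min_{\mathcal B} Q_{ij}| \leqslant \sup_{\beta\in\mathcal B}|\hat Q_{ij}(\beta) - Q_{ij}(\beta)|$ and $\mathcal B$ is bounded, we get
\begin{align*}
\max_{i,j\neq i}\abs{\hat q_{ij}^2 - q_{ij}^2} \lesssim \max_{i,j\neq i}\left(\abs{\hat A_{ij}-A_{ij}} + \norm{\hat b_{ij}-b_{ij}} + \norm{\hat C_{ij}-C_{ij}}\right).
\end{align*}
It therefore suffices to prove uniform concentration of these three families of averages at rate $(\ln n/n)^{1/2}$.

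The key structural observation is the following. Conditional on $(Z_i,Z_j)$, the summands indexed by $k\neq i,j$ are functions of $(Z_k,\varepsilon_{ik},\varepsilon_{jk})$ only, because $\varepsilon_{ij}$ never enters. By Assumption \ref{ass: DGP}, these triples are i.i.d. across $k$, and their conditional means equal $A_{ij}$, $b_{ij}$, $C_{ij}$. The expansion leading to \eqref{eq: d decomposition} confirms that $A_{ij}-2b_{ij}'\beta+\beta'C_{ij}\beta$ is exactly the objective in \eqref{eq: q_ij def}. The summands have the following tail behaviour:
\begin{enumerate}[(i)]
\item by Assumption \ref{ass: basic}\eqref{item: w}, $\Delta W_{ijk}\Delta W_{ijk}'$ is bounded, so Hoeffding's inequality applies;
\item $\Delta W_{ijk}\Delta Y_{ijk}$ is a bounded term plus a bounded multiple of $\varepsilon_{ik}-\varepsilon_{jk}$, hence conditionally sub-Gaussian by Assumption \ref{ass: basic}\eqref{item: eps moments};
\item $\Delta Y_{ijk}^2$ is the square of a bounded term plus a sub-Gaussian term, hence sub-exponential with constants uniform in $(Z_i,Z_j)$.
\end{enumerate}
Bernstein's inequality for sub-exponential variables then gives, for $t$ below a fixed constant,
\begin{align*}
\pr\left(\abs{\hat A_{ij}-A_{ij}}>t \,\middle|\, Z_i,Z_j\right)\leqslant 2\exp(-c\,n t^2),
\end{align*}
and similarly for $\hat b_{ij}$ and each entry of $\hat C_{ij}$. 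Setting $t=M(\ln n/n)^{1/2}$ and taking a union bound over the $O(n^2)$ pairs yields a failure probability $O(n^{2-cM^2})\to0$ for $M$ large. Integrating out $(Z_i,Z_j)$ gives \eqref{eq: hat d ij UC}. The replacement of $n-2$ by $n-3$ or $n$ affects the bound only by $O(n^{-1})$.

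I expect this concentration step to be the main obstacle. The difficulty is that $\Delta Y_{ijk}^2$ is unbounded, so the sub-exponential (rather than sub-Gaussian) Bernstein bound is needed, and its constants must hold uniformly over the conditioning values, which Assumption \ref{ass: basic}\eqref{item: eps moments} provides.

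For \eqref{eq: 2 hat sigma rate}, I would use \eqref{eq: homo d}:
\begin{align*}
2\hat\sigma^2 - 2\sigma^2 = \min_{i\neq j}\hat q_{ij}^2 - 2\sigma^2 = \min_{i\neq j} d_{ij}^2 + O_p\left((\ln n/n)^{1/2}\right),
\end{align*}
since the difference of two minima is bounded by the maximal discrepancy \eqref{eq: hat d ij UC}. Evaluating the objective in \eqref{eq: d_ij star def} at $\beta=\beta_0$ and applying Assumption \ref{ass: basic}\eqref{item: lipschitz g} gives
\begin{align*}
0\leqslant d_{ij}^2\leqslant \ex{(g(\xi_i,\xi_k)-g(\xi_j,\xi_k))^2\mid Z_i,Z_j}\leqslant \overline G^2\norm{\xi_i-\xi_j}^2 .
\end{align*}
Hence $0\leqslant\min_{i\neq j} d_{ij}^2\leqslant \overline G^2\min_{i\neq j}\norm{\xi_i-\xi_j}^2$. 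This is how I read \eqref{eq: 2 hat sigma rate}: the first term is an upper bound on a nonnegative bias term, not an exact equality.
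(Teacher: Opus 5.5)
Your proposal is correct and follows essentially the same route as the paper. Both arguments use Bernstein-type concentration conditional on $(Z_i,Z_j)$ with a union bound over pairs. Both then use the sandwich $2\sigma^2 \leqslant \min_{i\neq j} q_{ij}^2 \leqslant 2\sigma^2 + \overline G^2 \min_{i\neq j}\norm{\xi_i-\xi_j}^2$, which, as you note, the paper's proof also only delivers as a bound rather than an equality. The only difference is bookkeeping. You expand $\hat q_{ij}^2(\beta)$ into its quadratic-form coefficients, whereas the paper splits it into the error-free part $d_{ij,n-2}^2(\beta)$, the noise term $2\hat\sigma_{ij}^2$ and a cross term. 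Your split has the small bonus of showing that the first bound does not need homoskedasticity.
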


The bounds~\eqref{eq: hat d ij UC} and \eqref{eq: 2 hat sigma rate} established by Lemma~\ref{lem: homosked d UC} together with \eqref{eq: homo d star bound} allow us to provide $R_n$ for the estimator \eqref{eq: hat d def} in the homoskedastic setting. To make this characterization complete, we also need to provide a bound on $\norm{\xi_i - \xi_j}^2$ in \eqref{eq: 2 hat sigma rate}.

In particular, since $\mathcal E$ is bounded (Assumption \ref{ass: basic}\eqref{item: xi space}), we can guarantee that
\begin{align}
  \label{eq: dist xi simple bound}
  \min_{i \neq j} \norm{\xi_i - \xi_j}^2 \leqslant C n^{-2/d_{\xi}}
\end{align}
for some $C > 0$. While this bound is conservative, it ensures that, for $d_\xi \leq 4$, the contribution of $\norm{\xi_i - \xi_j}^2$ is negligible, resulting in the following corollary of Lemma~\ref{lem: homosked d UC}.

\begin{Cor}
  \label{cor: homosked low d_xi}
  Suppose that the hypotheses of Lemma \ref{lem: homosked d UC} are satisfied. Then, for $d_\xi \leqslant 4$,
  \begin{align*}
    \max_{i, j \neq i} \abs{\hat d_{ij}^2 - d_{ij}^2} = O_{p}\left((\ln n / n)^{1/2}\right),
  \end{align*}
  where $\hat d_{ij}^2$ and $d_{ij}^2$ are given by \eqref{eq: hat d def} and \eqref{eq: d_ij star def}, respectively.
\end{Cor}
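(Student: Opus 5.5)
The corollary follows by combining the triangle-inequality bound \eqref{eq: homo d star bound} with the two conclusions of Lemma~\ref{lem: homosked d UC}, and then checking that the matching term $\overline G^2 \min_{i \neq j}\norm{\xi_i - \xi_j}^2$ in \eqref{eq: 2 hat sigma rate} is of smaller order than $(\ln n/n)^{1/2}$ when $d_\xi \leqslant 4$. So the plan has three steps.

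\textbf{First step.} We use \eqref{eq: homo d star bound}:
\begin{align*}
\max_{i, j\neq i}\abs{\hat d_{ij}^2 - d_{ij}^2} \leqslant \max_{i, j\neq i}\abs{\hat q_{ij}^2 - q_{ij}^2} + \abs{2\hat\sigma^2 - 2\sigma^2}.
\end{align*}
By \eqref{eq: hat d ij UC}, the first term on the right is $O_p((\ln n/n)^{1/2})$. By \eqref{eq: 2 hat sigma rate}, the second term is at most $\overline G^2 \min_{i\neq j}\norm{\xi_i-\xi_j}^2 + O_p((\ln n/n)^{1/2})$.

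\textbf{Second step.} It remains to control $\min_{i\neq j}\norm{\xi_i-\xi_j}^2$. We use the deterministic bound \eqref{eq: dist xi simple bound}. It is a pigeonhole argument: $\mathcal E$ is compact by Assumption~\ref{ass: basic}\eqref{item: xi space}, so it can be covered by at most $n-1$ cubes of side length proportional to $n^{-1/d_\xi}$. With $n$ points, two of them must fall in the same cube, and that gives $\min_{i\neq j}\norm{\xi_i-\xi_j}^2 \leqslant C n^{-2/d_\xi}$ surely, for every realization.

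\textbf{Third step.} We compare rates. If $d_\xi\leqslant 4$, then $n^{-2/d_\xi}\leqslant n^{-1/2}\leqslant (\ln n/n)^{1/2}$ for $n\geqslant 3$. The matching term is therefore absorbed into the $O_p((\ln n/n)^{1/2})$ remainder, which gives the claim.

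There is no real obstacle here, because all the substantive work is done in Lemma~\ref{lem: homosked d UC}. The only point to be careful about is that \eqref{eq: 2 hat sigma rate} bounds a signed quantity while we need its absolute value. Since $\overline G^2\min_{i \neq j}\norm{\xi_i-\xi_j}^2\geqslant 0$ is itself $O(n^{-1/2})$, the absolute value of the whole right-hand side of \eqref{eq: 2 hat sigma rate} is still $O_p((\ln n/n)^{1/2})$.
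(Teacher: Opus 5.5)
Your proposal is correct and follows the paper's own argument: plug the two bounds of Lemma~\ref{lem: homosked d UC} into \eqref{eq: homo d star bound}, then use the deterministic bound \eqref{eq: dist xi simple bound} to get $\overline G^2\min_{i\neq j}\norm{\xi_i-\xi_j}^2 = O(n^{-2/d_\xi}) = O(n^{-1/2})$, which is negligible when $d_\xi\leqslant 4$. Your pigeonhole proof of \eqref{eq: dist xi simple bound} and your reading of \eqref{eq: 2 hat sigma rate} as a bound on $\abs{2\hat\sigma^2-2\sigma^2}$ supply details the paper leaves implicit.
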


Corollary \ref{cor: homosked low d_xi} ensures that when the errors are homoskedastic and $d_\xi \leqslant 4$, $\hat d_{ij}^2$ given by \eqref{eq: hat d def} satisfies \eqref{eq: R_n def} with $R_n = \left(\frac{n}{\ln n}\right)^{1/2}$. Hence, when $h_n \propto R_n^{-1/3} = \left(\frac{\ln n}{n}\right)^{-1/6}$, \eqref{eq: beta opt rate} implies that
\begin{align*}
  \hat \beta - \beta_0 = O_{p}\left((\ln n / n)^{1/3}\right),
\end{align*}
so the guaranteed rate of convergence for $\hat \beta$ is $\left(\frac{n}{\ln n}\right)^{1/3}$.

\begin{Rem}
  For $d_\xi > 5$, it is still possible to provide a simple yet \emph{conservative} bound on $R_n$ by combining \eqref{eq: dist xi simple bound} with the result of Lemma~\ref{lem: homosked d UC}. In particular, in this case, we can guarantee that $2 \hat \sigma^2 - 2 \sigma^2 = O_p\left(n^{-2/d_\xi}\right)$, and, consequently, we also have
  \begin{align*}
    \max_{i, j \neq i} \abs{\hat d_{ij}^2 - d_{ij}^2} = O_p\left(n^{-2/d_\xi}\right).
  \end{align*}
  We stress that these bounds are loose. With a more detailed analysis of the asymptotic behavior of $\min_{i \neq j} \norm{\xi_i - \xi_j}^2$ (which is outside the scope of this paper), these results can be substantially refined.
\end{Rem}

\subsubsection{Model with general heteroskedasticity}
\label{sssec: general hetero rates}

In this section, we establish the rate of uniform convergence for $\hat d_{ij}^2$ under general heteroskedasticity of the errors. First, we suppose that $X$ is discrete and derive $R_n$ for the estimator given by \eqref{eq: hat d star def}-\eqref{eq: hat Y def}. Then, we discuss how this estimator can be modified to accommodate continuously distributed $X$. Finally, we also provide a general result establishing $R_n$ for $\hat d_{ij}^2$ based on a generic estimator of $Y_{ij}^*$, potentially other than \eqref{eq: hat Y def}.

\bigskip

\noindent
\textbf{Estimation of $d_{ij}^2$ when $X$ is discrete\\}
Now we formally derive the rate of uniform convergence for $\hat d_{ij}^2$ given by \eqref{eq: hat d star def}-\eqref{eq: hat Y def}. This rate crucially depends on the asymptotic properties of the denoising estimator $\hat Y_{ij}^*$.

To formally state the asymptotic properties of $\hat Y_{ij}^*$, we introduce the following assumption.

\begin{Ass}
  \label{ass: d* rate}
  \leavevmode
  \begin{enumerate}[(i)]
    \item \label{item: disc X} $X$ is discrete and takes finitely many values $\{x_1, \dots, x_R\}$;
    
    \item \label{item: ball mass} there exist positive constants $\kappa$ and $\overline \delta$ such that for all $x \in \supp{X}$, for all $\xi' \in \supp{\xi|X = x}$, $\pr (\xi \in B_{\delta} (\xi') | X = x ) \geqslant \kappa \delta^{d_\xi}$ for all positive $\delta \leqslant \overline \delta$.
  \end{enumerate}
\end{Ass}

As pointed out before, we suppose that $X$ is discrete and takes finitely many values. Condition \eqref{item: ball mass} is a weak condition imposed on the conditional distribution of $\xi|X$. If $\xi|X$ is continuously distributed, it is satisfied when the conditional density $f_{\xi|X}(\xi|x)$ is (uniformly) bounded away from zero and its support is not ``too irregular''.

For any matrix $A \in \mathbb R^{n \times n}$, let $\norm{A}_{2,\infty} \coloneqq \max_{i} \sqrt{\sum_{j=1}^n A_{ij}^2}$. Also let $\hat Y^*$ and $Y^*$ denote $n \times n$ matrices with entries given by $\hat Y_{ij}^*$ and $Y_{ij}^*$.

\begin{The}
  \label{the: Y star convergence}
  Suppose that for all $i$, $ \underline C (n \ln n)^{1/2} \leqslant n_i \leqslant \overline C (n \ln n)^{1/2}$ for some positive constants $\underline C$ and $\overline C$. Then, under Assumptions \ref{ass: DGP}, \ref{ass: basic}, \ref{ass: d* rate}, for $\hat Y_{ij}^*$ given by \eqref{eq: hat Y def} we have:
  \begin{enumerate}[(i)]
    \item \label{item: 2 inf rate} $n^{-1} \Vert{\hat Y^* - Y^*}\Vert_{2,\infty}^2 = O_p\left(\left({\ln n}/{n}\right)^{\frac{1}{2 d_\xi}}\right)$;

    \item \label{item: Y star covar} $\max_{k} \max_{i} \vert{n^{-1} \sum_{\ell} Y_{k \ell}^* (\hat Y_{i \ell}^* - Y_{i \ell}^*) }\vert = O_p\left(\left({\ln n}/{ n}\right)^{\frac{1}{2 d_\xi}}\right).$
  \end{enumerate} 
\end{The}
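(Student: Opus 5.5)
Throughout, work within a fixed value $X=x_r$ and condition on $\{Z_i\}$; since $X$ takes finitely many values (Assumption~\ref{ass: d* rate}\eqref{item: disc X}), each cell contains a positive fraction of the agents w.p.a.1. The analysis follows \citet{Zhang2017}, but with maxima over $i$ in place of averages. The key intermediate object is the population similarity
\begin{align*}
d^2_{2}(i,i') \coloneqq n^{-1}\sum_\ell (Y^*_{i\ell}-Y^*_{i'\ell})^2 .
\end{align*}
The plan has three steps: (a) uniform concentration of $\hat d_\infty^2$; (b) a bound on $d_2^2(i,i')$ for every selected neighbour $i'\in\hatN_i(n_i)$; (c) a bias--variance decomposition of $\hat Y^*_{i\cdot}-Y^*_{i\cdot}$.

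\textbf{Step (a): concentration of $\hat d_\infty^2$.} Write $(n-3)^{-1}\sum_{\ell}(Y_{i\ell}-Y_{j\ell})Y_{k\ell}$ as $(n-3)^{-1}\sum_\ell (Y^*_{i\ell}-Y^*_{j\ell})Y^*_{k\ell}$ plus cross terms that are linear and quadratic in the $\varepsilon$'s over distinct indices. By sub-Gaussianity (Assumption~\ref{ass: basic}\eqref{item: eps moments}), Bernstein/Hanson--Wright type bounds, and a union bound over $O(n^3)$ triples, the deviation is $O_p((\ln n/n)^{1/2})$ uniformly. Hence $\hat d^2_\infty(i,j)=d^2_{\infty,n}(i,j)+O_p((\ln n/n)^{1/2})$, where $d^2_{\infty,n}(i,j)=\max_k|n^{-1}\sum_\ell (Y^*_{i\ell}-Y^*_{j\ell})Y^*_{k\ell}|$.

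\textbf{Step (b): the selected neighbours are good.}
\begin{itemize}
\item \emph{Upper bound.} By Lipschitzness (Assumption~\ref{ass: basic}\eqref{item: lipschitz g}), $d^2_{\infty,n}(i,j)\le C\norm{\xi_i-\xi_j}$. By Assumption~\ref{ass: d* rate}\eqref{item: ball mass} and a uniform (over $i$) binomial concentration, each ball $B_\delta(\xi_i)$ within cell $x_r$ contains at least $n_i\asymp(n\ln n)^{1/2}$ agents once $\delta\asymp(n_i/n)^{1/d_\xi}=(\ln n/n)^{1/(2d_\xi)}$. Therefore every $i'\in\hatN_i(n_i)$ satisfies $d^2_{\infty,n}(i,i')\lesssim (\ln n/n)^{1/(2d_\xi)}+(\ln n/n)^{1/2}$.
\item \emph{Converting to $d_2^2$.} Use the identity from Section~\ref{ssec: Y star ID}:
\begin{align*}
d_2^2(i,i')=n^{-1}\sum_\ell(Y^*_{i\ell}-Y^*_{i'\ell})Y^*_{i\ell}-n^{-1}\sum_\ell(Y^*_{i\ell}-Y^*_{i'\ell})Y^*_{i'\ell}\le 2\,d^2_{\infty,n}(i,i').
\end{align*}
Here the max over $k$ ranges over the observed agents, and we may take $k=i$ and $k=i'$ (up to $O(n^{-1})$ diagonal terms; the requirement $k\neq i,j$ in \eqref{eq: hat d inf def} is handled by using a nearby $k$ and Lipschitzness). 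This gives $\max_i\max_{i'\in\hatN_i}d_2^2(i,i')=O_p((\ln n/n)^{1/(2d_\xi)})$.
\end{itemize}

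\textbf{Step (c): bias--variance decomposition.} Write
\begin{align*}
\hat Y^*_{ij}-Y^*_{ij}=n_i^{-1}\sum_{i'\in\hatN_i}(Y^*_{i'j}-Y^*_{ij})+n_i^{-1}\sum_{i'\in\hatN_i}\varepsilon_{i'j}.
\end{align*}
\begin{itemize}
\item \emph{Bias.} For part (i), by Jensen the row-wise squared bias is at most $n_i^{-1}\sum_{i'}d_2^2(i,i')$, which is bounded by Step~(b).
\item \emph{Noise, the main obstacle.} The noise term is hard because $\hatN_i$ is data dependent and correlated with the $\varepsilon_{\cdot j}$. I will bound it by a supremum over all subsets $S$ of the cell with $|S|=n_i$. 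For part (i), the quantity $n^{-1}\sum_j(n_i^{-1}\sum_{i'\in S}\varepsilon_{i'j})^2$ has mean of order $n_i^{-1}$, and the centred part concentrates via Hanson--Wright. Rather than a naive union over $\binom{n}{n_i}$ subsets, I will control it via the operator norm $\norm{E}_{op}=O_p(\sqrt n)$ of the error matrix, since $\sum_j(\sum_{i'\in S}\varepsilon_{i'j})^2\le n_i\norm{E}^2_{op}$. This gives $n^{-1}n_i^{-2}\cdot n_i\cdot n=n_i^{-1}=o((\ln n/n)^{1/(2d_\xi)})$.
\item \emph{Part (ii).} Decompose in the same way. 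The bias contributes
\begin{align*}
\Big|n^{-1}\sum_\ell Y^*_{k\ell}\,n_i^{-1}\sum_{i'\in\hatN_i}(Y^*_{i'\ell}-Y^*_{i\ell})\Big|\le \max_{i'\in\hatN_i} d^2_{\infty,n}(i,i'),
\end{align*}
which is bounded directly by Step~(b). The noise contributes $n_i^{-1}\sum_{i'\in\hatN_i}n^{-1}\sum_\ell Y^*_{k\ell}\varepsilon_{i'\ell}$, which is bounded by $\max_{k,i'}|n^{-1}\sum_\ell Y^*_{k\ell}\varepsilon_{i'\ell}|=O_p((\ln n/n)^{1/2})$ by a union bound over pairs. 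The diagonal convention $Y_{i'j}=0$ for $i'=j$ adds only $O(n_i^{-1})$.
\end{itemize}

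Combining the three steps yields both rates.
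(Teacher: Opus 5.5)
Your proposal is correct, and apart from one step it is the paper's proof. Your Step (a) is the uniform bound $r_n$ in \eqref{eq: A r_n}. Step (b) is Lemmas~\ref{lem: N(i)} and~\ref{lem: Y* convergence}(i)--(ii), including the device of replacing the excluded choices $k=i$ and $k=i'$ by agents $k\in\mathcal B_i(\delta_{n,C})$ and $k'\in\mathcal B_{i'}(\delta_{n,C})$. Step (c) is the $\mathcal S_i+\mathcal J_i$ decomposition. In part (ii), you bound the noise by the same union bound over pairs $(k,i')$ that the paper uses.

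The genuine difference is how you control the part (i) noise term $\mathcal S_i$.

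\emph{The paper's route.} It splits $\mathcal S_i$ into two pieces:
\begin{itemize}
\item a diagonal piece $n_i^{-2}\sum_{i'}n^{-1}\sum_j\varepsilon_{i'j}^2=O_p(n_i^{-1})$;
\item an off-diagonal piece, bounded crudely by $\max_{i'\neq i''}\abs{n^{-1}\sum_j\varepsilon_{i'j}\varepsilon_{i''j}}=O_p((\ln n/n)^{1/2})$.
\end{itemize}
This needs only scalar Bernstein inequalities and a union bound over $O(n^2)$ pairs. The data dependence of $\hatN_i(n_i)$ is irrelevant because the bound holds for every pair.

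\emph{Your route.} The bound $\sum_j(\sum_{i'\in S}\varepsilon_{i'j})^2\leqslant \abs{S}\,\norm{E}_{op}^2$ is likewise uniform over all subsets $S$, and it yields the sharper $O_p(n_i^{-1})$. The cost is importing a random-matrix fact: $\norm{E}_{op}=O_p(n^{1/2})$ for a symmetric matrix with conditionally independent, heteroskedastic, uniformly sub-Gaussian entries (e.g.\ via an $\epsilon$-net argument). The paper's tools are more elementary.

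The extra sharpness buys nothing for the stated rate. The bias term $\mathcal J_i$ is of order $(\ln n/n)^{1/(2d_\xi)}$, which is never smaller than $(\ln n/n)^{1/2}$, so it dominates under either noise bound.
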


Theorem \ref{the: Y star convergence} establishes two important asymptotic properties of $\hat Y_{ij}^*$. In fact, both results play key roles in bounding $R_n$, the rate of uniform convergence for $\hat d_{ij}^2$.

Part \eqref{item: 2 inf rate} is analogous to the result of \citet{Zhang2017}. While \citet{Zhang2017} only consider binary outcomes and do not allow for observed covariates, we extend their result by allows for (i) $d_\xi > 1$, (ii) possibly non-binary outcomes and unbounded idiosyncratic errors, (iii) observed (discrete) covariates $X$. Part \eqref{item: Y star covar} is new. It allows us to substantially improve on $R_n$ compared to what Part \eqref{item: 2 inf rate} can guarantee individually.\footnote{See Lemma \ref{lem: d star rate} for the comparison of the rates.}

Note that Theorem \ref{the: Y star convergence} requires $n_i$, the number of agents included into $\hatN_i (n_i)$, to grow at $(n \ln n)^{1/2}$ rate. As shown in \citet{Zhang2017}, this rate is optimal.\footnote{The optimal choice of $n_i$ remains the same for $d_\xi > 1$.} In applications, the authors also recommend taking $n_i = C (n \ln n)^{1/2}$ with $C \simeq 1$ and document robustness of their numerical results to the choice of $C$.

Building on Theorem \ref{the: Y star convergence}, we now provide rate of uniform convergence for $\hat d_{ij}^2$.

\begin{The}
  \label{the: disc X d star rate}
  Suppose that the hypotheses of Theorem \ref{the: Y star convergence} are satisfied and $\mathcal B = \mathbb R^p$. Then,
  \begin{align*}
    \max_{i, j \neq i} \vert{\hat d_{ij}^2 - d_{ij}^2}\vert = O_p\left(\left({\ln n}/{n}\right)^{\frac{1}{2 d_\xi}}\right),
  \end{align*}
  where $\hat d_{ij}^2$ and $d_{ij}^2$ are given by \eqref{eq: hat d star def}-\eqref{eq: hat Y def} and \eqref{eq: d_ij star def}, respectively.
\end{The}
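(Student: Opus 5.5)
With $\mathcal B = \mathbb R^p$, both $\hat d_{ij}^2$ and $d_{ij}^2$ are least-squares residual criteria, so I will write them in projection form and compare them term by term. Let $\Delta W_{ijk} = W_{ik} - W_{jk}$, $D_{ij,k} \coloneqq Y_{ik}^* - Y_{jk}^*$ and $\hat D_{ij,k} \coloneqq \hat Y_{ik}^* - \hat Y_{jk}^*$. Write $e_{ij,k} \coloneqq \hat D_{ij,k} - D_{ij,k} = (\hat Y_{ik}^* - Y_{ik}^*) - (\hat Y_{jk}^* - Y_{jk}^*)$.

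\textbf{Step 1: compare $\hat d_{ij}^2$ with its oracle sample analogue.} Define
\begin{align*}
\tilde d_{ij}^2 \coloneqq \min_{\beta} n^{-1}\sum_k (D_{ij,k} - \Delta W_{ijk}'\beta)^2 .
\end{align*}
Let $M_{ij}$ denote the residual-maker (annihilator) matrix of the $n\times p$ regressor matrix $(\Delta W_{ijk})_k$. Then
\begin{align*}
\hat d_{ij}^2 - \tilde d_{ij}^2 = 2 n^{-1} D_{ij}' M_{ij} e_{ij} + n^{-1} e_{ij}' M_{ij} e_{ij}.
\end{align*}

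\textbf{Step 2: bound the two terms.}
\begin{itemize}
\item[\emph{Quadratic term.}] It is at most $n^{-1}\norm{e_{ij}}^2 \le 4 n^{-1}\Vert \hat Y^* - Y^*\Vert_{2,\infty}^2$. By Theorem~\ref{the: Y star convergence}\eqref{item: 2 inf rate} this is $O_p((\ln n/n)^{1/(2d_\xi)})$, uniformly in $i,j$.
\item[\emph{Cross term.}] Write $D_{ij}'M_{ij}e_{ij} = D_{ij}'e_{ij} - D_{ij}'P_{ij}e_{ij}$, where $P_{ij} = I - M_{ij}$.
\begin{itemize}
\item[(a)] $n^{-1}D_{ij}'e_{ij}$ expands into four terms of the form $n^{-1}\sum_\ell Y^*_{k\ell}(\hat Y^*_{m\ell} - Y^*_{m\ell})$ with $k,m \in \{i,j\}$. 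Each is bounded uniformly by Theorem~\ref{the: Y star convergence}\eqref{item: Y star covar}.
\item[(b)] $n^{-1}D_{ij}'P_{ij}e_{ij}$ equals $\hat b_{ij}'\,(n^{-1}\sum_k \Delta W_{ijk} e_{ij,k})$, where $\hat b_{ij}$ is the oracle least-squares coefficient. I will avoid inverting possibly near-singular Gram matrices by a Cauchy--Schwarz bound on the projection: $\abs{D_{ij}'P_{ij}e_{ij}} \le \norm{P_{ij}D_{ij}}\,\norm{P_{ij}e_{ij}}$. The first factor is at most $\norm{D_{ij}} = O(\sqrt n)$ because $Y^*$ is bounded. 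The second factor only gives $O(\sqrt n\,(\ln n/n)^{1/(4d_\xi)})$, which loses a square root.
\end{itemize}
\end{itemize}

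To fix the loss in (b), I will instead use the structure $\Delta W_{ijk} = w(X_i,X_k) - w(X_j,X_k)$. Since $X$ takes finitely many values, each column of the regressor matrix is a linear combination of the indicators $\ind\{X_k = x_r\}$, $r = 1,\dots,R$, and this representation is uniform over the pair $(i,j)$. Hence the column space of $P_{ij}$ is contained in the fixed space $V$ spanned by these $R$ indicator vectors. Then $P_{ij}D_{ij} = P_{ij}P_V D_{ij}$, so
\begin{align*}
\abs{D_{ij}'P_{ij}e_{ij}} = \abs{(P_{ij}D_{ij})' P_V e_{ij}} \le \norm{D_{ij}}\,\norm{P_V e_{ij}}.
\end{align*}
The norm $\norm{P_V e_{ij}}$ is controlled by the $R$ block averages $n_r^{-1/2}\sum_{k: X_k = x_r} e_{ij,k}$. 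These are of the form $\sum_\ell a_\ell(\hat Y^*_{i\ell} - Y^*_{i\ell})$ with the bounded weight $a_\ell = \ind\{X_\ell = x_r\}$.

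\textbf{Main obstacle.} The proof of Theorem~\ref{the: Y star convergence}\eqref{item: Y star covar} must therefore also hold with $Y^*_{k\ell}$ replaced by such indicator weights. I will argue it does: that proof only uses boundedness of the weights and their independence from the errors in the averaging, not the specific form $Y^*_{k\ell}$. Checking this re-derivation, rather than quoting the theorem as a black box, is the delicate point. Together with (a), it gives the cross term the rate $O_p((\ln n/n)^{1/(2d_\xi)})$.

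\textbf{Step 3: oracle versus population.} Show $\max_{i,j}\abs{\tilde d_{ij}^2 - d_{ij}^2} = O_p((\ln n/n)^{1/2})$, which is dominated by the target rate. This is a uniform-convergence argument mirroring Lemma~\ref{lem: homosked d UC}, with no errors present. Conditional on $Z_i, Z_j$, the sample moments $n^{-1}\sum_k D_{ij,k}^2$, $n^{-1}\sum_k \Delta W_{ijk} D_{ij,k}$ and $n^{-1}\sum_k \Delta W_{ijk}\Delta W_{ijk}'$ are averages of bounded i.i.d.\ terms. Hoeffding's inequality plus a union bound over the $n^2$ pairs gives $(\ln n/n)^{1/2}$. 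The minimum over $\beta\in\mathbb R^p$ is handled by the discreteness of $X$: the criterion depends on $\beta$ only through $R$ cell-level contrasts, so the minimized value is a smooth function of cell means and cell proportions. Continuity in those cell moments then transfers the rate, taking care that singular Gram matrices only reduce the number of free contrasts. Combining Steps 1--3 yields the theorem.
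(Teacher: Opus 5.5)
Your Step~1 decomposition, the quadratic term, piece (a) of the cross term, and Step~3 are fine. Step~3 is the content of Lemma~\ref{lem: d* R_p}, which the paper proves by showing the least-squares minimizers lie in a fixed compact set with probability approaching one; your cell-moment continuity argument is an acceptable substitute. You are also right that the projected piece $n^{-1}D_{ij}'P_{ij}e_{ij}$ needs its own argument, and the paper does not supply one. It passes from $\abs{n^{-1}\Delta\hat{\mathbf Y}_{i-j}^{*\prime}\mathbf P_{i-j}\mathbf Y^*_{i-j}}$ to $\abs{n^{-1}\Delta\hat{\mathbf Y}_{i-j}^{*\prime}\mathbf Y^*_{i-j}}$, where $\mathbf P_{i-j}$ is the paper's annihilator (your $M_{ij}$). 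That step is valid for the quadratic term but not for a bilinear form: one can have $a'b=0$ and $a'\mathbf P_{i-j}b\neq 0$. So the paper in effect only treats your piece (a). Your reduction of (b), via $P_{ij}=P_{ij}P_V$, to controlling $\max_i\max_r\abs{n^{-1}\sum_\ell \ind\{X_\ell=x_r\}(\hat Y^*_{i\ell}-Y^*_{i\ell})}$ is correct.

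The gap is the step you flag as the main obstacle, and the claim you make there is false. In the proof of Theorem~\ref{the: Y star convergence}(ii), $n^{-1}\sum_\ell a_\ell(\hat Y^*_{i\ell}-Y^*_{i\ell})$ splits into two parts:
\begin{itemize}
\item a noise part $n_i^{-1}\sum_{i'\in\hatN_i(n_i)}n^{-1}\sum_\ell a_\ell\varepsilon_{i'\ell}$, which does only need bounded weights independent of the errors;
\item a bias part $n_i^{-1}\sum_{i'\in\hatN_i(n_i)}n^{-1}\sum_\ell a_\ell\bigl(g(\xi_{i'},\xi_\ell)-g(\xi_i,\xi_\ell)\bigr)$.
\end{itemize}
Lemma~\ref{lem: Y* convergence}(ii) bounds the bias part at rate $\rho_n\coloneqq(\ln n/n)^{1/(2d_\xi)}$ \emph{only} because the neighbourhoods are chosen to make $\hat d_\infty^2(i,i')=\max_k\abs{(n-3)^{-1}\sum_\ell(Y_{i\ell}-Y_{i'\ell})Y_{k\ell}}$ small. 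The rows $Y^*_{k\cdot}$ are exactly the test functions the selection rule controls. For $a_\ell=\ind\{X_\ell=x_r\}$ nothing in the construction controls the bias. The only available bound is Cauchy--Schwarz plus Lemma~\ref{lem: Y* convergence}(i), which gives $O_p(\rho_n^{1/2})$ — precisely the square-root loss you were trying to avoid.

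The indicators are also not generated by rows of $Y^*$ in general. Differencing rows of agents with nearly equal $\xi$ yields only $(w(x,X_\ell)-w(x',X_\ell))'\beta_0$, which vanishes when $\beta_0=0$. In that case, say with $\xi$ independent of $X$, you would need $\int\bigl(g(\xi_{i'},\xi)-g(\xi_i,\xi)\bigr)dP_\xi(\xi)=O_p(\rho_n)$ uniformly over selected neighbours. Closeness against the $g(\xi_k,\cdot)$ delivers this if, e.g., constants can be written as $\int g(\xi_k,\cdot)\,\mu(d\xi_k)$ for a finite signed measure $\mu$, but nothing of the kind is assumed.

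As it stands, your route yields only $O_p(\rho_n^{1/2})$, the rate of Lemma~\ref{lem: d star rate}. Reaching $\rho_n$ needs a new ingredient. One option is to add the $R$ indicator vectors to the test functions in $\hat d_\infty^2$: true ball neighbours still pass at rate $\delta_{n,C}$ by Lipschitz continuity of $g$, so the argument of Lemma~\ref{lem: Y* convergence} carries over. Another is to impose assumptions that turn smallness of $\hat d_\infty^2(i,i')$ into $\abs{\xi_i-\xi_{i'}}=O_p(\rho_n)$.
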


Theorem \ref{the: disc X d star rate} establishes the rate of uniform convergence for the proposed estimator of $d_{ij}^2$. Specifically, it ensures that, for the considered $\hat d_{ij}^2$, \eqref{eq: R_n def} holds with $R_n = \left(\frac{n}{\ln n}\right)^{\frac{1}{2 d_\xi}}$. Combined with \eqref{eq: beta opt rate}, this guarantees that under the appropriate choice of the bandwidth $h_n$, we have
\begin{align*}
  \hat \beta - \beta_0 = O_p\left(\left({\ln n}/{n}\right)^{\frac{1}{3 d_\xi}}\right).
\end{align*}
Thus, when $X$ is discrete, $\beta_0$ can be estimated (at least) at $\left(\frac{n}{\ln n}\right)^{\frac{1}{3 d_\xi}}$ rate. If $\xi$ is scalar, the guaranteed rate of convergence for $\hat \beta$ is the same as in the homoskedastic case.

\bigskip

\noindent
\textbf{Estimation of $d_{ij}^2$ when $X$ is continuously distributed\\}
The proposed estimator $\hat Y_{ij}^*$ given by \eqref{eq: N_i def}-\eqref{eq: hat Y def} can be straightforwardly modified to allow for continuously distributed (components of) $X$.  Since, in this case, the probability of finding two agents with exactly the same values of $X$ is zero, we have to modify the construction of $\hatN_i (n_i)$ previously provided in \eqref{eq: N_i def}. One natural possibility is to consider
\begin{align}
  \label{eq: hat N cont X def}
  \hatN_i (n_i;\delta_n) \coloneqq \{i': \norm{X_{i'} - X_i} \leqslant \delta_n, \text{Rank}(\hat d_{\infty}^2(i,i')|\norm{X - X_i} \leqslant \delta_n) \leqslant n_i \}.
\end{align}
The parameter $\delta_n$ controls the quality of matching based on $X$. To guarantee consistency of $\hat Y_{ij}^*$, we will need $\delta_n$ to converge to zero but slowly enough to ensure that we can still find a growing number of good matches increasingly similar in terms of both $X$ and $\xi$. Once the constructed neighborhoods are appropriately modified, $\hat Y_{ij}^*$ can still be computed as in \eqref{eq: hat Y def}, and, with some work, the result of Theorem~\ref{the: Y star convergence} can be generalized accordingly to allow for continuously distributed covariates.

\begin{Rem}
  \label{rem: cont X Y star consistency}
  Another possibility to allow for continuously distributed (components of) $X$ is to simply treat it as unobserved, similarly to $\xi$. In this case, $(X,\xi)$ becomes the effective latent variable and $\hatN_i (n_i)$ can be constructed without conditioning on $X = X_i$. 
  Then the result of Theorem \ref{the: Y star convergence} can be applied to the resulting estimator delivering analogous rates of convergence with $d_\xi + d_X$ taking of the place of $d_\xi$, where $d_X$ denotes the dimension of $X$.
  While this construction of $\hat Y_{ij}^*$ might not necessarily be optimal, it allows us to formally cover the case when $X$ is continuously distributed by providing analogous consistency results.
\end{Rem}

\bigskip

\noindent
\textbf{Estimation of $d_{ij}^2$ using general matrix denoising techniques\\}
Theorem \ref{the: disc X d star rate} establishes uniform consistency of $\hat d_{ij}^2$ leveraging the specific structure of $\hat Y_{ij}^*$ in \eqref{eq: hat Y def} and requires $X$ to be discrete. To complement Theorem \ref{the: disc X d star rate} and formally establish identification of $d_{ij}^2$ and $\beta_0$ in the general setting, we will now provide a generic consistency result, which does not require $\hat Y_{ij}^*$ to have any particular structure and holds regardless of whether $X$ is discrete or continuously distributed.

\begin{Lem}
  \label{lem: d star rate}
  Suppose that $\hat Y^*$ satisfies $n^{-1} \Vert{\hat Y^* - Y^*}\Vert_{2, \infty}^2 = O_p (\mathcal R_n^{-1})$ for some $\mathcal R_n \rightarrow \infty$. Also suppose that $\mathcal B$ is compact. Then, under Assumptions \ref{ass: DGP} and \ref{ass: basic}, we have
  \begin{align*}
    \max_{i, j \neq i} \vert{\hat d_{ij}^2 - d_{ij}^2}\vert = O_p\left(\left({\ln n}/{n}\right)^{1/2} + \mathcal R_n^{-1/2}\right),
  \end{align*}
  where $\hat d_{ij}^2$ and $d_{ij}^2$ are given by \eqref{eq: hat d star def} and \eqref{eq: d_ij star def}, respectively.
\end{Lem}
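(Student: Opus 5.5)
We decompose the error into a sampling part and a first-step part. Define the infeasible sample analogue
\begin{align*}
  \tilde d_{ij}^2 \coloneqq \min_{\beta \in \mathcal B} \frac{1}{n} \sum_{k=1}^n (Y_{ik}^* - Y_{jk}^* - (W_{ik} - W_{jk})'\beta)^2,
\end{align*}
which uses the true error free outcomes. We bound $\max_{i,j\neq i}\vert\hat d_{ij}^2 - \tilde d_{ij}^2\vert$ and $\max_{i, j\neq i}\vert\tilde d_{ij}^2 - d_{ij}^2\vert$ separately.

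\textbf{Sampling term.} Write $S_{ij}(\beta) \coloneqq n^{-1}\sum_k (Y_{ik}^* - Y_{jk}^* - \Delta W_{ijk}'\beta)^2$ and $\bar S_{ij}(\beta) \coloneqq \e[(Y_{ik}^* - Y_{jk}^* - \Delta W_{ijk}'\beta)^2 \mid Z_i,Z_j]$. Then $\vert\tilde d_{ij}^2 - d_{ij}^2\vert \leqslant \sup_{\beta\in\mathcal B}\vert S_{ij}(\beta) - \bar S_{ij}(\beta)\vert$. The $k=i,j$ summands contribute $O(n^{-1})$, since $w$ and $g$ are bounded and $\mathcal B$ is compact.

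Conditional on $(Z_i,Z_j)$, the remaining summands are i.i.d. functions of $Z_k$, bounded uniformly in $\beta\in\mathcal B$. Expanding the square gives $S_{ij}(\beta) - \bar S_{ij}(\beta)$ as a quadratic form in $(1,\beta)$ whose coefficients are centered averages of bounded variables. There are finitely many such coefficients, entries of $n^{-1}\sum_k V_kV_k'$ with $V_k = (Y_{ik}^*-Y_{jk}^*, \Delta W_{ijk}')'$. Because $\mathcal B$ is compact, the supremum over $\beta$ is bounded by a constant times the maximal coefficient deviation. Hoeffding's inequality, applied conditionally and followed by a union bound over the $O(n^2)$ pairs, gives $O_p((\ln n/n)^{1/2})$ uniformly. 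This reproduces the argument of Lemma~\ref{lem: homosked d UC}, but here there are no errors, so it is simpler.

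\textbf{First-step term.} Let $a_{ik}\coloneqq \hat Y^*_{ik}-Y^*_{ik}$, and write $u_{ijk}(\beta) \coloneqq Y_{ik}^*-Y_{jk}^*-\Delta W_{ijk}'\beta$ for the oracle residual. For every $\beta$,
\begin{align*}
  \hat S_{ij}(\beta) - S_{ij}(\beta) = \frac{2}{n}\sum_k u_{ijk}(\beta)(a_{ik}-a_{jk}) + \frac1n\sum_k (a_{ik}-a_{jk})^2.
\end{align*}
We have $\vert u_{ijk}(\beta)\vert\leqslant M$ uniformly, using compactness of $\mathcal B$ and boundedness of $w$ and $g$. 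By Cauchy--Schwarz, the first term is at most
\[
2M\bigl(n^{-1}\textstyle\sum_k (a_{ik}-a_{jk})^2\bigr)^{1/2} \leqslant 4M\bigl(n^{-1}\Vert \hat Y^*-Y^*\Vert_{2,\infty}^2\bigr)^{1/2} = O_p(\mathcal R_n^{-1/2}).
\]
The second term is at most $4n^{-1}\Vert\hat Y^*-Y^*\Vert_{2,\infty}^2 = O_p(\mathcal R_n^{-1})$. Both bounds are uniform in $\beta$, $i$ and $j$, so the difference of the minima obeys the same bound. This is where compactness of $\mathcal B$ is essential: it makes $M$ uniform. It also explains why only the rate $\mathcal R_n^{-1/2}$ appears, in contrast with the refinement in Theorem~\ref{the: Y star convergence}\eqref{item: Y star covar}.

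Combining the two parts gives the claim. The only real obstacle is the uniformity over all pairs and over $\beta$ in the sampling term. We handle it by reducing the supremum over $\beta$ to finitely many sample moments via the quadratic structure and then applying conditional Hoeffding with a union bound. No chaining is needed.
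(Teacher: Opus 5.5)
Your proposal is correct and follows essentially the same route as the paper. The oracle sampling error is controlled uniformly over pairs and $\beta\in\mathcal B$ by expanding the square into finitely many conditional sample moments, concentrating conditionally on $(Z_i,Z_j)$, and taking a union bound; this is exactly the paper's Lemma~\ref{lem: A d_n UC}(ii). The first-step error is handled by the same three-term expansion with Cauchy--Schwarz against $n^{-1}\Vert\hat Y^*-Y^*\Vert_{2,\infty}^2$.

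The differences are only cosmetic. You pass through the intermediate minimum $\tilde d_{ij}^2$ rather than bounding $\sup_{\beta\in\mathcal B}\vert\hat d_{ij}^2(\beta)-d_{ij}^2(\beta)\vert$ in one step, and you use Hoeffding where the paper uses Bernstein.
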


Lemma~\ref{lem: d star rate} guarantees that $\hat d_{ij}^2$ in \eqref{eq: hat d star def} is uniformly consistent for $d_{ij}^2$ provided that $\hat Y_{ij}^*$ is consistent for $Y_{ij}^*$ in the $(2,\infty)$ norm, and thus it justifies using alternative estimators $\hat Y_{ij}^*$ available in the literature.

Together with the result of Theorem~\ref{eq: the rate}, Lemma~\ref{lem: d star rate} delivers consistency of $\hat \beta$ and thus formally establishes identification of $\beta_0$ in the general setting. In particular, as argued in Remark \ref{rem: cont X Y star consistency}, if $X$ is continuously distributed, we can still construct $\hat Y^*$ satisfying the requirement of the lemma with $\mathcal R_n = \left(\frac{n}{\ln n}\right)^{\frac{1}{2(d_\xi + d_X)}}$. Thus, Lemma~\ref{lem: d star rate} guarantees that $\beta_0$ can be consistently estimated when $X$ is continuously distributed.

\subsection{Uniformly consistent estimation of $Y_{ij}^*$}
\label{ssec: uniform estimation of Y}
One of the contributions of this paper is establishing identification of the error free outcomes $Y_{ij}^*$'s. In Section \ref{ssec: Y star ID}, we heuristically argued that $Y_{ij}^*$ is identified. In this section, we construct a uniformly consistent estimator of $Y_{ij}^*$ and, hence, formally prove its identification.

As before, first, we suppose that $X$ is discrete and takes finitely many values. The estimator we propose is an analogue of $\hat Y_{ij}^*$ given by \eqref{eq: hat Y def}, which we used before to construct $\hat d_{ij}^2$. It utilizes exactly the same neighborhoods as in \eqref{eq: N_i def} but, unlike $\hat Y_{ij}^*$, averages over all unique outcomes $Y_{i' j'}$ with $i' \in \hatN_i(n_i)$ and $j' \in \hatN_j(n_j)$.\footnote{Recall that in the studied undirected model, $Y_{ij} = Y_{ji}$, and $Y_{ij}$ is not observed for $i = j$.} For example, if $\hatN_i(n_i)$ and $\hatN_j(n_j)$ have no elements in common, then the proposed estimator takes a simple form as in \eqref{eq: tilde Y def}. More generally, for any $i$ and $j$, let
\begin{align}
  \hatM_{ij} \coloneqq \{ (i',j'): i' < j', (i' \in \hatN_i (n_i), j' \in \hatN_j (n_j)) \thickspace \text{or} \thickspace  (i' \in \hatN_j(n_j), j' \in \hatN_i(n_i))\}. \label{eq: hat Mij def}
\end{align}
Essentially, $\hatM_{ij}$ is a collection of unique unordered pairs of indices from the Cartesian product of $\hatN_i (n_i)$ and $\hatN_j (n_j)$.
 Then, $Y_{ij}^*$ is estimated by 
\begin{align}
  \label{eq: proper tilde Y}
  \tilde Y_{ij}^* = \frac{1}{m_{ij}} \sum_{(i',j') \in \hatM_{ij}} Y_{i'j'},
\end{align}
where $m_{ij}$ denotes the number of elements in $\hatM_{ij}$.

\begin{The}
  \label{the: Y star UC} Suppose that the hypotheses of Theorem \ref{the: Y star convergence} hold.
  Suppose that for any $\delta > 0$, there exists $C_\delta > 0$ such that $\int (g(\xi_i, \xi) - g(\xi_j, \xi))^2 dP_\xi (\xi) > C_\delta$ a.s. for $\abs{\xi_i - \xi_j} \geqslant \delta$. Then,
  \begin{align*}
    \max_{i,j} \vert{\tilde Y_{ij}^* - Y_{ij}^*}\vert = o_p(1).
  \end{align*}
\end{The}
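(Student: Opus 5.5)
The plan is to show that, with probability approaching one, every neighborhood $\hatN_i(n_i)$ consists only of agents whose $\xi$ lies within a vanishing distance of $\xi_i$. Once that holds, the average in \eqref{eq: proper tilde Y} splits into a bias term controlled by the Lipschitz property of $g$ and a noise term controlled by sub-Gaussian concentration that is uniform over the possible index sets.

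\textbf{Step 1: the neighborhoods are uniformly good.} I would first show that there is a sequence $\delta_n \downarrow 0$ such that
\begin{align*}
\pr\Bigl(\max_i \max_{i' \in \hatN_i(n_i)} \norm{\xi_{i'} - \xi_i} \leqslant \delta_n\Bigr) \rightarrow 1.
\end{align*}
I expect this to be the main obstacle, and I would reuse the machinery behind Theorem~\ref{the: Y star convergence}.

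The argument has two parts.
\begin{itemize}
\item Uniform estimation of the population distance. Uniformly over $i,j,k$, the quantity $(n-3)^{-1}\sum_{\ell}(Y_{i\ell}-Y_{j\ell})Y_{k\ell}$ is within $O_p((\ln n/n)^{1/2})$ of its conditional mean. This follows from sub-Gaussianity, a union bound over $n^3$ triples, and independence of the $\ell$-terms given the $Z$'s. Since the max over $k$ ranges over the sample, Assumption~\ref{ass: d* rate}\eqref{item: ball mass} and the Lipschitz property of $g$ let the sample max approximate the population $d_\infty^2(i,j)$.
\item Good agents rank first. By Assumption~\ref{ass: d* rate}\eqref{item: ball mass}, with high probability each $i$ has at least $n_i \asymp (n\ln n)^{1/2}$ agents with $X=X_i$ and $\norm{\xi_{i'}-\xi_i}\leqslant \epsilon_n$, where $\epsilon_n \asymp ((\ln n)/n)^{1/(2d_\xi)}$ up to constants. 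These agents have $\hat d_\infty^2(i,i') = O_p(\epsilon_n + (\ln n/n)^{1/2})$ uniformly. Any selected $i'$ therefore satisfies $\hat d_\infty^2(i,i')\leqslant$ that bound, and hence $d_\infty^2(i,i') = o_p(1)$ uniformly.
\end{itemize}

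Next I convert small $d_\infty^2$ into small $\norm{\xi_i-\xi_{i'}}$. Evaluating at $k$ near $i$ and at $k$ near $i'$ (such sample points exist by the ball-mass condition) and subtracting, as in the population argument of Section~\ref{ssec: Y star ID}, gives $\int (g(\xi_i,\xi)-g(\xi_{i'},\xi))^2 dP_{\xi|X}(\xi|x) = o_p(1)$. Since $X$ is finite-valued, summing over $x$ with probabilities $\pr(X=x)$ yields the unconditional $L^2(P_\xi)$ distance. The added identification hypothesis of the theorem then forces $\max_i\max_{i'\in\hatN_i}\norm{\xi_{i'}-\xi_i}\rightarrow_p 0$. (The $w(x,x)'\beta_0$ part cancels because $X_{i'}=X_i$.) One caveat: the $\ell$'s in $\hat d_\infty^2$ range over all agents, not only those with $X=x$. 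In that case I would work with the mixture $\e[g_{X_\ell}\cdots]$ directly; the same subtraction trick still gives the $L^2$ distance under the mixture over $X_\ell$, which is bounded below as required.

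\textbf{Step 2: bias.} On this event every $(i',j')\in\hatM_{ij}$ has $X_{i'}=X_i$ and $X_{j'}=X_j$ (or the symmetric assignment). Using symmetry of $w$ and $g$ together with Assumption~\ref{ass: basic}\eqref{item: lipschitz g},
\begin{align*}
\abs{Y_{i'j'}^*-Y_{ij}^*} \leqslant \overline G\bigl(\norm{\xi_{i'}-\xi_i}+\norm{\xi_{j'}-\xi_j}\bigr) \leqslant 2\overline G\delta_n .
\end{align*}
The averaged bias is therefore uniformly $o_p(1)$.

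\textbf{Step 3: noise.} It remains to bound $\max_{i,j} m_{ij}^{-1}\bigl|\sum_{\hatM_{ij}}\varepsilon_{i'j'}\bigr|$. The difficulty is that $\hatM_{ij}$ depends on the $\varepsilon$'s. I would bound the supremum over all pairs of sets $A,B\subseteq[n]$ with $|A|=n_i$ and $|B|=n_j$. There are at most $\exp(C(n\ln n)^{1/2}\ln n)$ such pairs. Each sum is over $m\gtrsim n_in_j/2 \asymp n\ln n$ distinct independent sub-Gaussian errors, so a single set obeys
\begin{align*}
\pr\Bigl(m^{-1}\Bigl|\sum \varepsilon\Bigr|>t\Bigr)\leqslant 2\exp(-c m t^2) = \exp(-c' t^2 n\ln n).
\end{align*}
This beats the entropy $(n\ln n)^{1/2}\ln n$ for any fixed $t$, so the union bound gives $o_p(1)$ uniformly. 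Two points need checking: that $m_{ij}\gtrsim n_in_j$ holds even when the two neighborhoods overlap or $i=j$ (unordered pairs lose at most a factor of 2, and the diagonal terms are excluded), and that the conditioning on $\{Z_i\}$ is handled correctly. Combining Steps 2 and 3 gives $\max_{i,j}\abs{\tilde Y_{ij}^*-Y_{ij}^*}=o_p(1)$.
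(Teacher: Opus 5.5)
Your proposal is correct and follows the paper's proof essentially step for step. The paper gets your Step~1 directly from Lemma~\ref{lem: Y* convergence}(iii), whose proof is exactly your comparison of $i$ and $i'\in\hatN_i(n_i)$ through sample agents $k$ near $i$ and $k'$ near $i'$, combined with the identification hypothesis; it then bounds the bias by $2\overline G\max_i\max_{i'\in\hatN_i(n_i)}\norm{\xi_i-\xi_{i'}}$, and the noise by a union bound over all realizations of $\hatM_{ij}$, with log-cardinality of order $n^{1/2}(\ln n)^{3/2}$ dominated by a concentration exponent of order $n\ln n\,\epsilon^2$. Your caveat about $\ell$ ranging over agents with $X_\ell\neq X_i$ is moot: once $X_{i'}=X_i$, $Y^*_{i\ell}-Y^*_{i'\ell}=g(\xi_i,\xi_\ell)-g(\xi_{i'},\xi_\ell)$ for every $\ell$, so the sample average converges uniformly to the unconditional $L^2(P_\xi)$ distance used in the hypothesis.
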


Theorem \ref{the: Y star UC} demonstrates that $\tilde Y_{ij}^*$ is uniformly consistent for $Y_{ij}^*$ and, consequently, it formally proves that $Y_{ij}^*$ is identified. We also stress that the previously employed estimator $\hat Y_{ij}^*$ is not necessarily uniformly consistent since it averages over only $n_i$ outcomes $Y_{i' j}$. At the same time, $\tilde Y_{ij}^*$ averages over $m_{ij} = O(n_i n_j)$ outcomes, which allows us to establish the desired result.

\begin{Rem}
  Recall that, as discussed in Section~\ref{ssec: Y star ID}, the similarity distance $\hat d_{\infty}^2 (i,j)$ used to construct $\hat Y_{ij}^*$ and $\tilde Y_{ij}^*$ allows us to find agents $i$ and $j$ similar in terms of the $L^2$ distance between functions $g(\xi_i,\cdot)$ and $g(\xi_j,\cdot)$. The additional requirement imposed in Theorem~\ref{the: Y star UC} ensures that this similarity also translates into similarity between $\xi_i$ and $\xi_j$, which helps us to establish uniform consistency of $\tilde Y_{ij}^*$.\footnote{This condition is a weaker version of Assumption~\ref{ass: beta ID}\eqref{item: d* xi ID}. Together with Assumption~\ref{ass: basic}\eqref{item: lipschitz g}, it allows us to guarantee that the matched agents $i$ and $j$ are also similar in terms of the $L^\infty$ distance $\norm{g(\xi_i, \cdot) - g(\xi_j, \cdot)}_\infty$. While it is also possible to characterize the rate of uniform convergence under additional conditions allowing one to translate the $L^2$ rate into the $L^\infty$ rate for $\norm{g(\xi_i, \cdot) - g(\xi_j, \cdot)}$ such as Assumption~\ref{ass: beta ID}\eqref{item: d* approximation}, we do not pursue this direction here because the primary goal of Theorem~\ref{the: Y star UC} is establishing identification of $Y_{ij}^*$.}
\end{Rem}

\begin{Rem}
  \label{lem: UC with continuous X}
  Analogous uniform consistency results can also be obtained when $X$ is continuously distributed if $\tilde Y_{ij}^*$ is properly adjusted. As previously discussed, the possible adjustments include (i) constructing neighborhoods as in \eqref{eq: hat N cont X def}, or (ii) treating $X$ as unobserved (see Remark \ref{rem: cont X Y star consistency}). The result of Theorem~\ref{the: Y star UC} can be directly applied to the latter estimator to formally establish identification of $Y_{ij}^*$ in this case.
\end{Rem}

\begin{Rem}
  To the best of our knowledge, identifiability of the error free outcomes $Y_{ij}^*$ is a new result to the econometrics literature on identification of network and, more generally, two-way models. Moreover, Theorem \ref{the: Y star UC} also contributes to the statistics literature on graphon and, more generally, the latent space model estimation. Specifically, most of the previous work focused on establishing consistency and deriving rates in terms of the mean squared error (MSE) for $\hat Y^*$ (e.g., \citealp{Chatterjee2015,Gao2015,Klopp2017,Zhang2017,li2019nearest}). Theorem \ref{the: Y star UC} contributes to this literature by establishing $\Vert{\tilde Y^* - Y^*}\Vert_{\max} = o_p(1)$, i.e., demonstrating consistency of $\tilde Y^*$ in the max norm.
\end{Rem}

Another implication of Theorem \ref{the: Y star UC} is that the pair-specific fixed effects $g(\xi_i,\xi_j)$ can also be consistently estimated and, hence, are identified for all pair of agents $i$ and $j$. Consider  
\begin{align}
  \label{eq: hat g def}
  \hat g_{ij} = \tilde Y_{ij}^* - W_{ij}' \hat \beta,
\end{align}
where $\tilde Y_{ij}^*$ is given by \eqref{eq: proper tilde Y}. Since we have already demonstrated consistency of $\hat \beta$ and uniform consistency of $\tilde Y_{ij}^*$, $\hat g_{ij}$ is also uniformly consistent for $g_{ij} \coloneqq g(\xi_i,\xi_j)$. 

\begin{Cor}
  \label{cor: ID of g}
  Suppose that the hypotheses of Theorem \ref{the: Y star UC} are satisfied. Also, suppose that $\hat \beta - \beta_0 = o_p(1)$. Then, $\max_{i,j} \abs{\hat g_{ij} - g_{ij}} = o_p(1)$, where $\hat g_{ij}$ is given by \eqref{eq: hat g def}.
\end{Cor}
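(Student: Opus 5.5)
The plan is to use a triangle inequality and then invoke Theorem~\ref{the: Y star UC} together with the boundedness of $w$. By \eqref{eq: hat g def} and the identity $g_{ij} = Y_{ij}^* - W_{ij}'\beta_0$, we can write, for every pair $(i,j)$,
\begin{align*}
  \hat g_{ij} - g_{ij} = (\tilde Y_{ij}^* - Y_{ij}^*) - W_{ij}'(\hat \beta - \beta_0).
\end{align*}
Taking absolute values and the maximum over all $i,j$ gives
\begin{align*}
  \max_{i,j} \abs{\hat g_{ij} - g_{ij}} \leqslant \max_{i,j} \vert \tilde Y_{ij}^* - Y_{ij}^* \vert + \max_{i,j} \norm{W_{ij}} \norm{\hat \beta - \beta_0}.
\end{align*}
The first step is to control the two terms on the right separately.

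The first term is $o_p(1)$ directly by Theorem~\ref{the: Y star UC}, whose hypotheses are assumed. For the second term, Assumption~\ref{ass: basic}\eqref{item: w} states that $w$ is bounded. Hence there is a constant $\overline W$ with $\norm{W_{ij}} = \norm{w(X_i,X_j)} \leqslant \overline W$ for all $i,j$, uniformly in $n$. The point that matters is that this bound does not deteriorate as the number of pairs grows. Combined with the hypothesis $\hat \beta - \beta_0 = o_p(1)$, the second term is therefore at most $\overline W \cdot o_p(1) = o_p(1)$. Since the sum of two $o_p(1)$ terms is $o_p(1)$, the conclusion follows.

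The diagonal case $i=j$ needs a separate check because $\tilde Y_{ii}^*$ is defined through the neighborhoods even though $Y_{ii}$ is unobserved. I will verify that the convention $Y_{ii}^* = w(X_i,X_i)'\beta_0 + g(\xi_i,\xi_i)$ is the one covered by Theorem~\ref{the: Y star UC}. That theorem is stated with $\max_{i,j}$, so the diagonal is included and no separate argument is needed.

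I do not expect a genuine obstacle. All of the difficulty sits in Theorem~\ref{the: Y star UC} and in the consistency of $\hat\beta$, which are taken as given here. The only care required is to make sure the bound on $\norm{W_{ij}}$ is uniform over all pairs, and the boundedness of $w$ supplies exactly that.
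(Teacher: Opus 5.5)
Your proof is correct and is the same argument the paper uses. The paper justifies the corollary in one line, and your decomposition $\hat g_{ij}-g_{ij}=(\tilde Y_{ij}^*-Y_{ij}^*)-W_{ij}'(\hat\beta-\beta_0)$, combined with Theorem~\ref{the: Y star UC}, $\hat\beta-\beta_0=o_p(1)$, and boundedness of $w$, is exactly that one-line reasoning spelled out.
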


Establishing nonparametric identification of the pair-specific fixed effects $g_{ij}$'s is another contribution of the paper. This result is also of high empirical importance since in certain applications, the fixed effects are the primary object of interest.

\section{Extensions}
\label{sec: ext}

\subsection{Identification of single index and nonparametric models}
\label{ssec: NID}
In this section, we extend the identification arguments of Section \ref{sec: ID} to cover a wide range of network models, both semiparametric and nonparametric, beyond the model \eqref{eq: Y_ij def}.

First, recall that, as discussed in Section~\ref{ssec: Y star ID}, the error free outcomes are still identified in the most general analogue of \eqref{eq: Y_ij def} given by
\begin{align}
  \label{eq: general Y def}
  Y_{ij} = f(X_i,\xi_i,X_j,\xi_j) + \varepsilon_{ij}, \quad  \ex{\varepsilon_{ij}|X_i,\xi_i,X_j,\xi_j} = 0.
\end{align}
For example, we can formally establish identification of $Y_{ij}^* = f(X_i,\xi_i,X_j,\xi_j)$ by constructing a version of $\tilde Y_{ij}^*$ introduced in \eqref{eq: proper tilde Y}, treating $X$ as unobserved, and and showing its uniform consistency using the result of Theorem \ref{the: Y star UC}.

However, identification of $Y_{ij}^*$, \emph{the value of} $f(X_i,\xi_i,X_j,\xi_j)$, for any pair of agents $i$ and $j$ is fundamentally different from identification of \emph{function} $f$. Importantly, $Y_{ij}^*$ is not a causal object and cannot be directly employed in counterfactual analysis. Moreover, since $\xi$ is not observed, function $f$ or any features of it relevant for counterfactual analysis cannot be identified unless some additional structure is imposed on $f$.

For example, in Sections~\ref{sec: ID}-\ref{sec: LST}, we demonstrated that $\beta_0$ is identified and can be consistently estimated when $f(X_i,\xi_i,X_j,\xi_j) = W_{ij}'\beta_0 + g(\xi_i,\xi_j)$, which allows us to recover the ceteris paribus effect of $W_{ij}$ (or $X_{ij}$) on $Y_{ij}$ in this model. Below, we extend these identification results to more general forms of $f$ covering nonlinear and nonparametric models.

\subsubsection{Identification of the semiparametric single index model}
\label{sssec: single index}
One empirically relevant generalization of \eqref{eq: Y_ij def} is allowing $f$ to have a single index structure
\begin{align}
  \label{eq: single index}
  f(X_i,\xi_i,X_j,\xi_j) = F(W_{ij}' \beta_0 + g(\xi_i,\xi_j)),
\end{align}
where $F(\cdot)$ is a known invertible link function. Notice that the presence of the link function $F(\cdot)$ ensures that \eqref{eq: single index} is flexible enough to cover a wide range of the previously studied nonlinear network models. For example, consider the following network formation model
\begin{align*}
  Y_{ij} = \ind\{W_{ij}'\beta_0 + g(\xi_i,\xi_j) - U_{ij} \geqslant 0\},
\end{align*}
where $Y_{ij}$ is a binary variable indicating whether agents $i$ and $j$ are connected or not, and $U_{ij}$'s are iid draws from some distribution, e.g., logistic or $N(0,1)$. This model is covered by \eqref{eq: single index} with $F(\cdot)$ standing for 
the CDF of $U_{ij}$. If $F(\cdot) = \exp(\cdot)$, then \eqref{eq: single index} generalizes the dyadic Poisson regression model commonly used to analyze trade networks.

The previously developed arguments can be immediately applied to establish identification of $\beta_0$ and to construct an analogue estimator. First, note that since $Y_{ij}^* = F(W_{ij}' \beta_0 + g(\xi_i,\xi_j))$ is identified and $F(\cdot)$ is invertible, we can also identify
\begin{align*}
  \mathcal Y_{ij}^* \coloneqq W_{ij}'\beta_0 + g(\xi_i,\xi_j) = F^{-1} (Y_{ij}^*). 
\end{align*}
Next, since $\mathcal Y_{ij}^*$ are effectively observed, we are back in the additively separable setting previously studied in Sections~\ref{sec: ID}-\ref{sec: LST} with $\mathcal Y_{ij}^*$ replacing $Y_{ij}^*$, and all the relevant features of the model such as $\beta_0$ and the pair-specific fixed effects $g_{ij} = g(\xi_i,\xi_j)$ are identified for all $i$ and~$j$. 

In particular, we can still estimate $\beta_0$ as in \eqref{eq: hat beta def} with $\hat {\mathcal Y}_{ij}^*$ replacing $Y_{ij}$, using $\hat d_{ij}^2$ given by \eqref{eq: hat d star def} with $\hat {\mathcal Y}_{ij}^*$ replacing $\hat Y_{ij}^*$. In this case, the preliminary step involves denoising the observed outcomes by constructing $\hat Y_{ij}^*$ as before, and then computing $\hat{\mathcal Y}_{ij}^* = F^{-1} (\hat Y_{ij}^*)$.

Finally, we want to highlight the importance of identification of the pair-specific fixed effects $g_{ij} = g(\xi_i,\xi_j)$ in the model \eqref{eq: single index}. Since $F(\cdot)$ is potentially nonlinear, the knowledge of $\beta_0$ alone is not sufficient for identifying some policy relevant quantities such as partial effects. However, since the fixed effects $g_{ij}$'s are identified for all pairs of agents, we can also identify both pair-specific and average partial effects as well as other policy relevant counterfactuals.

\subsubsection{Identification of the nonparametric model}
In the previously considered models \eqref{eq: Y_ij def} and \eqref{eq: single index}, the contribution of observables is parameterized by $\beta_0 \in \mathbb R^p$, which was the primary object of interest in our analysis. In this section, we consider a nonparametric version of \eqref{eq: Y_ij def} with $f$ given by
\begin{align}
  \label{eq: simple separable}
  f(X_i,\xi_i,X_j,\xi_j) = \msh (X_i, X_j) + g(\xi_i,\xi_j),
\end{align}
where both $\mathscr h: \mathcal X \times \mathcal X \rightarrow \mathbb R$ and $g: \mathcal E \times \mathcal E \rightarrow \mathbb R$ are unknown symmetric functions.

In Sections~\ref{sec: ID}-\ref{sec: LST}, we considered a special case of \eqref{eq: simple separable} with $\msh (X_i,X_j) = w(X_i,X_j)' \beta_0$ and established identification of $\beta_0$. In this section, we will show that, in the studied setting, function $\msh(\cdot, \cdot)$ is nonparametrically identified. Importantly, nonparametric identification of $\msh (\cdot,\cdot)$ implies that the previously obtained identification results were not driven by the parametric restrictions or linearity imposed on $\msh (\cdot, \cdot)$, justifying our focus on the semiparametric model \eqref{eq: Y_ij def} as a practical approximation of the general nonparametrically identified model \eqref{eq: simple separable}.

We establish identification of $\msh (\cdot, \cdot)$ and $g_{ij}$ in \eqref{eq: simple separable} under the following assumption.

\begin{Ass}
  \label{ass: nonparam w}
  Suppose that \eqref{eq: simple separable} holds and
  \begin{enumerate}[(i)]
    \item \label{item: w norm} $\mathscr h: \mathcal X \times \mathcal X \rightarrow \mathbb R$ is a symmetric measurable function, and $\mathscr h(x,x) = 0$ for all $x \in \mathcal X$; 
    \item \label{item: w xi supp} For any $x, \tilde x \in \supp{X}$, there exists ${\mathcal E}_{x,\tilde x} \subseteq \mathcal E$ such that $\pr(\xi \in {\mathcal E}_{x,\tilde x}|X = x) > 0$ and $\pr(\xi \in {\mathcal E}_{x,\tilde x}|X = \tilde x) > 0$.
  \end{enumerate} 
\end{Ass}

\noindent
\textit{Discussion of Assumption \ref{ass: nonparam w}\eqref{item: w norm}.} The requirement $\mathscr h(x,x) = 0$ is a normalization. Indeed, since $g(\cdot,\cdot)$ and the dimension of $\xi$ are not specified, it is without loss of generality to let ${g(\xi_i,\xi_j) = \alpha_i + \alpha_j + \psi(\theta_i,\theta_j)}$, where $\psi(\cdot,\cdot)$ is symmetric, and $\xi = (\alpha, \theta')'$. Consider
\begin{align}
  \label{eq: unnorm nonparam}
  f(X_i,\xi_i,X_j,\xi_j) = \mathscr h(X_i,X_j) + \alpha_i + \alpha_j + \psi(\theta_i,\theta_j),
\end{align}
where $\mathscr h(\cdot,\cdot)$ is symmetric. Then, we can construct an observationally equivalent model with
\begin{align}
  \label{eq: norm nonparam}
  \tilde f (X_i,\tilde \xi_i,X_j,\tilde \xi_j) &= \tilde{\mathscr h}(X_i,X_j) + \tilde \alpha_i + \tilde \alpha_j + \psi(\theta_i,\theta_j),
  \\
  \tilde{\mathscr h}(X_i,X_j) &= \msh(X_i,X_j) - (\msh(X_i,X_i) + \msh(X_j,X_j))/2,\notag
\end{align}
where $\tilde \xi_i = (\tilde \alpha_i, \theta_i')'$ with $\tilde \alpha_i = \alpha_i + \msh (X_i,X_i)/2$. Note that $\tilde{\mathscr h}(\cdot,\cdot)$ is symmetric, continuous, and satisfies $\tilde{\mathscr h}(x,x) = 0$ for all $x \in \mathcal X$. Since $\tilde f(X_i,\tilde \xi_i, X_j, \tilde \xi_j) = f(X_i,\xi_i,X_j,\xi_j)$ for any pair of agents $i$ and $j$, the normalized model \eqref{eq: norm nonparam} is equivalent to the original model \eqref{eq: unnorm nonparam}. \QEDA

\begin{Rem}
  While the normalization introduced in Assumption \ref{ass: nonparam w}\eqref{item: w norm} is not the only possible one, it is natural for network models, especially when $\mathscr h(X_i,X_j)$ captures homophily based on observables (e.g., similar normalizations are also imposed in \citealp{Toth2017} and \citealp{gao2020nonparametric}).
\end{Rem}

First, we argue that $\msh (x, \tilde x)$ is identified for any fixed $x, \tilde x \in \supp{X}$. Specifically, fix $X_i = x$ and $X_j = \tilde x$, and consider
\begin{align}
  \mathscr d_{ij}^2 (\mu; x,\tilde x) \coloneqq   &\ex{\left(Y_{ik}^* - Y_{jk}^* + \mu \right)^2|(X_i,X_j,X_k) = (x, \tilde x, x),\xi_i,\xi_j} \notag \\
  &+ \ex{\left(Y_{ik}^* - Y_{jk}^* - \mu \right)^2|(X_i,X_j,X_k) = (x, \tilde x, \tilde x),\xi_i,\xi_j}, \notag \\
   \mathscr d_{ij}^2 (x,\tilde x) \coloneqq &\min_{\mu \in \mathbb R} d_{ij}^2 (\mu; x,\tilde x),  \quad \quad \quad  \mu^*_{ij}(x, \tilde x) \coloneqq \argmin_{\mu \in \mathbb R} \mathscr d_{ij}^2 (\mu; x,\tilde x). \label{eq: nonparam d}
\end{align}
Note that since $Y_{ik}^*$ and $Y_{jk}^*$ are identified in the general model \eqref{eq: general Y def}, $\msd_{ij}^2 (x, \tilde x)$ and $\mu_{ij}^*(x, \tilde x)$ are also identified.

$\msd_{ij}^2(x, \tilde x)$ is a nonparametric analogue of the previously considered pseudo-distance $d_{ij}^2$. Since we are interested in identifying $\msh (x, \tilde x)$, we additionally condition on $(X_i,X_j) = (x, \tilde x)$ and only consider $X_k = x$ and $X_k = \tilde x$. Notice that when $\xi_i = \xi_j$, then we have
\begin{align*}
  \msd_{ij}^2 (x,\tilde x) = \min_{\mu \in \mathbb R} \left( (-\msh(x,\tilde x) + \mu)^2 + (\msh(x,\tilde x) - \mu)^2 \right) = 0,
\end{align*}
where the minimum is achieved at $\mu_{ij}^*(x,\tilde x) = \msh (x, \tilde x)$. The following lemma establishes that the converse is also true: if $\msd_{ij}(x, \tilde x)^2 = 0$, then we also necessarily have $\mu_{ij}^*(x, \tilde x) = \msh (x, \tilde x)$.

\begin{Lem}
  \label{lem: nonparam ID}
  Suppose that Assumption \ref{ass: nonparam w} holds, and that the expectations in \eqref{eq: nonparam d} exist for any agents $i$ and $j$, and for any $x, \tilde x \in \mathcal X$. Then, for any $x, \tilde x \in \supp{X}$, $x \neq \tilde x$, $\msd_{ij}^2 (x,\tilde x) = 0$ implies that $\mu_{ij}^*(x, \tilde x) = \msh (x, \tilde x)$.
\end{Lem}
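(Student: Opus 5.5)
The approach is to write $\msd_{ij}^2(\mu;x,\tilde x)$ explicitly under the separable structure \eqref{eq: simple separable}, show that its vanishing forces an unobserved-heterogeneity term to take two opposite constant values, and then use the overlap in Assumption \ref{ass: nonparam w}\eqref{item: w xi supp} to equate those two constants.

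\textbf{Step 1: explicit form.} Fix $X_i=x$, $X_j=\tilde x$ with $x\neq\tilde x$, and define
$$\Delta_{ij}(\xi)\coloneqq g(\xi_i,\xi)-g(\xi_j,\xi).$$
By \eqref{eq: simple separable}, symmetry of $\msh$ and the normalization $\msh(x,x)=\msh(\tilde x,\tilde x)=0$:
\begin{itemize}
\item when $X_k=x$, we get $Y_{ik}^*-Y_{jk}^* = -\msh(x,\tilde x)+\Delta_{ij}(\xi_k)$;
\item when $X_k=\tilde x$, we get $Y_{ik}^*-Y_{jk}^* = \msh(x,\tilde x)+\Delta_{ij}(\xi_k)$.
\end{itemize}
Writing $a\coloneqq\msh(x,\tilde x)-\mu$, this gives
\begin{align*}
\msd_{ij}^2(\mu;x,\tilde x)=\e\big[(\Delta_{ij}(\xi_k)-a)^2\,\big|\,X_k=x\big]+\e\big[(\Delta_{ij}(\xi_k)+a)^2\,\big|\,X_k=\tilde x\big].
\end{align*}

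\textbf{Step 2: consequences of $\msd_{ij}^2(x,\tilde x)=0$.} Since $\mu^*_{ij}(x,\tilde x)$ is the minimizer, both expectations vanish at $\mu=\mu^*_{ij}$. Let $a^*\coloneqq\msh(x,\tilde x)-\mu^*_{ij}$. Then
\begin{itemize}
\item $\Delta_{ij}(\xi)=a^*$ for $P_{\xi|X=x}$-almost every $\xi$;
\item $\Delta_{ij}(\xi)=-a^*$ for $P_{\xi|X=\tilde x}$-almost every $\xi$.
\end{itemize}
The two terms are nonnegative quadratics in $\mu$, and their sum is strictly convex because its $\mu^2$ coefficient is $2$. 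Hence the minimizer is unique, and $\mu^*_{ij}$ is well defined.

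\textbf{Step 3: overlap.} It remains to find a single $\xi$ at which both identities hold. That gives $a^*=-a^*$, so $a^*=0$ and $\mu^*_{ij}(x,\tilde x)=\msh(x,\tilde x)$.

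This is the main obstacle. The almost-sure sets from the two conditional laws need not intersect merely because $\mathcal E_{x,\tilde x}$ has positive mass under each law. I would handle it with continuity:
\begin{itemize}
\item By Assumption \ref{ass: basic}\eqref{item: lipschitz g}, $\Delta_{ij}(\cdot)$ is Lipschitz, hence continuous.
\item An almost-sure identity for a continuous function extends to the whole support. So $\Delta_{ij}\equiv a^*$ on $\supp{\xi|X=x}$ and $\Delta_{ij}\equiv -a^*$ on $\supp{\xi|X=\tilde x}$.
\item Assumption \ref{ass: nonparam w}\eqref{item: w xi supp} is meant to say that the two conditional supports share a common part $\mathcal E_{x,\tilde x}$. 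Any point of that common part gives $a^*=-a^*$.
\end{itemize}
If one prefers not to use Lipschitz continuity, the alternative is to read the overlap condition as requiring the two conditional laws restricted to $\mathcal E_{x,\tilde x}$ to be mutually non-singular. Then the two full-measure sets intersect inside $\mathcal E_{x,\tilde x}$ and the same conclusion follows.
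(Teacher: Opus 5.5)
Your proposal is correct and follows essentially the paper's argument. Both show that $\msd_{ij}^2(x,\tilde x)=0$ forces $g(\xi_i,\cdot)-g(\xi_j,\cdot)$ to equal $\msh(x,\tilde x)-\mu^*_{ij}(x,\tilde x)$ on the support of $\xi|X=x$ and to equal its negative almost surely under $\xi|X=\tilde x$, and then use the overlap set $\mathcal E_{x,\tilde x}$ to conclude. The paper phrases this by contradiction, via the lower bound $4\pr(\xi_k\in\mathcal E_{x,\tilde x}|X_k=\tilde x)(\msh(x,\tilde x)-\mu^*_{ij}(x,\tilde x))^2>0$ on the second term.

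You also make explicit the step from an almost-sure identity to one holding on the whole support, using Lipschitz continuity of $g$ from the standing Assumption \ref{ass: basic} or the non-singularity reading of the overlap condition. The paper simply asserts this step, and its proof likewise reads Assumption \ref{ass: nonparam w}\eqref{item: w xi supp} as a genuine support overlap, as you do.
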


Lemma \ref{lem: nonparam ID} establishes identification of $\msh (x, \tilde x)$ by showing that $\msh (x, \tilde x) = \mu^*_{ij}(x, \tilde x)$ for all agents $i$ and $j$, with $X_i = x$ and $X_j = \tilde x$, satisfying $\msd_{ij} (x, \tilde x) = 0$; notice that Assumption~\ref{ass: nonparam w}\eqref{item: w xi supp} guarantees that such agents exist. Moreover, since the value of $\mu_{ij}^*(x,\tilde x)$ should be the same for all such $i$ and $j$ (and this is true for any fixed $x, \tilde x \in \supp{X}$), the nonparametric model~\eqref{eq: simple separable} as well as the previously considered models including~\eqref{eq: Y_ij def} are overidentified, and, in principle, can be falsified.

Next, notice that since $\msh (X_i,X_j)$ is identified for any pair of agents $i$ and $j$, we can also identify the pair-specific fixed effect as $g_{ij} = Y_{ij}^* - \msh (X_i,X_j)$. As a result, we conclude that both $\msh$ and the pair-specific fixed effects are nonparametrically identified in model \eqref{eq: simple separable}.

Finally, consider a nonlinear single index version of \eqref{eq: simple separable}
\begin{align}
  \label{eq: general additive}
  f(X_i,\xi_i,X_j,\xi_j) = F(\mathscr h(X_i,X_j) + g(\xi_i,\xi_j)),
\end{align}
where $F(\cdot)$ is a known and invertible link function. This model is a nonparametric version of \eqref{eq: single index} considered in Section~\ref{sssec: single index}. By first constructing $\mathcal Y_{ij}^* = F^{-1}(Y_{ij}^*) = \msh (X_i, X_j) + g(\xi_i,\xi_j)$ and then applying the results of this section, we conclude that $\msh $, the fixed effects $g_{ij}$'s, as well as both pair-specific and average partial effects are also identified in \eqref{eq: general additive}.

\subsection{Incorporating missing outcomes}
\label{ssec: missing}
From the beginning of Section \ref{sec: ID}, to simplify the exposition and facilitate the formal analysis, we assumed that $\{Y_{ij}\}_{i \neq j}$ are observed for all pairs of agents $i$ and $j$. While this assumption is standard in the network formation context, where the absence of an interaction between agents $i$ and $j$ is still recorded as observing $Y_{ij} = 0$, in many other applications, interaction outcomes are available only for a limited number of pairs of agents (e.g., in the matched employer-employee setting). Hence, it is important to discuss (i) how to properly adjust the constructed estimators to account for missing outcomes, and (ii) under which conditions the proposed method remains valid in this case. For simplicity, we will stick with considering an undirected network as before. We will discuss directed networks and general two-way settings covering the important matched employer-employee example in Section~\ref{ssec: two-way} below.

Let $D_{ij}$ be a binary variable such that $D_{ij} = 1$ if $Y_{ij}$ is observed and $D_{ij} = 0$ otherwise, with $D$ denoting the resulting adjacency matrix (by construction, $D_{ii} = 0$). Also, let ${\mathcal O_{ij} \coloneqq \{k: D_{ik} = D_{jk} = 1 \}}$ denote a set of agents $k$ such that $Y_{ik}$ and $Y_{jk}$ are observed.

To fix ideas, consider the homoskedastic estimator first. We adjust \eqref{eq: hat q_ij def} and \eqref{eq: hat beta def} as
\begin{gather}
  \hat q_{ij}^2 = \min_{\beta \in \mathcal B} \frac{1}{\abs{\mathcal O_{ij}}} \sum_{k \in \mathcal O_{ij}} (Y_{ik} - Y_{jk} - (W_{ik} - W_{jk})' \beta)^2, \label{eq: general q} \\
  \hat \beta =\left(\sum_{i < j} K\left(\frac{\hat d_{ij}^2}{h_n^2}\right) \sum_{k \in \mathcal O_{ij}} \Delta W_{ijk} \Delta W_{ijk}' \right)^{-1} \left(\sum_{i < j} K\left(\frac{\hat d_{ij}^2}{h_n^2}\right) \sum_{k \in \mathcal O_{ij}} \Delta W_{ijk} \Delta Y_{ijk} \right), \label{eq: general beta}
\end{gather}
where $\hat d_{ij}^2$ is still computed as in \eqref{eq: hat d def}, and $\abs{\mathcal O_{ij}}$ denotes the cardinality of $\mathcal O_{ij}$. Notice that to consistently estimate $q_{ij}^2$, we need $\abs{\mathcal O_{ij}} \rightarrow \infty$. Hence, in practice one may want to limit their attention to pairs of agents for which $\abs{\mathcal O_{ij}}$ is sufficiently large for $\hat q_{ij}^2$ to be informative.

Next, we want to discuss under which conditions the proposed method, with appropriate modifications as described above, remains valid when some interaction outcomes are missing. First, the selection mechanism needs to be exogenous conditional on the observed and unobserved characteristics of agents $\{(X_i,\xi_i)\}_{i=1}^n$, meaning that the $D$ should be (conditionally) independent of the errors $\{\varepsilon_{ij}\}$.\footnote{This assumption is satisfied if agents form connections based on $\{(X_i,\xi_i)\}_{i=1}^n$ but not on the idiosyncratic errors. For example, in a structural model, it can be rationalized if $\varepsilon_{ij}$'s are drawn after the network is formed.} This assumption is standard in the network regression literature including the seminal AKM model of \citet{Abowd1999} and subsequent work, assuming exogeneity of the employer-employee network conditional on the firms' and workers' \emph{additive} fixed effect (and their observed characteristics). While this assumption is widely used in the AKM literature, it is also often criticized for severely restricting the network formation process. We want to stress that, since we allow for a much more general form of unobserved heterogeneity than the additive AKM model, the network exogeneity assumption is substantially less restrictive in our setting. Specifically, since we do not specify the dimensionality of the fixed effects and their role in the model (e.g., we allow for complementarity between the firm and worker fixed effects), our framework allows for a much broader class of selection mechanism compatible with the exogeneity assumption. Thus, rather than seeing network endogeneity as a potential threat to the validity of our approach, we consider it to be a new tool addressing this concern.

Second, for $\hat \beta$ in \eqref{eq: general beta} to be consistent, we need to have a growing number of pairs $i$ and $j$, for which $\abs{\mathcal{O}_{ij}} \rightarrow \infty$ as $n \rightarrow \infty$. Specifically, the requirement $\abs{\mathcal{O}_{ij}} \rightarrow \infty$ ensures that we can consistently estimate $\hat d_{ij}^2$ for a growing group of agents. At the same time, a growing pool of potential matches allows us to find increasingly similar agents controlling the bias of~$\hat \beta$ and ensuring its consistency. Notice that this requirement  still allows the network to be sparse, and that its adequacy can be evaluated in a given application. Moreover, in Section~\ref{ssec: two-way}, we will argue that this requirement becomes even less restrictive in general two-way settings and is plausible for many matched employer-employee data sets.

In the general heteroskedastic case, the first estimation step is to construct $\hat Y_{ij}^*$. In fact, recent developments in the matrix completion literature allow one to consistently estimate $Y^*$ (in terms of the MSE), even when the observed matrix $Y$ is sparse (e.g., \citealp*{Chatterjee2015,Klopp2017,li2019nearest}). Once $\hat Y_{ij}^*$ are constructed (for example, using one of the already developed matrix completion techniques), the rest of the estimation procedure remains the same. We provide an appropriate modification of the previously used estimator of $\hat Y_{ij}^*$ and discuss estimation of $\beta_0$ in more detail in Appendix~\ref{sec: missing data imputation}.

\subsection{Extension to directed networks and two-way models}
\label{ssec: two-way}
Finally, the proposed estimation procedure can also be generalized to cover directed networks and, more generally, two-way models. Specifically, consider a general interaction model
\begin{align*}
	Y_{ik} = W_{ik}' \beta_0 + g(\xi_i,\eta_k) + \varepsilon_{ik},
\end{align*}
where $i \in \mathcal I$ and $k \in \mathcal K$ index senders and receivers, and $\xi_i$ and $\eta_k$ denote the sender and receiver fixed effects. As in Section~\ref{ssec: missing}, we can also allow for missing interactions, with $Y_{ik}$ observed whenever $D_{ik} = 1$.

As before, our identification and estimation strategies are based on finding agents with similar values of the fixed effects. However, the considered interaction model consists of two types of agents, senders and receivers (e.g., firms and workers). As a result, we have the flexibility to decide whether we want to match senders or receivers depending on the context. For example, consider the sender-to-sender approach. To fix ideas, we also focus on the homoskedastic setting (the general heteroskedastic estimator can be constructed in a similar fashion). In this case, the sender-to-sender estimator of $\beta_0$ has exactly the same form as in \eqref{eq: general q}-\eqref{eq: general beta} previously provided in Section~\ref{ssec: missing}

As discussed in the previous section, for the sender-to-sender estimator to be consistent, we need a growing pool of senders that we can potentially match satisfying the same requirement $\abs{\mathcal O_{ij}} \rightarrow \infty$, so we can consistently determine if senders $i$ and $j$ are similar in terms of $\xi$ or not. Also notice that in this case receivers are allowed to participate only in a few interactions. For example, this suggests that in the matched employer-employee setting, the firm-to-firm approach can be appropriate even when the workers' mobility is limited, so long as we have a sufficient number of pairs of firms with a sufficient number of workers moving from one of them to another.

\section{Numerical Evidence}
\label{sec: numerical}
\subsection{Numerical experiment in a homoskedastic model}
\label{ssec: MC} 
In this section, we illustrate the finite sample properties of the proposed estimators. Specifically, we consider the following homoskedastic variation of \eqref{eq: Y_ij def}:
\begin{align}
	\label{eq: MC homo}
	Y_{ij} = (X_i - X_j)^2 \beta_0 - (\xi_i - \xi_j)^2 + \varepsilon_{ij}, \quad
  \begin{pmatrix} X_i \\ \xi_i  \end{pmatrix} \sim N \left( \begin{pmatrix} 0\\ 0  \end{pmatrix} , \begin{pmatrix} 1 & \rho \\ \rho & 1  \end{pmatrix} \right),
\end{align}
where $\{\varepsilon_{ij}\}_{i < j}$ are independent draws from $N(0,1)$. The true value of the parameter of interest is $\beta_0 = -1$, so the considered model features homophily based on both $X$ and $\xi$.

We study the performance of the following estimators. The first estimator $\hat \beta_{\text{FE}}$ is produced by the standard linear regression with additive fixed effects. The second estimator $\hat \beta$ is the kernel based estimator \eqref{eq: hat beta def} with $\hat d_{ij}^2$ computed as in \eqref{eq: hat d def} and using the Epanechnikov kernel.\footnote{The reported results are robust to the kernel choice.} We choose $h_n^2 = 0.9 \min \left\{\hat \sigma_{\hat d^2}, \text{IQR}_{\hat d^2}/1.349\right\} \binom{n}{2}^{-1/5}$
following the standard (kernel density estimation) rule of thumb. Here $\hat \sigma_{\hat d^2}$ and $\text{IQR}_{\hat d^2}$ stand for the standard deviation and the interquartile range of the estimated pseudo-distances $\{\hat d_{ij}^2\}_{i < j}$, and $\binom{n}{2}$ corresponds to the number of the estimated pseudo-distances.\footnote{Note that since the kernel weights in \eqref{eq: hat beta def} are $K(\frac{\hat d_{ij}^2}{h_n^2})$, the ``effective'' kernel density estimation bandwidth applied to $\{\hat d_{ij}^2\}_{i < j}$ is $h_n^2$, not $h_n$.} Finally, we also compute the 1 nearest neighbor pairwise-difference estimator $\hat \beta_{\text{NN1}}$, which, instead of using kernel weights as in \eqref{eq: hat beta def}, matches every unit $i$ with exactly one unit $j$ closest to it in terms of $\hat d_{ij}^2$.  

We simulate the model \eqref{eq: MC homo} for $n \in \{30, 50, 100\}$ and $\rho \in \{0, 0.3 , 0.5, 0.7\}$. The simulated finite sample properties of the considered estimators are reported in Table \ref{tab: MC 1} below. The number of replications is 10,000. The naive estimator $\hat \beta_{\text{FE}}$ is biased whenever the observed and unobserved characteristics of agents are correlated. The magnitude of this bias increases rapidly as $\rho$ grows. The proposed estimators $\hat \beta$ and $\hat \beta_{\text{NN1}}$ effectively remove the bias even in networks of a moderate size with $n = 30$. Notice that the magnitudes of the bias for $\hat \beta$ and $\hat \beta_{\text{NN}1}$ are approximately the same but the kernel based estimator $\hat \beta$ is consistently less dispersed. This might suggest that in the studied setting, the 1 nearest neighbor estimator $\hat \beta_{\text{NN}1}$ tends to undersmooth. Finally, notice that the proposed estimators $\hat \beta$ and $\hat \beta_{\text{NN}1}$ dominate the naive estimator $\hat \beta_{\text{FE}}$ not only in terms of the bias but also in terms of the standard deviation/IQR even when $\rho = 0$, i.e., when $\hat \beta_{\text{FE}}$ is consistent. Indeed, when the fixed effects contribution to the variability in $Y_{ij}$ is large, controlling for the unobservables (as both $\hat \beta$ and $\hat \beta_{\text{NN1}}$ do by differencing them out) can substantially improve precision even at the cost of significantly reducing the effective sample size.
 
\begin{table}[h!]
\begin{center}
\begin{footnotesize}
\begin{threeparttable}
\caption{Simulation results for model \eqref{eq: MC homo}}
\label{tab: MC 1}
\begin{tabular}{ c| c c c| c c c| c c c| c c c }
& \multicolumn{3}{c|}{Bias} & \multicolumn{3}{c|}{Med Bias} & \multicolumn{3}{c|}{Std. Dev.} & \multicolumn{3}{c}{IQR/1.349} \\
$\rho$ & $\hat \beta_{\text{FE}}$ & $\hat \beta$ & $\hat \beta_{\text{NN1}}$ & $\hat \beta_{\text{FE}}$ & $\hat \beta$ & $\hat \beta_{\text{NN1}}$  & $\hat \beta_{\text{FE}}$ & $\hat \beta$ & $\hat \beta_{\text{NN1}}$  & $\hat \beta_{\text{FE}}$ & $\hat \beta$ & $\hat \beta_{\text{NN1}}$ \\
\midrule
\multicolumn{13}{c}{$n=30$}\\
\midrule
0.0&0.000&-0.001&0.010&0.010&0.000&0.006&0.063&0.028&0.050&0.044&0.025&0.036\\
0.3&-0.090&-0.006&0.005&-0.061&-0.005&0.004&0.124&0.033&0.060&0.106&0.029&0.044\\
0.5&-0.251&-0.018&-0.009&-0.225&-0.016&-0.006&0.173&0.045&0.077&0.165&0.038&0.059\\
0.7&-0.491&-0.052&-0.058&-0.473&-0.048&-0.048&0.196&0.080&0.109&0.191&0.063&0.092\\
\midrule
\multicolumn{13}{c}{$n=50$}\\
\midrule
0.0&-0.000&-0.000&0.005&0.006&-0.000&0.003&0.035&0.015&0.029&0.026&0.014&0.021\\
0.3&-0.090&-0.004&0.004&-0.072&-0.004&0.002&0.090&0.018&0.036&0.085&0.016&0.027\\
0.5&-0.251&-0.013&-0.005&-0.235&-0.012&-0.004&0.130&0.024&0.046&0.128&0.022&0.036\\
0.7&-0.491&-0.036&-0.038&-0.481&-0.034&-0.032&0.148&0.042&0.068&0.147&0.037&0.057\\
\midrule
\multicolumn{13}{c}{$n=100$}\\
\midrule
0.0&-0.000&-0.000&0.003&0.003&-0.000&0.002&0.017&0.007&0.014&0.012&0.007&0.010\\
0.3&-0.091&-0.003&0.002&-0.082&-0.002&0.002&0.061&0.008&0.018&0.058&0.008&0.014\\
0.5&-0.251&-0.008&-0.002&-0.243&-0.007&-0.002&0.089&0.011&0.024&0.087&0.010&0.019\\
0.7&-0.491&-0.021&-0.020&-0.486&-0.021&-0.017&0.102&0.019&0.036&0.100&0.018&0.030\\
\bottomrule
\bottomrule
\end{tabular}
\begin{tablenotes}
\item \scriptsize This table reports the simulated bias, median bias, standard deviation, and interquartile range (divided by 1.349) for the additive fixed effects estimator $\hat \beta_{\text{FE}}$, the kernel estimator $\hat \beta$, and the 1 nearest neighbor estimator $\hat \beta_{\text{NN1}}$. The results are presented for different values $n$ (network size) and $\rho$ (correlation between $X_i$ and $\xi_i$). The number of replications is 10,000.
\end{tablenotes}
\end{threeparttable}
\end{footnotesize}
\end{center}
\end{table}

\subsection{Empirical illustration: homophily in online social networks}
\label{ssec: empirical}
In this section, we illustrate the usefulness of our method in the context of estimating homophily in online social networks using the Facebook 100 data set \citep{traud2012social}. This dataset contains Facebook friendship network as well as nodal covariates (gender, major, graduation year, etc.) collected at 100 colleges and universities in the US in 2005.

Specifically, we consider the following logistic network formation model
\begin{align}
  \label{eq: emp model}
  Y_{ij} = \ind\{\ind\{X_i = X_j\}\beta_0 + g(\xi_i,\xi_j) - U_{ij} \geqslant 0\},
\end{align}
where $Y_{ij}$ is a binary variable indicating whether students $i$ and $j$ are friend or not, $X_i$ is a gender dummy, and $\{U_{ij}\}_{i < j}$ are iid draws from the logistic distribution. In this model, $\beta_0$ captures gender homophily. Notice that \eqref{eq: emp model} can be represented in the regression form as
\begin{align*}
  Y_{ij} = \Lambda(\ind\{X_i = X_j\}\beta_0 + g(\xi_i,\xi_j)) + \varepsilon_{ij},
\end{align*}
where $\Lambda(\cdot)$ stands for the logistic CDF. Note that in the studied network formation model the conditional mean is \emph{nonlinear} and the errors $\varepsilon_{ij}$'s are \emph{heteroskedastic}.

We estimate $\beta_0$ using the following three estimators. The first is $\hat \beta_{\text{MLE}}$, the naive MLE estimator that ignores unobserved heterogeneity. The second is $\hat \beta_{\text{TL}}$, the tetra-logit estimator introduced in \citet{Graham2017} allowing for additive fixed effects, i.e., assuming that ${g(\xi_i,\xi_j) = \xi_i + \xi_j}$. Finally, we also construct the kernel estimator $\hat \beta$ following the procedure described in Section~\ref{sssec: single index}. Specifically, we use $n_i \approx 0.5 (n \ln n)^{1/2}$ for all $i$ for constructing $Y_{ij}^*$, and we choose $K$ and $h_n^2$ as described in Section~\ref{ssec: MC}.

We report results for Princeton University, the class of 2004 (the results are qualitatively similar across the universities and cohorts). This network consists of 541 students with the mean degree of 37.6, i.e., on average the students in this sample are connected with about 7\% of the whole class, so the studied network is characteristically sparse. 

The results are reported in Table~\ref{tab: empirical big} below. Both the MLE and tetra-logit homophily estimates are much higher compared to the one produced by $\hat \beta$. This is in line with one of the well known perils of estimating homophily effects: naive methods are likely to overestimate homophily associated with observables when agents also exhibit homophily based on their latent characteristics. For example, if students meet their friends in classrooms and gender is predictive of their choices of classes, naive methods might misattribute this effect to gender homophily.

However, since we do not have a confidence interval available for $\hat \beta$, it is not immediately clear if the difference in the produced estimates is systematic or simply attributed to the large sampling uncertainty. In order to address this concern and study the performance of our method in an empirically relevant setting, we perform the following numerical experiment designed to mimic the studied application. Specifically, we consider a version of \eqref{eq: emp model}
\begin{gather*}
    Y_{ij} = \ind\{\ind\{X_i = X_j\}\beta_0 + \alpha_0 - \kappa_0 \abs{\xi_i - \xi_j}  - U_{ij} \geqslant 0\},\\
    X_i \sim Bernoulli(0.5), \quad \xi_i = \pi_0 X_i + V_i, \quad V_i \sim U([-1,1]),
\end{gather*}
where $(X_i,\xi_i,V_i)$ are iid. We choose $(\alpha_0, \beta_0, \kappa_0, \pi_0) = (-1.5, 0.1, 2, 0.2)$ and $n=550$ to match the main features of the data including the average node degree, its standard deviation, as well as the estimates produced by the methods. Note that this model features homophily based on both $X$ and $\xi$, and ignoring the latter would result in overestimation of $\beta_0$.

The simulation results are also reported in Table~\ref{tab: empirical big} below. First, as expected, $\hat \beta_{\text{MLE}}$ and $\hat \beta_{\text{TL}}$ indeed severely overestimate $\beta_0$, and their biases are substantially larger than their standard deviations. At the same time our estimator $\hat \beta$ does not suffer from this bias and correctly estimates the true effect. Moreover, we also find that, in the studied setting, its standard deviation is comparable to the standard deviations of the other estimators.

In summation, this numerical experiment demonstrates that our method can perform well in an empirically relevant setting featuring nonlinearity of the regression function, heteroskedasticity of the errors, and sparsity representative of real world social networks. Its results also support our empirical finding documenting that the standard approaches overestimate the gender homophily effects and suggest that the difference in the estimates is systematic and not likely to be (entirely) attributed to the sampling variability.

\begin{table}[h!]
\begin{center}
\begin{threeparttable}
\caption{{Actual data and empirically calibrated simulation results}}
\label{tab: empirical big}
\begin{tabular}{ l| c c | c c c c }
& \multicolumn{2}{c|}{Actual Data} & \multicolumn{4}{c}{Simulation Results}\\
& Estimate & 95\% CI& Bias & Med. Bias & Std. Dev. & IQR/1.349 \\
\midrule
      $\hat \beta_{\text{MLE}}$ & 0.1792& [0.1387, 0.2196] &0.0814&    0.0802&    0.0297&    0.0307\\
      $\hat \beta_{\text{TL}}$ & 0.1862 & [0.1419, 0.2306] &0.0849&    0.0830&    0.0301&    0.0310\\
      $\hat \beta$ &  0.1106 &  &-0.0060&   -0.0060&    0.0321&    0.0339\\
\bottomrule
\bottomrule
\end{tabular}
\begin{tablenotes}
\item \footnotesize{This table reports the estimates (and whenever available confidence intervals) of the homophily parameter $\beta_0$ in model~\eqref{eq: emp model} for the naive MLE estimator $\hat \beta_{\text{MLE}}$, the tetra-logit estimator $\hat \beta_{\text{TL}}$, and the kernel estimator $\hat \beta$ described in Section~\ref{sssec: single index}. It also reports the simulated bias, median bias, standard deviation, and interquartile range (divided by 1.349) for the same estimators in the considered empirically calibrated numerical experiment. The number of replications is 10,000.}
\end{tablenotes}
\end{threeparttable}
\end{center}
\end{table}

\bibliographystyle{apalike}
\bibliography{library}

\begin{thebibliography}{}

\bibitem[Ahn and Powell, 1993]{Ahn1993}
Ahn, H. and Powell, J.~L. (1993).
\newblock {Semiparametric estimation of censored selection models with a
  nonparametric selection mechanism}.
\newblock {\em Journal of Econometrics}, 58(1-2):3--29.

\bibitem[Arcones, 1995]{arcones1995bernstein}
Arcones, M.~A. (1995).
\newblock A bernstein-type inequality for u-statistics and u-processes.
\newblock {\em Statistics \& Probability Letters}, 22(3):239--247.

\bibitem[Bennett, 1962]{bennett1962probability}
Bennett, G. (1962).
\newblock Probability inequalities for the sum of independent random variables.
\newblock {\em Journal of the American Statistical Association},
  57(297):33--45.

\bibitem[Boucheron et~al., 2013]{boucheron2013concentration}
Boucheron, S., Lugosi, G., and Massart, P. (2013).
\newblock {\em Concentration inequalities: A nonasymptotic theory of
  independence}.
\newblock Oxford university press.

\end{thebibliography}

\newpage

\appendix
\setcounter{page}{1}
\renewcommand{\thepage}{A.\arabic{page}}

\begin{large}
    \begin{center}
       Online Supplementary Appendix to ``Identification and Estimation\\ of Network Models with Nonparametric Unobserved Heterogeneity''
    \end{center}
\end{large}

\begin{large}
    \begin{center}
        Andrei Zeleneev
    \end{center}
\end{large}

\begin{large}
    \begin{center}
        \today
    \end{center}
\end{large}

\section{Proofs}

\subsection{Proofs of the results of Section \ref{ssec: beta rate}}

\subsubsection{Auxiliary lemmas}

\begin{AppLem}
  \label{lem: A aux kernel}
    Let $c_{ij} \coloneqq c(X_i,X_j,\xi_i)$, where $c(X_i,X_j,\xi_i)$ is defined in Assumption~\ref{ass: beta ID}\eqref{item: d* approximation}. Then, under hypotheses of Theorem~\ref{the: kernel rate}, there exists $\alpha > 0$ such that, for sufficiently large $n$, (i) $\abs{\xi_j - \xi_i} > \alpha h_n$ implies that  $K\left({d^2(X_i,X_j,\xi_i,\xi_j)}/{h_n^2}\right) = 0$ and $K\left({c_{ij} (\xi_j - \xi_i)^2}/{h_n^2}\right) = 0$ with probability one; (ii) $\sum_{i < j} K({\hat d^2_{ij}}/{h_n^2})^2 \ind\{\abs{\xi_j - \xi_i} > \alpha h_n\} = 0$ with probability approaching one; (iii) $\abs{r_K(X_i,X_j,\xi_i,\xi_j;h_n)} \leqslant C h_n \ind\{\abs{\xi_j - \xi_i} \leqslant \alpha h_n\}$ a.s. for some $C > 0$, where  $r_K(X_i,X_j,\xi_i,\xi_j;h_n) \coloneqq K({d^2(X_i,X_j,\xi_i,\xi_j)}/{h_n^2}) - K({c_{ij} (\xi_j - \xi_i)^2}/{h_n^2})$.
\end{AppLem}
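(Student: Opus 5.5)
The plan is to combine the support of $K$ on $[0,1]$ with two lower bounds on $d^2$: the local quadratic one from Assumption~\ref{ass: beta ID}\eqref{item: d* approximation} and the global one from Assumption~\ref{ass: beta ID}\eqref{item: d* xi ID}.

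\emph{Lower bound on $d^2$.} Fix a small $\delta_0>0$ with $C\delta_0 \leqslant \underline c/2$, where $C$ is the remainder constant in Assumption~\ref{ass: beta ID}\eqref{item: d* approximation}. For $\abs{\xi_j-\xi_i}\leqslant \delta_0$, that assumption gives
\[
d^2_{ij} \geqslant \underline c(\xi_j-\xi_i)^2 - C\abs{\xi_j-\xi_i}^3 \geqslant \tfrac{\underline c}{2}(\xi_j-\xi_i)^2 .
\]
For $\abs{\xi_j-\xi_i}>\delta_0$, Assumption~\ref{ass: beta ID}\eqref{item: d* xi ID} gives $d^2_{ij}>C_{\delta_0}$ almost surely; this is because $d^2_{ij}$ with $\mathcal B\ni\beta_0$ equals the infimum there after shifting $\beta$. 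We may take $\mathcal B=\mathbb R^p$, or else use that the unrestricted infimum bounds the restricted one from below.

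\emph{Part (i).} Set $\alpha := \max\{(2/\underline c)^{1/2}, \underline c^{-1/2}\}\cdot\sqrt2$, so that $\alpha^2\underline c/2>1$. Take $n$ large enough that $C_{\delta_0}>2h_n^2$. If $\abs{\xi_j-\xi_i}>\alpha h_n$, then either $\abs{\xi_j-\xi_i}\leqslant\delta_0$ and $d^2_{ij}/h_n^2\geqslant \tfrac{\underline c}{2}\alpha^2>1$, or $\abs{\xi_j-\xi_i}>\delta_0$ and $d^2_{ij}/h_n^2>C_{\delta_0}/h_n^2>2$. In both cases $K(d^2_{ij}/h_n^2)=0$. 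Likewise $c_{ij}(\xi_j-\xi_i)^2/h_n^2>\underline c\alpha^2>1$, so the second kernel vanishes as well.

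\emph{Part (ii).} Use the event $\mathcal A_n:=\{\max_{i\neq j}\abs{\hat d^2_{ij}-d^2_{ij}}\leqslant h_n^2/2\}$. This is where \eqref{eq: R_n def} enters (read as a uniform max bound) together with $R_nh_n^2\to\infty$: since $\max\abs{\hat d^2-d^2}=O_p(R_n^{-1})=o_p(h_n^2)$, we get $\pr(\mathcal A_n)\to1$. To keep a margin, enlarge $\alpha$ so that $\tfrac{\underline c}{2}\alpha^2\geqslant 2$ and choose $n$ with $C_{\delta_0}\geqslant 2h_n^2$. Then on $\mathcal A_n$, every pair with $\abs{\xi_j-\xi_i}>\alpha h_n$ satisfies $d^2_{ij}\geqslant 2h_n^2$, hence $\hat d^2_{ij}\geqslant \tfrac32 h_n^2>h_n^2$ and $K(\hat d^2_{ij}/h_n^2)=0$. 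The whole sum is therefore zero on $\mathcal A_n$. This is the only probabilistic step, and the point to watch is that the uniform-in-pairs control is exactly what \eqref{eq: R_n def} supplies.

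\emph{Part (iii).} When $\abs{\xi_j-\xi_i}>\alpha h_n$, part (i) makes both kernels vanish, so $r_K=0$. When $\abs{\xi_j-\xi_i}\leqslant\alpha h_n$, the Lipschitz property of $K$ (Assumption~\ref{ass: K and h}\eqref{item: kernel}) and Assumption~\ref{ass: beta ID}\eqref{item: d* approximation} give
\[
\abs{r_K}\leqslant \overline K' \abs{r(X_i,\xi_i,X_j,\xi_j)}/h_n^2 \leqslant \overline K' C\alpha^3h_n^3/h_n^2 = \overline K'C\alpha^3 h_n .
\]
This is the claimed bound with constant $\overline K'C\alpha^3$.
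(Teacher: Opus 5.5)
Your proposal is correct and follows essentially the same route as the paper. The shared steps are: a local quadratic lower bound on $d_{ij}^2$ from Assumption~\ref{ass: beta ID}\eqref{item: d* approximation} combined with the global bound from Assumption~\ref{ass: beta ID}\eqref{item: d* xi ID} to choose $\alpha$; a margin $d_{ij}^2 \geqslant c\,h_n^2$ with $c>1$, together with the uniform rate \eqref{eq: R_n def} and $R_n h_n^2 \to \infty$, for part (ii); and Lipschitz continuity of $K$ with the cubic remainder for part (iii). Your explicit constants (the event $\max_{i\neq j}\abs{\hat d_{ij}^2 - d_{ij}^2} \leqslant h_n^2/2$, and reading \eqref{eq: R_n def} as a bound on the maximum rather than the minimum) only make the paper's argument more explicit.
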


\begin{proof}[Proof of Lemma \ref{lem: A aux kernel}]
  Note that Assumption \ref{ass: beta ID}\eqref{item: d* approximation} guarantees that there exists some $\delta_0 > 0$ and $c^* \in (0, \underline c)$ such that $d^2(X_i,X_j,\xi_i,\xi_j) \geqslant c^* (\xi_j - \xi_i)^2$ a.s. for $\abs{\xi_j - \xi_i} \leqslant \delta_0$. This implies that there exists $\alpha = 1/\sqrt{c^*}$ such that $d^2(X_i,X_j,\xi_i,\xi_j) > h_n^2$ whenever $\alpha h_n < \abs{\xi_j - \xi_i} \leqslant \delta_0$ (notice that $\alpha h_n < \delta_0$ for large enough $n$). At the same, by Assumption~\ref{ass: beta ID}\eqref{item: d* xi ID}, time, for large enough $n$, $d^2(X_i,X_j,\xi_i,\xi_j) > h_n^2$ a.s. for $\abs{\xi_i - \xi_j} \geqslant \delta_0$. Hence, we conclude that $\abs{\xi_j - \xi_i} > \alpha h_n$ implies $d^2(X_i,X_j,\xi_i,\xi_j) > h_n^2$ and, by Assumption~\ref{ass: K and h}\eqref{item: kernel}, $K\left({d^2(X_i,X_j,\xi_i,\xi_j)}/{h_n^2}\right) = 0$. Similarly, since $c^* < \underline c$, $\abs{\xi_j - \xi_i} > \alpha h_n$ also implies $c_{ij} (\xi_j - \xi_j)^2 > h_n^2$ and $K\left({c_{ij} (\xi_j - \xi_i)^2}/{h_n^2}\right) = 0$ a.s. This completes the proof of part (i).

  Clearly, we can also choose $\alpha$ such that $d^2(X_i,X_j,\xi_i,\xi_j) > c h_n^2$ for some $c > 1$ whenever $\abs{\xi_j - \xi_i} > \alpha h_n$. Consequently, for all pairs of agents $i$ and $j$ satisfying $\abs{\xi_j - \xi_i} > \alpha h_n$, we have ${d_{ij}^2}/{h_n^2} > c > 1$. Using \eqref{eq: R_n def} and Assumption~\ref{ass: K and h}\eqref{item: h_n}, we also conclude that, for all pairs of agents $i$ and $j$ satisfying that $\abs{\xi_j - \xi_i} > \alpha h_n$, with probability approaching one we also have ${\hat d^2_{ij}}/{h_n^2} > 1$. Combining this with Assumption~\ref{ass: K and h}\eqref{item: kernel} completes the proof of part (ii).

  Finally, notice that using the result of part (i), we obtain
  \begin{align*}
    r_K(X_i,X_j,\xi_i,\xi_j;h_n) = \left(K\left(\frac{d^2(X_i,X_j,\xi_i,\xi_j)}{h_n^2}\right) - K\left(\frac{c_{ij} (\xi_j - \xi_i)^2}{h_n^2}\right) \right) \ind\{|\xi_j - \xi_i| \leqslant \alpha h_n\}.
  \end{align*}
  Combining this with Assumptions \ref{ass: beta ID}\eqref{item: d* approximation} and \ref{ass: K and h}\eqref{item: kernel} delivers the required bound.
\end{proof}

\begin{AppLem}
  \label{lem: A aux O(1)}
  Suppose the hypotheses of Theorem~\ref{the: kernel rate} are satisfied. Then, for any $\alpha > 0$, we have: (i) $h_n^{-1} \ex{\ind\{\abs{\xi_j - \xi_i} \leqslant \alpha h_n\}|\xi_i} \leqslant C_\alpha$ a.s. and, hence, $h_n^{-1} \ex{\ind\{\abs{\xi_j - \xi_i} \leqslant \alpha h_n\}} \leqslant C_\alpha$ for some $C_\alpha > 0$; (ii) $\binom{n}{2}^{-1} h_n^{-1} \sum_{i < j} \ind\{\abs{\xi_j - \xi_i} \leqslant \alpha h_n\} = O_p(1)$.
\end{AppLem}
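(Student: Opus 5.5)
Part (i) is a direct density bound. Conditional on $\xi_i$, I will write $\ex{\ind\{\abs{\xi_j - \xi_i} \leqslant \alpha h_n\}|\xi_i} = \pr(\xi_j \in [\xi_i - \alpha h_n, \xi_i + \alpha h_n])$, using that $\xi_j$ is independent of $\xi_i$ for $j \neq i$ by Assumption~\ref{ass: DGP}\eqref{item: iid}. The marginal density of $\xi$ is $f_\xi(\xi) = \int f_{\xi|X}(\xi|x)\, dP_X(x)$. By Assumption~\ref{ass: xi dist}\eqref{item: xi pdf}, it is bounded by $\overline f_{\xi|X}$ uniformly. Hence the probability is at most $2 \alpha h_n \overline f_{\xi|X}$, so $C_\alpha = 2\alpha \overline f_{\xi|X}$ works a.s. Taking expectations over $\xi_i$ gives the unconditional bound.

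For part (ii), I set $U_n \coloneqq \binom{n}{2}^{-1} \sum_{i<j} \ind\{\abs{\xi_j - \xi_i} \leqslant \alpha h_n\}$. This is a U-statistic of order two with a bounded kernel. By part (i), $\ex{U_n} \leqslant C_\alpha h_n$, and $U_n \geqslant 0$. Markov's inequality then gives $\pr(h_n^{-1} U_n > M) \leqslant C_\alpha / M$ for every $M>0$, which is exactly $h_n^{-1} U_n = O_p(1)$. No variance computation is needed, because only stochastic boundedness is claimed.

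The only point needing care is part (i): the bound must hold a.s. uniformly in $\xi_i$, including near the boundary of the support. The uniform density bound handles this, since the interval probability is controlled regardless of where $\xi_i$ lies. If a sharper statement were wanted, say $h_n^{-1}U_n$ converging to a constant, one would use a Hoeffding decomposition together with $n h_n \to \infty$ from Assumption~\ref{ass: K and h}\eqref{item: h_n}. That is not required here.
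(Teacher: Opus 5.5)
Your proof is correct. Part (i) matches the paper's argument. Both bound the conditional probability of the interval $[\xi_i-\alpha h_n,\xi_i+\alpha h_n]$ by $2\alpha h_n$ times a uniform bound on the marginal density of $\xi$. You are slightly more careful than the paper here: you derive that bound from $\overline f_{\xi|X}$ by integrating over $P_X$, whereas the paper simply writes $\overline f_\xi$.

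For part (ii) your route differs. The paper treats $q_{n,ij}=h_n^{-1}\ind\{\abs{\xi_j-\xi_i}\leqslant\alpha h_n\}$ as the kernel of a second-order U-statistic. It checks $\ex{q_{n,ij}^2}\leqslant 2\alpha\overline f_{\xi}/h_n=o(n)$, which uses $nh_n\to\infty$ from Assumption~\ref{ass: K and h}\eqref{item: h_n}. It then invokes Lemma A.3 of Ahn and Powell (1993) to get $\binom{n}{2}^{-1}\sum_{i<j}q_{n,ij}=\ex{q_{n,ij}}+o_p(1)$, which is $O_p(1)$ by part (i). You instead apply Markov's inequality to the nonnegative statistic, whose mean is at most $C_\alpha$.

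Your argument is more elementary and needs no rate condition on $h_n$, so it holds for any bandwidth sequence. The paper's argument yields a stronger concentration statement: the normalized count is close to its mean. That is not needed for the lemma as stated, nor for its downstream uses in bounding $\hat A_n-A_n$ and $\hat B_n-B_n$, where only stochastic boundedness enters. It does mirror the U-statistic law of large numbers the paper genuinely needs elsewhere, for example to show $A_n\to A$, where the limit itself must be identified.
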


\begin{proof}[Proof of Lemma \ref{lem: A aux O(1)}]
  Let $q_{n,ij} \coloneqq h_n^{-1} \ind\{\abs{\xi_j - \xi_i} \leqslant \alpha h_n\}$. Using  Assumption~\ref{ass: xi dist}\eqref{item: xi pdf}  $\ex{q_{n,ij}|\xi_i} = h_n^{-1} \int \ind\{\abs{\xi_j - \xi_i} \leqslant \alpha h_n\} f_{\xi} (\xi_j) d \xi_j \leqslant 2 \alpha \overline f_\xi = C_\alpha$, which proves the first part. Similarly, $\ex{q_{n,ij}^2} \leqslant \frac{2 \alpha \overline f_\xi}{h_n} = O(h_n^{-1}) = o(n)$, where the last equality uses Assumption \ref{ass: K and h}\eqref{item: h_n}. Invoking Lemma A.3 of \citetsupp{Ahn1993}, $\binom{n}{2}^{-1} \sum_{i < j} q_{n,ij} = \ex{q_{n,ij}} + o_p(1) = O_p (1)$, where the last equality also uses $\ex{q_{n,ij}} \leqslant C_\alpha$.
\end{proof}

\subsubsection{Proof of Theorem~\ref{the: kernel rate}}
First, notice that $\hat \beta - \beta_0 = \hat A_n^{-1} (\hat B_n + \hat C_n)$, where
\begin{align*}
  \hat A_n &\coloneqq \binom{n}{2}^{-1} h_n^{-1} \sum_{i < j} K\left(\frac{\hat d_{ij}^2}{h_n^2}\right) \frac{1}{n-2} \sum_{k \neq i,j} (W_{ik} - W_{jk}) (W_{ik} - W_{jk})', \\
  \hat B_n &\coloneqq \binom{n}{2}^{-1} h_n^{-1} \sum_{i < j} K\left(\frac{\hat d_{ij}^2}{h_n^2}\right) \frac{1}{n-2} \sum_{k \neq i,j} (W_{ik} - W_{jk}) (g(\xi_i,\xi_k) - g(\xi_j,\xi_k)), \\
  \hat C_n &\coloneqq \binom{n}{2}^{-1} h_n^{-1} \sum_{i < j} K\left(\frac{\hat d_{ij}^2}{h_n^2}\right) \frac{1}{n-2} \sum_{k \neq i,j} (W_{ik} - W_{jk}) (\varepsilon_{ik} - \varepsilon_{jk}).
\end{align*}
We proof Theorem~\ref{the: kernel rate} by inspecting the asymptotic behavior of $\hat A_n$, $\hat B_n$, and $\hat C_n$: below we show (i) $\hat A_n = A + o_p(1)$ for some symmetric $A$ satisfying $\lambda_{min} (A) > C > 0$; (ii) $\hat B_n  = O_p(h_n^2 + \frac{R_n^{-1}}{h_n} + n^{-1})$; (iii) $\hat C_n = O_p(\frac{R_n^{-1}}{h_n^2} \left(\frac{\ln n}{n}\right)^{1/2} + n^{-1} )$, together delivering the desired result.

For simplicity of exposition and consistency with the assumptions provided in Section~\ref{ssec: beta rate}, we prove the result for $d_\xi = 1$. For $d_\xi > 1$ and with the assumptions appropriately modified as discussed in Remark~\ref{rem: multivariate xi}, the proof of the theorem remains essentially the same with the exceptions that (i) $\hat A_n$, $\hat B_n$, and $\hat C_n$ should be normalized by $h_n^{-d_\xi}$ instead of $h_n^{-1}$, and (ii) all the quantities in the statements of Lemma~\ref{lem: A aux O(1)} should be normalized by $h_n^{-d_\xi}$ instead of $h_n^{-1}$.

\subsubsection{Demonstrating consistency of $\hat A_n$}
  In this section, we argue that $\hat A_n = A + o_p(1)$, where $A$ is a symmetric invertible matrix.

  In particular, first, we argue that $\hat A_n - A_n = o_p(1)$, where
  \begin{align*}
    A_n \coloneqq \binom{n}{2}^{-1} h_n^{-1} \sum_{i < j} K\left(\frac{d_{ij}^2}{h_n^2}\right) \frac{1}{n-2} \sum_{k \neq i,j} (W_{ik} - W_{jk}) (W_{ik} - W_{jk})'.
  \end{align*}
  Then
  \begin{align*}
    \hat A_n - A_n = \binom{n}{2}^{-1} h_n^{-1} \sum_{i < j} \left(K\left(\frac{\hat d_{ij}^2}{h_n^2}\right) - K\left(\frac{d_{ij}^2}{h_n^2}\right)\right) \frac{1}{n-2} \sum_{k \neq i,j} (W_{ik} - W_{jk}) (W_{ik} - W_{jk})'.
  \end{align*}
  First, notice that $\norm{\frac{1}{n-2} \sum_{k \neq i,j} (W_{ik} - W_{jk}) (W_{ik} - W_{jk})'}$ is uniformly bounded thanks to Assumption \ref{ass: basic}\eqref{item: w}. Then, using Assumption \ref{ass: K and h}\eqref{item: kernel} and the result of Lemma \ref{lem: A aux kernel}(ii), we have.
  \begin{align*}
    \binom{n}{2}^{-1} h_n^{-1} \sum_{i < j} \left|K\left(\frac{\hat d_{ij}^2}{h_n^2}\right) - K\left(\frac{d_{ij}^2}{h_n^2}\right)\right| &\leqslant \overline K' \frac{\max_{i \neq j} \abs{\hat d_{ij}^2 - d_{ij}^2}}{h_n^2} \binom{n}{2}^{-1} h_n^{-1} \sum_{i < j} \ind\{\abs{\xi_j - \xi_i} \leqslant \alpha h_n\} \\
    &=  O_p \left( {R_n^{-1}}/{h_n^2} \right) = o_p(1),
  \end{align*}
  where the first equality uses \eqref{eq: R_n def} and Lemma \ref{lem: A aux kernel}(ii), and the second equality is due to Assumption \ref{ass: K and h}\eqref{item: h_n}. This completes the proof of $\hat A_n - A_n = o_p(1)$.

  Next, notice that $A_n = \frac{1}{n(n-1)(n-2)} \sum_{i \neq j \neq k} \zeta_n(Z_i,Z_j,Z_k)$, where
  \begin{align*}
    \zeta_n (Z_i,Z_j,Z_k) \coloneqq h_n^{-1} K\left(\frac{d^2(Z_i,Z_j)}{h_n^2}\right) (w(X_j,X_k) - w(X_i,X_k)) (w(X_j,X_k) - w(X_i,X_k))'.
  \end{align*}
  Note that $\zeta_n$ is symmetric in its first two arguments and, consequently, it can be symmetrized with
  \begin{align*}
    p_n (Z_i,Z_j,Z_k) \coloneqq \frac{1}{3} \left(\zeta_n(Z_i,Z_j,Z_k) + \zeta_n(Z_k,Z_i,Z_j) + \zeta_n(Z_j,Z_k,Z_i)\right).
  \end{align*}
  Then $A_n$ is a third order U-statistic with kernel $p_n$. First, we want to show that $\ex{\norm{p_n(Z_i,Z_j,Z_k)}^2} = o(n)$. Indeed, again since $W$ is bounded, using Assumption \ref{ass: K and h}\eqref{item: kernel} and the result of Lemma \ref{lem: A aux kernel}(i),
  \begin{align*}
    \ex{\norm{\zeta_n(Z_i,Z_j,Z_k)}^2} \leqslant C h_n^{-2} \ex{\ind\{\abs{\xi_j - \xi_i} \leqslant \alpha h_n\}} = o(h_n^{-1}) = o(n)
  \end{align*}
  where the first equality uses Lemma \ref{lem: A aux O(1)}, and the second uses Assumption \ref{ass: K and h}\eqref{item: h_n}. Similarly, $\ex{\norm{p_n(Z_i,Z_j,Z_k)}^2} = O(h_n^{-1}) = o(n)$, and by Lemma A.3 of \citesupp{Ahn1993}, we obtain $A_n = \ex{p_n(Z_i,Z_j,Z_k)} + o_p(1) = \ex{\zeta_n(Z_i,Z_j,Z_k)} + o_p(1)$.   Since we have previously established that $\hat A_n = A_n + o_p(1)$, we also get $\hat A_n = \ex{\zeta_n(Z_i,Z_j,Z_k)} + o_p(1)$.

  The rest of the proof deals with computing $\ex{\zeta_n(Z_i,Z_j,Z_k)}$. First, note
  \begin{align*}
    \ex{\zeta_n(Z_i,Z_j,Z_k)|X_i,X_j} = \ex{h_n^{-1} K\left(\frac{d^2(X_i,X_j,\xi_i,\xi_j)}{h_n^2}\right)|X_i,X_j} \mathcal C(X_i,X_j),
  \end{align*}
  where $\mathcal C(X_i,X_j)$ is as defined in \eqref{eq: C x1 x2 def}. Using the result of Lemma \ref{lem: A aux kernel}(iii) and Assumption \ref{ass: xi dist}\eqref{item: xi pdf}, we obtain
  \begin{align*}
    I(X_i,X_j,\xi_i;h_n) &\coloneqq \ex{h_n^{-1} K\left(\frac{d^2(X_i,X_j,\xi_i,\xi_j)}{h_n^2}\right)|X_i,X_j,\xi_i} = I_1(X_i,X_j,\xi_i;h_n) + O(h_n), \\
    I_1(X_i,X_j,\xi_i;h_n) &\coloneqq \int h_n^{-1} K \left(\frac{c_{ij} (\xi_j - \xi_i)^2}{h_n^2}\right) f_{\xi|X}(\xi_j;X_j)d\xi_j = \frac{1}{\sqrt{c_{ij}}} \int K(u^2) f_{\xi|X}\left(\xi_i + \frac{h_n u}{\sqrt{c_{ij}}};X_j\right) du,
  \end{align*}
  where the last follows from the change of the variable $\xi_j = \xi_i + h_n u / \sqrt{c_{ij}}$. Next,
  note that for all values of $\xi_i$ such that $f_{\xi|X}(\xi_i|X_j)$ is continuous at $\xi_i$, by the dominated convergence theorem (DCT), we have $I_1(X_i,X_j,\xi_i;h_n) \rightarrow \frac{\mu_K f_{\xi|X}(\xi_i|X_j) }{\sqrt{c(X_i,X_j,\xi_i)}}$ and, consequently, also $I(X_i,X_j,\xi_i;h_n) \rightarrow \frac{\mu_K f_{\xi|X}(\xi_i|X_j) }{\sqrt{c(X_i,X_j,\xi_i)}}$, where $c_{ij} = c(X_i,X_j,\xi_i)$, and $\mu_K$ is as defined in Assumption \ref{ass: K and h}\eqref{item: kernel}. By Assumption \ref{ass: xi dist}\eqref{item: xi cont}, this applies for almost all $\xi_i$. Moreover, by Lemmas \ref{lem: A aux kernel}(i) and \ref{lem: A aux O(1)}(i) and Assumption \ref{ass: K and h}\eqref{item: kernel}, $I(X_i,X_j,\xi_i;h_n)$ is uniformly bounded since we have
  \begin{align*}
    I(X_i,X_j,\xi_i;h_n) &\leqslant \overline K h_n^{-1} \ex{\ind\{\abs{\xi_j - \xi_i} \leqslant \alpha \} h_n | \xi_i} \leqslant C.
  \end{align*}
  Hence, by the DCT, for all $X_i$ and $X_j$, we have,
  \begin{align*}
    I(X_i,X_j;h_n) \coloneqq \ex{I(X_i,X_j,\xi_i;h_n)|X_i,X_j} \rightarrow  \underbrace{\int \frac{\mu_K}{\sqrt{c(X_i,X_j,\xi_i)}} f_{\xi|X}(\xi|X_i) f_{\xi|X}(\xi|X_j) d\xi}_{\coloneqq \lambda(X_i,X_j)}, 
  \end{align*}
  Finally, since $\ex{\zeta_n(Z_i,Z_j,Z_k)|X_i,X_j} = I(X_i,X_j;h_n) \mathcal C (X_i,X_j)$ is uniformly bounded, the DCT guarantees $\ex{\zeta_n(Z_i,Z_j,Z_k)} \rightarrow A \coloneqq \ex{\lambda(X_i,X_j) \mathcal C(X_i,X_j)}$.
  Hence, we conclude $\hat A_n = \ex{\zeta_n(Z_i,Z_j,Z_k)} + o_p(1) = A + o_p(1)$, where $\lambda_{min} (A) > C$ for some $C > 0$ by Assumption~\ref{ass: beta ID}\eqref{item: xi full rank}, which completes the proof.

\subsubsection{Bounding $\hat B_n$}
First, we introduce the following notations and prove an auxiliary lemma stated below.
\begin{align}
    q_n (Z_i,Z_j,Z_k) &\coloneqq h_n^{-1} K\left(\frac{d^2(Z_i,Z_j)}{h_n^2}\right) (W(X_i,X_k) - W(X_j,X_k)) (g(\xi_i,\xi_k) - g(\xi_j,\xi_k)), \notag \\ 
    p_n (Z_i,Z_j,Z_k) &\coloneqq \frac{1}{3} (q_n(Z_i,Z_j,Z_k) + q_n(Z_k,Z_i,Z_j) + q_n(Z_j,Z_k,Z_i)), \label{eq: A p_n def}
\end{align}
\begin{AppLem}
  \label{lem: A aux q}
  Suppose that the hypotheses of Theorem~\ref{the: kernel rate} are satisfied. Then, (i) $b_n \coloneqq \ex{q_n(Z_i,Z_j,Z_k)} = O(h_n^2)$,  and (ii) $\zeta_{1,n} \coloneqq \ex{\norm{\ex{p_n(Z_i,Z_j,Z_k)|Z_i} - b_n}^2} = O(h_n^3)$.
\end{AppLem}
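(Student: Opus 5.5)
Both parts rest on the same two facts. First, the kernel factor vanishes unless $\abs{\xi_j-\xi_i}\leqslant \alpha h_n$, by Lemma~\ref{lem: A aux kernel}(i). Second, on that set, Assumption~\ref{ass: local smooth g} lets us linearize $g(\xi_i,\xi_k)-g(\xi_j,\xi_k)$ in $\xi_j-\xi_i$. Throughout I treat $d_\xi=1$.

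\textbf{Part (i).} Conditional on $(Z_i,Z_j)$, integrate over $Z_k$ to get
\begin{align*}
b_n=\ex{h_n^{-1}K\left(\frac{d_{ij}^2}{h_n^2}\right)\Gamma(Z_i,Z_j)},\qquad \Gamma(Z_i,Z_j)\coloneqq\ex{(w(X_i,X_k)-w(X_j,X_k))(g(\xi_i,\xi_k)-g(\xi_j,\xi_k))\mid Z_i,Z_j}.
\end{align*}
\begin{enumerate}[(a)]
\item Expand $g(\xi_i,\xi_k)-g(\xi_j,\xi_k)=G(\xi_i,\xi_k)(\xi_i-\xi_j)+r_g$. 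The remainder satisfies $\abs{r_g}\leqslant Ch_n^2$, except when $\abs{\xi_k-\xi_i}\leqslant\delta_n\asymp h_n$, an event of conditional probability $O(h_n)$ on which the Lipschitz bound gives $\abs{r_g}=O(h_n)$. Hence $\Gamma=(\xi_i-\xi_j)\,\Gamma_1(X_i,X_j,\xi_i)+O(h_n^2)$, with $\Gamma_1$ bounded.
\item The $O(h_n^2)$ term, multiplied by $h_n^{-1}K$, has expectation $O(h_n^2)$ by Lemma~\ref{lem: A aux O(1)}(i).
\item The remaining task is to show $\ex{h_n^{-1}K(d^2_{ij}/h_n^2)(\xi_i-\xi_j)\Gamma_1}=O(h_n^2)$. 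Replace $K(d^2/h_n^2)$ by $K(c_{ij}(\xi_j-\xi_i)^2/h_n^2)$. By Lemma~\ref{lem: A aux kernel}(iii) the error is $O(h_n)\cdot\abs{\xi_j-\xi_i}\cdot h_n^{-1}$ on a set of mass $O(h_n)$, so it contributes $O(h_n^2)$.
\item In the replaced term, substitute $\xi_j=\xi_i+h_nu/\sqrt{c_{ij}}$. The integrand becomes $-h_n c_{ij}^{-1}uK(u^2)f_{\xi|X}(\xi_i+h_nu/\sqrt{c_{ij}}\mid X_j)$. Since $uK(u^2)$ is odd, its integral against a constant density vanishes. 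Where the density is continuous on the ball, Assumption~\ref{ass: xi dist}\eqref{item: xi lipschitz} gives a further factor $h_n$, hence $O(h_n^2)$.
\item Near discontinuities the integrand is only $O(h_n)$, but by \eqref{eq: xi cont bound} the set of $\xi_i$ without continuity on $B_{\alpha h_n}$ has probability $O(h_n)$. This again yields $O(h_n^2)$.
\end{enumerate}
I expect step (c)--(e), the cancellation through symmetry of $u\mapsto K(u^2)$ combined with the boundary accounting, to be the main obstacle.

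\textbf{Part (ii).} Compute the three projections $\ex{q_n(Z_i,Z_j,Z_k)\mid Z_i}$, $\ex{q_n(Z_k,Z_i,Z_j)\mid Z_i}$ and $\ex{q_n(Z_j,Z_k,Z_i)\mid Z_i}$.
\begin{itemize}
\item \emph{When $i$ is one of the matched pair.} The same calculation as in part (i), now conditional on $Z_i$, gives a bound of the form $O(h_n)\cdot\ind\{\text{near boundary}\}+O(h_n^2)$. Its square has expectation $O(h_n^3)$, since the boundary event has probability $O(h_n)$ and the factor squared is $O(h_n^2)$.
\item \emph{When $i$ plays the role of $k$.} The conditional expectation is $\ex{h_n^{-1}K(\cdot)\,\Delta W\,\Delta g\mid Z_i}$. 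Here $\abs{\Delta g}\leqslant\overline G\abs{\xi_j-\xi_k}\leqslant \overline G\alpha h_n$, and the kernel mass is $O(1)$ by Lemma~\ref{lem: A aux O(1)}(i). So the projection is $O(h_n)$ uniformly, which would only give $O(h_n^2)$ after squaring.
\item \emph{Refining the $k$-role term.} Linearize again: the leading term is $\ex{h_n^{-1}K(\cdot)\Delta W\,G(\xi_j,\xi_i)(\xi_j-\xi_k)\mid Z_i}$, and it cancels to $O(h_n^2)$ by the same odd-symmetry argument. The exceptions are $\xi_i$ within $O(h_n)$ of $\xi_j,\xi_k$ (where linearization fails) and $(\xi_j,\xi_k)$ near discontinuities; both have probability $O(h_n)$ and there the term is $O(h_n)$.
\end{itemize}
Squaring and integrating then gives $O(h_n^4)+O(h_n)\cdot O(h_n^2)=O(h_n^3)$. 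Subtracting $b_n=O(h_n^2)$ preserves the bound, so $\zeta_{1,n}=O(h_n^3)$.
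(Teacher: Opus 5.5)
Your proposal is correct and uses the same ingredients as the paper's proof: linearization via Assumption~\ref{ass: local smooth g} off an $O(h_n)$-probability set, kernel replacement via Lemma~\ref{lem: A aux kernel}(iii), the change of variables with $\int uK(u^2)\,du=0$ plus the Lipschitz density, and the $O(h_n)$ discontinuity mass from Assumption~\ref{ass: xi dist}\eqref{item: xi cont}. The only difference is organizational. The paper derives one pointwise bound on $\ex{q_n(Z_i,Z_j,Z_k)\mid X_i,\xi_i,X_j,Z_k}$ (of order $h_n^2$ off a small set and $h_n$ everywhere) and integrates it over $Z_k$ or over $Z_i$ to handle both kinds of projection. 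You instead integrate out $Z_k$ first and then redo the linearization, with the roles relabeled, for the projection in which $i$ is the third argument.
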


\begin{proof}[Proof of Lemma \ref{lem: A aux q}]
Before starting the proof, we introduce the following notations $s_n^{(1)}(Z_i) \coloneqq \ex{q_n(Z_i,Z_j,Z_k)|Z_i} - b_n$, $s_n^{(2)}(Z_i) \coloneqq \ex{q_n(Z_k,Z_i,Z_j)|Z_i} - b_n$, and $s_n^{(3)}(Z_i) \coloneqq \ex{q_n(Z_j,Z_k,Z_i)|Z_i} - b_n$, so
  \begin{align}
    \ex{p_n(Z_i,Z_j,Z_k)|Z_i} - b_n &= \frac{1}{3} \left(s_n^{(1)}(Z_i) + s_n^{(2)}(Z_i) + s_n^{(3)}(Z_i)\right), \notag \\
    \zeta_{1,n}^2 &= \frac{1}{9} \ex{\norm{s_n^{(1)}(Z_i) + s_n^{(2)}(Z_i) + s_n^{(3)}(Z_i)}^2}. \label{eq: A zeta}
  \end{align}

  First, let us compute $E_{q} (X_i,\xi_i,X_j,Z_k;h_n) \coloneqq \ex{q_n(Z_i,Z_j,Z_k)|X_i, \xi_i, X_j, Z_k}$. Note that
  \begin{align*}
    E_q (X_i,\xi_i,X_j,Z_k;h_n) = &\underbrace{\ex{h_n^{-1} K \left(\frac{c_{ij}(\xi_j - \xi_i)^2}{h_n^2}\right) (W_{ik} - W_{jk}) (g(\xi_i,\xi_k) - g(\xi_j,\xi_k))|X_i,\xi_i,X_j,Z_k}}_{\coloneqq E_{q,2(X_i,\xi_i,X_j,Z_k;h_n)}} \\
    &+ \underbrace{\ex{h_n^{-1} r_K(Z_i,Z_j;h_n) (W_{ik} - W_{jk}) (g(\xi_i,\xi_k) - g(\xi_j,\xi_k)) |X_i,X_j,\xi_i }}_{\coloneqq r_{E_q}(X_i,\xi_i,X_j,Z_k;h_n)},
  \end{align*}
  where $r_K$ is defined in Lemma \ref{lem: A aux kernel}. Combining Lemma \ref{lem: A aux kernel}(iii), boundedness of $W$, Assumption~\ref{ass: basic}\eqref{item: lipschitz g}, and Lemma \ref{lem: A aux O(1)}(i), we conclude that there exists $C > 0$ such that $\abs{r_{E_q}(X_i,\xi_i,X_j,Z_k;h_n) } \leqslant C h_n^2$ a.s.

  Next, we compute $E_{q,2}(X_i,\xi_i,X_j,Z_k;h_n)$. We start with considering $\xi_i$ and $\xi_k$ such that $\abs{\xi_i - \xi_k} > \alpha h_n$. Using Lemma \ref{lem: A aux kernel}(i), Assumption \ref{ass: local smooth g}, and boundedness of $W$, we obtain
  \begin{align*}
    \bigg \Vert K \left(\frac{c_{ij}(\xi_j - \xi_i)^2}{h_n^2}\right) (W_{ik} - W_{jk}) (\underbrace{g(\xi_i,\xi_k) - g(\xi_j,\xi_k) - G(\xi_i,\xi_k) (\xi_i - \xi_j)}_{r_g(\xi_i,\xi_j,\xi_k)}) \bigg \Vert \leq C h_n^2 \ind\{\abs{\xi_j - \xi_i} \leqslant \alpha h_n\}
  \end{align*} 
  for some $C > 0$ for all $(Z_i,Z_j,Z_k)$ satisfying $\abs{\xi_i - \xi_k} > \alpha h_n$. Then, using Lemma \ref{lem: A aux O(1)}(i),
  \begin{align*}
    E_{q,2}(X_i,\xi_i,X_j,Z_k;h_n) = &\underbrace{\ex{h_n^{-1} K \left(\frac{c_{ij}(\xi_j - \xi_i)^2}{h_n^2}\right) (W_{jk} - W_{ik}) G(\xi_i,\xi_k) (\xi_j - \xi_i) | X_i,\xi_i,X_j,Z_k}}_{\coloneqq E_{q,l}(X_i,\xi_i,X_j,Z_k;h_n)} \\
    &+ r_{E_{q,2}}(X_i,\xi_i,X_j,Z_k;h_n),
  \end{align*}
  where $\abs{r_{E_{q,2}}(X_i,\xi_i,X_j,Z_k;h_n)} \leqslant C h_n^2$ a.s. for some $C > 0$ for $\abs{\xi_i - \xi_k} > \alpha h_n$. Next, consider
  \begin{align*}
    E_{q,l}(X_i,\xi_i,X_j,Z_k;h_n) &= (W_{jk} - W_{ik}) G(\xi_i,\xi_k) \int h_n^{-1} K \left(\frac{c_{ij}(\xi_j - \xi_i)^2}{h_n^2}\right) (\xi_j - \xi_i) f_{\xi|X} (\xi_j;X_j) d\xi_j \\
    &= \frac{h_n (W_{jk} - W_{ik}) G(\xi_i,\xi_k)}{{c_{ij}}} \int K(u^2) u f_{\xi|X}(\xi_i + h_n u/\sqrt{c_{ij}};X_j) du,
  \end{align*}
  where the last equality follows from the change of the variable $\xi_j = \xi_i + h_n u / \sqrt{c_{ij}}$. Note 
  \begin{align}
    &\abs{\int K(u^2) u f_{\xi|X}(\xi_i + h_n u/\sqrt{c_{ij}};X_j) du} \notag \\ \leqslant &\int \abs{K(u^2) u \left(f_{\xi|X} (\xi_i + h_n u /\sqrt{c_{ij}};X_j) - f_{\xi|X} (\xi_i;X_j)\right)} du 
    \leqslant \frac{\overline K C_\xi h_n}{\sqrt{c_{ij}}} \label{eq: A kernel integral bound}
  \end{align}
  where the first inequality used $\int u K(u^2) du = 0$ and the triangle inequality, and the second inequality holds for all $\xi_i$ such that $f_{\xi|X}(\cdot;X_j)$ is continuous on $B_{h_n u / \sqrt{c_{ij}}}(\xi_i)$ thanks to Assumptions \ref{ass: xi dist}\eqref{item: xi lipschitz} and \ref{ass: K and h}\eqref{item: kernel}. \eqref{eq: A kernel integral bound} together with boundedness of $W$, $G$ and $c_{ij}^{-1}$ (Assumptions \ref{ass: basic}\eqref{item: w}, \ref{ass: local smooth g}, and \ref{ass: beta ID}\eqref{item: d* approximation}, respectively) implies that there exists a uniform constant $C$ such that $\abs{E_{q,l}(X_i,\xi_i,X_j,Z_k;h_n)} \leqslant C h_n^2$
  almost surely for all $Z_i$, $Z_j$, and $Z_k$ such that $\abs{\xi_i - \xi_k} > \alpha h_n$ and $f_{\xi|X}(\cdot;X_j)$ is continuous on $B_{h_n u/\sqrt{c_{ij}}}(
  \xi_i)$. Combining this with the previously obtained bounds on $r_{E_q}$ and $r_{E_{q,2}}$, we conclude that there exists a uniform constant $C > 0$ such that (for sufficiently large $n$)
  \begin{align}
    \label{eq: A q quad bound}
    \abs{E_{q}(X_i,\xi_i,X_j,Z_k;h_n)} \leqslant C h_n^2
  \end{align}
  almost surely for all $Z_i$, $Z_j$, and $Z_k$ such that $\abs{\xi_i - \xi_k} > \alpha h_n$ and $f_{\xi|X}(\cdot;X_j)$ is continuous on $B_{h_n u/\sqrt{c_{ij}}}(\xi_i)$. Moreover, for all $(Z_i,Z_j,Z_k)$ without additional qualifiers, we also have
  \begin{align}
    \label{eq: A q uniform bound}
    \abs{E_{q}(X_i,\xi_i,X_j,Z_k;h_n)} \leqslant C h_n,
  \end{align}
  for some $C > 0$. To inspect \eqref{eq: A q uniform bound}, note $\norm{q_n(Z_i,Z_j,Z_k)} \leqslant C \ind\{\abs{\xi_j - \xi_i} \leqslant \alpha h_n\}$ a.s. for some $C > 0$ thanks to Lemma \ref{lem: A aux kernel}(i) and Assumptions \ref{ass: basic}\eqref{item: w} and \eqref{item: lipschitz g}, and apply Lemma \ref{lem: A aux O(1)}(i).

  Equipped with the bounds \eqref{eq: A q quad bound} and \eqref{eq: A q uniform bound}, now we want to bound $b_n = \ex{q_n(Z_i,Z_j,Z_k)}$. We start with considering $\ex{q_n(Z_i,Z_j,Z_k)|Z_i}$. Since $c_{ij} > \underline c > 0$ (Assumption \ref{ass: beta ID}\eqref{item: d* approximation}), Assumption \ref{ass: xi dist}\eqref{item: xi cont} guarantees that there exists $\gamma_1 > 0$ such that the probability mass of $\xi_i$ such that $f_{\xi|X}(\cdot;X_j)$ is continuous on $B_{h_n u/\sqrt{c_{ij}}}(\xi_i)$ is at least $1 - \gamma_1 h_n$ irrespectively of the values of $X_i$ and $X_j$. Also, by Assumption \ref{ass: xi dist}\eqref{item: xi pdf}, the probability mass of $\xi_k$ such that $\abs{\xi_k - \xi_i} > \alpha h_n$ is at least $1 - \gamma_{2} h_n$ irrespectively of the value of $\xi_i$ for some $\gamma_2 > 0$. For such values of $\xi_i$ and $\xi_k$, the bound \eqref{eq: A q quad bound} applies. Moreover, the bound \eqref{eq: A q uniform bound} applies with probability one. Hence, integrating $E_q(X_i,\xi_i,X_j,Z_k;h_n)$ over $(X_j,Z_k)$ ensures that there exists $C > 0$ such that $\norm{\ex{q_n(Z_i,Z_j,Z_k)|Z_i}} \leqslant C h_n^2$ for all $\xi_i$ such that $f_{\xi|X}(\cdot;X_j)$ is continuous on $B_{h_n u/\sqrt{c_{ij}}}(\xi_i)$, which happens with probability $1 - \gamma_1 h_n$ at least. Moreover, \eqref{eq: A q uniform bound} immediately implies that $\norm{\ex{q_n(Z_i,Z_j,Z_k)|Z_i}} \leqslant C h_n$ with probability one. Combining these bounds and integrating over $Z_i$ gives
  \begin{align*}
    \norm{\ex{q_n(Z_i,Z_j,Z_j)}} \leqslant \ex{\norm{{\ex{q_n(Z_i,Z_j,Z_j)|Z_i}}}} \leqslant (1 - \gamma_1 h_n) \times C h_n^2 + \gamma_1 h_n \times C h_n = O(h_n^2), 
  \end{align*}
  which completes the proof of the first statement of the lemma.

  To prove the second statement of the lemma, thanks to \eqref{eq: A zeta}, it is sufficient to verify that $\e[{\Vert{s_n^{(\ell)}(Z_i)}\Vert^2}] = O(h_n^3)$ for $\ell \in \{1,2,3\}$. We start with $\ell \in \{1,2\}$. By the same bounds on $\norm{\ex{{q_n(Z_i,Z_j,Z_j)}|Z_i}}$ as established above and $b_n = O(h_n^2)$, the following holds: (i) for some $C > 0$ and $\gamma_1 > 0$, we have $\Vert{s_n^{(1)}(Z_i)}\Vert^2 \leqslant C h_n^4$ with probability at least $1 - \gamma_1 h_n$ and $\Vert{s_n^{(1)}(Z_i)}\Vert^2 \leqslant C h_n^2$ with probability one. Hence, $\e[{\Vert{s_n^{(1)}(Z_i)}\Vert^2}] = \e[{\Vert{s_n^{(2)}(Z_i)}\Vert^2}] \leqslant C h_n^3$ for some $C > 0$. Finally, notice we can write $s_n^{(3)}(Z_k) = \ex{q_n(Z_i,Z_j,Z_k)|Z_k} - b_n$. Again, for a fixed $\xi_k$, the bound \eqref{eq: A q quad bound} applies for all $\xi_i$ satisfying (i) $\abs{\xi_k - \xi_i} > \alpha h_n$ and (ii) $f_{\xi|X}(\cdot;X_j)$ is continuous on $B_{h_n u/\sqrt{c_{ij}}}(\xi_i)$. By the same reasoning as above, the probability mass of such $\xi_i$ is at least $1 - \gamma_3 h_n$ for some $\gamma_3 > 0$ irrespectively of $(X_i,X_j,Z_k)$. At the same time, the bound \eqref{eq: A q uniform bound} applies with probability one. Hence, integrating $E_q(X_i,\xi_i,X_j,Z_k;h_n)$ over $(X_i,\xi_i,X_j)$ gives $\norm{\ex{q_n(Z_i,Z_j,Z_k)|Z_k}} \leqslant C h_n^2$ a.s. for some $C > 0$. This, paired with $b_n = O(h_n^2)$, implies that $\e [\Vert {{s_n^{(3)}(Z_k)}}\Vert^2] = \e [\Vert {{s_n^{(3)}(Z_i)}}\Vert^2] \leqslant C h_n^4$, which completes the proof.
\end{proof}

  {\noindent \bf Bounding $\hat B_n$} Equipped with Lemma~\ref{lem: A aux q}, we are now ready to bound $\hat B_n$.
  The first step of the proof is to bound $B_n$ defined as 
  \begin{align*}
    B_n &\coloneqq \binom{n}{2}^{-1} h_n^{-1} \sum_{i < j} K\left(\frac{d_{ij}^2}{h_n^2}\right) \frac{1}{n-2} \sum_{k \neq i,j} (W_{ik} - W_{jk}) (g(\xi_i,\xi_k) - g(\xi_j,\xi_k)) \\
    &= \binom{n}{3}^{-1} \sum_{i < j < k} p_n(Z_i,Z_j,Z_k)  =  b_n + \underbrace{\binom{n}{3}^{-1} \sum_{i < j < k} ( p_n(Z_i,Z_j,Z_k) - b_n)}_{\coloneqq U_n},
  \end{align*}
  so $B_n$ is a third order U-statistic with the symmetrized kernel $p_n (Z_i,Z_j,Z_k)$ given by \eqref{eq: A p_n def}, and $b_n \coloneqq \ex{q_n(Z_i,Z_j,Z_k)} = \ex{p_n(Z_i,Z_j,Z_k)}$ is defined and bounded in Lemma~\ref{lem: A aux q}. We proceed with bounding using a Bernstein type inequality for U-statistic developed in \citetsupp{arcones1995bernstein}.
  Specifically, Theorem 2 in \citetsupp{arcones1995bernstein} guarantees that $U_n = O_p\left( \max\left\{\frac{\zeta_{1,n}}{n^{1/2}},\frac{1}{n}\right\} \right)$, where $\zeta_{1,n}$ is defined in Lemma~\ref{lem: A aux q}, which demonstrates $\zeta_{1,n}^2 = O(h_n^3)$. Using Assumption~\ref{ass: K and h}\eqref{item: h_n}, we also obtain $\frac{\zeta_{1,n}}{n^{1/2}} =\frac{O(h_n^{3/2})}{n^{1/2}} = o (h_n^2)$. Finally, since Lemma~\ref{lem: A aux q}(i) guarantees $b_n = O(h_n^2)$, we conclude $B_n = b_n + U_n = O_p (h_n^2 + n^{-1})$.

  The second step is to bound $\hat B_n - B_n$. Combining the result of the result of Lemma \ref{lem: A aux kernel}(ii) with Assumption \ref{ass: K and h}\eqref{item: kernel}, we have, with probability approaching one,
  \begin{align*}
    \norm{\hat B_n - B_n} \leqslant &\overline K' \frac{\max_{i \neq j} \abs{\hat d_{ij}^2 - d_{ij}^2}}{h_n^2} \nonumber \\ &\times \binom{n}{2}^{-1} h_n^{-1} \sum_{i < j} \underbrace{\ind\{\abs{\xi_j - \xi_i} \leqslant \alpha h_n\} \frac{\sum_{k \neq i,j} \norm{(W_{ik} - W_{jk}) (g(\xi_i,\xi_k) - g(\xi_j,\xi_k))}}{n-2}}_{\leqslant C h_n \ind\{\abs{\xi_j - \xi_i} \leqslant \alpha h_n\}} .
  \end{align*}
  Notice that using Assumption \ref{ass: basic}\eqref{item: lipschitz g}, boundedness of $W$, and Lemma \ref{lem: A aux O(1)}(ii), we can bound the second line of the display equation above by $O_p(h_n)$. Together with \eqref{eq: R_n def}, this allows us to obtain $\hat B_n - B_n = O_p\left(\frac{R_n^{-1}}{h_n}\right)$. Together with the previously obtained bound on $B_n$, this delivers the required result.

\subsubsection{Bounding $\hat C_n$}\label{sssec: C_n}
  \begin{align}
    \hat C_n = & \underbrace{\binom{n}{2}^{-1} h_n^{-1} \sum_{i < j} K\left(\frac{d_{ij}^2}{h_n^2}\right) \frac{1}{n-2} \sum_{k \neq i,j} (W_{ik} - W_{jk}) (\varepsilon_{ik} - \varepsilon_{jk})}_{\coloneqq C_n} \notag \\
    &+ \underbrace{\binom{n}{2}^{-1} h_n^{-1} \sum_{i < j} \left(K\left(\frac{\hat d_{ij}^2}{h_n^2}\right) - K\left(\frac{d_{ij}^2}{h_n^2}\right)\right) \frac{1}{n-2} \sum_{k \neq i,j} (W_{ik} - W_{jk}) (\varepsilon_{ik} - \varepsilon_{jk})}_{\coloneqq \Delta \hat C_n}. \label{eq: A delta C_n def}
  \end{align}
  
  The first step is to argue that $C_n = O_p \left(n^{-1}\right)$.
  Let $ K_{n,ij} \coloneqq h_n^{-1} K\left(\frac{ d_{ij}^2}{h_n^2}\right)$. Note
  \begin{align*}
     \sum_{i < j}  K_{n,ij} \sum_{k \neq i,j} (W_{ik} - W_{jk})(\varepsilon_{ik} - \varepsilon_{jk}) 
     &= \sum_{i \neq j}  K_{n,ij} \sum_{k \neq i,j} (W_{ik} - W_{jk}) \varepsilon_{ik}\\
    &= \sum_{i < k} \left(\sum_{j \neq i,k} \left( K_{n,ij} (W_{ik} - W_{jk   }) +  K_{n,kj} (W_{ik} - W_{ji})\right) \right) \varepsilon_{ik}.
  \end{align*}
  Hence, we can represent $C_n$ as 
  \begin{align*}
    C_n   = \binom{n}{2}^{-1} \sum_{i < k} \hat \omega_{n,ik} \varepsilon_{ik}, \quad  \hat \omega_{n,ik} &\coloneqq \underbrace{\frac{\sum_{j \neq i,k} K_{n,ij} (W_{ik} - W_{jk})}{n-2}}_{\coloneqq \hat \kappa_{n, ik}}   + \underbrace{\frac{ \sum_{j \neq i,k} K_{n,kj} (W_{ik} - W_{ji})}{n-2}}_{\coloneqq \hat \eta_{n,ik}}.
  \end{align*}
  
  Next, we want to argue that $\max_{i \neq k} \norm{\hat \omega_{n,ik}} < C_\omega$ with probability approaching one for some $C_\omega > 0$. To this end, we first show that $\max_{i \neq k} \norm{\hat \omega_{n,ik} - \omega_{n,ik}} = o_p(1)$ and then we verify that $\max_{i \neq k} \norm{\omega_{n, ik}} < C $ for some $C > 0$, where
  \begin{align*}
    \omega_{n,ik} \coloneqq \underbrace{\ex{ K_{n,ij} (W_{ik} - W_{jk}) |Z_i,Z_k}}_{\coloneqq \kappa_{n, ik}} + \underbrace{\ex{K_{n,kj} (W_{ik} - W_{ji}) |Z_i,Z_k}}_{\coloneqq \eta_{n,ik}}.
  \end{align*}
  Uniform consistency of $\hat \omega_{n, ik}$ would follow from uniform consistency of $\hat \kappa_{n,ik}$ and $\eta_{n, ik}$ to $\kappa_{n, ik}$ and $\eta_{n, ik}$, respectively. Next, we will establish this result for $\hat \kappa_{n, ik}$ noting that the analogous result for $\hat \eta_{n, ik}$ follows by the same argument. Note that, conditional on $Z_i$ and $Z_k$, $\{K_{n,ij} (W_{ik} - W_{jk})\}_{j \neq i,k}$ is a collection of bounded (given the sample size) independent variables with $\norm{K_{n,ij} \left(W_{ik} - W_{jn}\right)} \leqslant C h_n^{-1}$ and $\ex{\norm{K_{n,ij} \left(W_{ik} - W_{jn}\right)}^2|Z_i,Z_k} \leqslant C h_n^{-1}$ for all $Z_i$ and $Z_k$ for some $C > 0$, where we used boundedness of $W$ and $K$ combined with Lemmas~\ref{lem: A aux kernel}(i) and \ref{lem: A aux O(1)}(i). Hence, applying Bernstein inequality \ref{the: A bounded Bernstein}, we conclude that there exist positive constants $a$, $b$, and $C$ such that for all $Z_i,Z_k$ and $\epsilon > 0$ we have
  \begin{align*}
    \pr\left(\norm{\hat \kappa_{n,ik} - \kappa_{n,ik}} \geqslant \epsilon |Z_i,Z_k \right) &\leqslant C \exp\left(-\frac{(n-2) h_n \epsilon^2}{a + b \epsilon}\right),\\
      \pr\left(\max_{i \neq k}\norm{\hat \kappa_{n,ik} - \kappa_{n,ik}} \geqslant \epsilon \right) &\leqslant \binom{n}{2} C \exp\left(-\frac{(n-2) h_n \epsilon^2}{a + b \epsilon}\right) \rightarrow 0,
  \end{align*}
  where the second inequality follows from the union bound, and the convergence follows from Assumption~\ref{ass: K and h}\eqref{item: h_n}. Hence, $\hat \kappa_{n,ik}$, and, consequently, $\hat \omega_{n, ik}$  are uniformly consistent for $\kappa_{n,ik}$, and $\omega_{n, ik}$, respectively. Finally, using the boundedness of $W$ and $K$ and Lemmas~\ref{lem: A aux kernel}(i) and \ref{lem: A aux O(1)}(i) again, we conclude that $\max_{i \neq k} \norm{\omega_{n,ik}} \leq C$ for some $C > 0$, which delivers the desired result for $\hat \omega_{n, ik}$.

  We are now ready to bound $C_n$ using that $\max_{i \neq k} \norm{\hat \omega_{n,ik}} < C_\omega$ with probability approaching one. Notice that, conditional on $\{Z_i\}_{i=1}^n$, $\{\hat \omega_{n,ik} \varepsilon_{ik}\}_{i < k}$ is a collection of independent vectors with zero mean, which satisfy the requirements of Theorem \ref{the: A unb Bernstein}. Therefore, Theorem \ref{the: A unb Bernstein} guarantees that there exist some positive constants $C$, $a$, $b$ such that for all $\{Z_i\}_{i=1}^n$ satisfying $\max_{i \neq k} \norm{\hat \omega_{n,ik}} < C_{\omega}$, for all $\epsilon > 0$,
  \begin{align*}
    \pr\left(\norm{C_n} > \epsilon |\{X_i,\xi_i\}_{i=1}^n\right) \leqslant C \exp\left(- \frac{\binom{n}{2} \epsilon^2}{a + b \epsilon}\right).
  \end{align*}
  Since the requirement $\max_{i \neq k} \norm{\hat \omega_{n,ik}} < C_\omega$ is satisfied with probability approaching one, we conclude that $C_n = O_p(n^{-1})$.

  The final part of the proof is to bound $\Delta \hat C_n$ defined in \eqref{eq: A delta C_n def}. Applying Corollary~\ref{cor: bernstein} conditional on $\{Z_i\}_{i=1}^n$ and using boundedness of $W$, we conclude there exist positive constants $C_2$, $a_2$ and $b_2$ such that for all $\{Z_i\}_{i=1}^n$ and for all $\epsilon > 0$
  \begin{align*}
    \pr\left(\norm{\frac{1}{n-2} \sum_{k \neq i,j} (W_{ik} - W_{jk}) (\varepsilon_{ik} - \varepsilon_{jk})} > \epsilon | \{Z_i\}_{i=1}^n \right) &\leqslant C_2 \exp\left(-\frac{(n-2)\epsilon^2}{a_2 + b_2 \epsilon}\right),\\
    \pr\left(\max_{i \neq j} \norm{\frac{1}{n-2} \sum_{k \neq i,j} (W_{ik} - W_{jk}) (\varepsilon_{ik} - \varepsilon_{jk})} > \epsilon \right) &\leqslant \binom{n}{2} C_2 \exp\left(-\frac{(n-2)\epsilon^2}{a_2 + b_2 \epsilon}\right),
  \end{align*}
  where the second inequality uses the union bound and the uniformity of those constants over $\{Z_i\}_{i=1}^n$. Hence, we have $\max_{i \neq j} \Vert{\frac{1}{n-2} \sum_{k \neq i,j} (W_{ik} - W_{jk}) (\varepsilon_{ik} - \varepsilon_{jk})}\Vert = O_{p} \left( \left(\frac{\ln n}{n}\right)^{1/2} \right)$ implying
  \begin{align*}
    \norm{\Delta \hat C_n} \leqslant \binom{n}{2}^{-1} h_n^{-1} \sum_{i < j} \abs{K\left(\frac{\hat d_{ij}^2}{h_n^2}\right) - K\left(\frac{d_{ij}^2}{h_n^2}\right)} \times O_{p} \left( \left(\frac{\ln n}{n}\right)^{1/2} \right).
  \end{align*}
  Combining this with \eqref{eq: R_n def}, we obtain $\Vert{\Delta \hat C_n}\Vert = O_p\left(\frac{R_n^{-1}}{h_n^2} \left(\frac{\ln n}{n}\right)^{1/2} \right)$, which, together with $C_n = O_p (n^{-1})$, delivers the desired result for $\hat C_n = C_n + \Delta \hat C_n$.

\subsection{Proof of the results of Section \ref{sssec: homoskedastic rate}}

First, we introduce the following notations and prove an auxiliary lemma stated below.
\begin{align}
  \hat q_{ij}^2(\beta) &\coloneqq \frac{1}{n-2} \sum_{k \neq i,j} \left(Y_{ik} - Y_{jk} - (W_{ik} - W_{jk})'\beta \right)^2, \nonumber \\
  d_{ij,n-2}^2(\beta) &\coloneqq \frac{1}{n-2} \sum_{k \neq i,j} (Y_{ik}^* - Y_{jk}^* - (W_{ik} - W_{jk})' \beta )^2, \nonumber \\
  d_{ij,n}^2(\beta) &\coloneqq \frac{1}{n} \sum_{k} (Y_{ik}^* - Y_{jk}^* - (W_{ik} - W_{jk})' \beta )^2, \label{eq: d ij n star}\\
  q_{ij}^2(\beta) &\coloneqq \ex{(Y_{ik} - Y_{jk} - (W_{ik} - W_{jk})'\beta)^2}, \quad d_{ij}^2(\beta) = {\ex{(Y_{ik}^* - Y_{jk}^* - (W_{ik} - W_{jk})'\beta)^2}} \nonumber.
\end{align}
For brevity, we will also use $g_{ij} \coloneqq g (\xi_i, \xi_i)$ for all $i$ and $j$.

\begin{AppLem}
  \label{lem: A d_n UC}
  Suppose that $\mathcal B$ is compact. Then, under Assumptions \ref{ass: DGP} and \ref{ass: basic},
  \begin{enumerate}[(i)]
    \item $\max_{i \neq j} \max_{\beta \in \mathcal B} \abs{d_{ij,n-2}^2(\beta) - d_{ij}^2(\beta)} = O_p\left(({\ln n}/{n})^{1/2}\right)$,
    \item $\max_{i \neq j} \max_{\beta \in \mathcal B} \abs{d_{ij,n}^2(\beta) - d_{ij}^2(\beta)} = O_p\left(({\ln n}/{n})^{1/2}\right)$.
  \end{enumerate}
\end{AppLem}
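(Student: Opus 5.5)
The plan is to expand the square so that the $\beta$-dependence becomes finite-dimensional and polynomial. Then I would apply a conditional concentration bound for each pair $(i,j)$, uniform over the conditioning values, and close with a union bound over the $O(n^2)$ pairs.

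\textbf{Expansion.} Write $\Delta Y^*_{ijk} \coloneqq Y^*_{ik}-Y^*_{jk}$ and $\Delta W_{ijk} \coloneqq W_{ik}-W_{jk}$. Then
\begin{align*}
(\Delta Y^*_{ijk} - \Delta W_{ijk}'\beta)^2 = (\Delta Y^*_{ijk})^2 - 2\beta'\Delta W_{ijk}\Delta Y^*_{ijk} + \beta'\Delta W_{ijk}\Delta W_{ijk}'\beta .
\end{align*}
Compactness of $\mathcal B$ then gives
\begin{align*}
\max_{\beta\in\mathcal B}\abs{d^2_{ij,n-2}(\beta)-d^2_{ij}(\beta)} \leqslant C\,\max_{m}\abs{\tfrac{1}{n-2}\textstyle\sum_{k\neq i,j}\phi_m(Z_i,Z_j,Z_k)-\ex{\phi_m(Z_i,Z_j,Z_k)|Z_i,Z_j}}.
\end{align*}
Here $\phi_m$ ranges over the finitely many scalar entries of $(\Delta Y^*_{ijk})^2$, $\Delta W_{ijk}\Delta Y^*_{ijk}$ and $\Delta W_{ijk}\Delta W_{ijk}'$. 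This removes any need for a chaining or covering argument over $\beta$.

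\textbf{Concentration.} By Assumption~\ref{ass: basic}\eqref{item: w} and \eqref{item: lipschitz g}, $w$ and $g$ are bounded, so $Y^*$ is bounded and every $\phi_m$ is bounded by a universal constant. Conditional on $(Z_i,Z_j)$, the terms $\{\phi_m(Z_i,Z_j,Z_k)\}_{k\neq i,j}$ are i.i.d. by Assumption~\ref{ass: DGP}\eqref{item: iid}, with conditional mean equal to the population quantity entering $d^2_{ij}(\beta)$. Hoeffding's inequality (or the bounded Bernstein inequality used in the proof of Theorem~\ref{the: kernel rate}) then gives, for every $\epsilon>0$,
\begin{align*}
\pr\Big(\abs{\tfrac{1}{n-2}\textstyle\sum_{k\neq i,j}(\phi_m-\e[\phi_m|Z_i,Z_j])}\geqslant \epsilon \,\Big|\, Z_i,Z_j\Big)\leqslant 2\exp(-c(n-2)\epsilon^2),
\end{align*}
where $c$ does not depend on $(Z_i,Z_j)$. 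Integrating out $(Z_i,Z_j)$ and taking a union bound over the $n(n-1)$ ordered pairs and the finitely many $m$, the choice $\epsilon = M(\ln n/n)^{1/2}$ with $M$ large gives a total failure probability of at most $Cn^2 n^{-cM^2}\to 0$. This proves part (i).

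\textbf{Part (ii).} Here $d^2_{ij,n}(\beta)$ differs from $d^2_{ij,n-2}(\beta)$ in two ways: the normalization changes from $1/(n-2)$ to $1/n$, and the sum includes the two terms $k=i,j$, which involve $Y^*_{ii}$ and $Y^*_{jj}$. Both of these are bounded. So, uniformly in $i,j,\beta$,
\begin{align*}
\abs{d^2_{ij,n}(\beta)-d^2_{ij,n-2}(\beta)} \leqslant \tfrac{2}{n(n-2)}\textstyle\sum_{k\neq i,j}C + \tfrac{2C}{n} = O(n^{-1}),
\end{align*}
and part (ii) follows from part (i).

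\textbf{Main obstacle.} The one delicate point is the dependence structure. Each summand depends on $Z_i$ and $Z_j$, so the summands are not independent unconditionally. They are independent once we condition on $(Z_i,Z_j)$, and the tail bound holds uniformly over the conditioning values. That uniformity is what makes the union bound over all random pairs legitimate. Boundedness of $Y^*$ is essential here: the errors $\varepsilon$ do not appear in $d^2$ at all, so no sub-Gaussian argument is needed.
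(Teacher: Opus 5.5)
Your proof is correct and follows essentially the same route as the paper. You expand the square so that the supremum over the compact $\mathcal B$ reduces to finitely many sample averages; the paper writes these as $S_1,S_2,S_3$ in terms of $\Delta W$ and $g_{ik}-g_{jk}$ rather than $\Delta Y^*$. You then apply an exponential inequality conditional on $(Z_i,Z_j)$ with constants uniform in the conditioning values and finish with a union bound over pairs, exactly as the paper does. For part (ii), your deterministic $O(n^{-1})$ comparison with part (i) is a slightly cleaner substitute for the paper's remark that the same argument applies.
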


\begin{proof}[Proof of Lemma \ref{lem: A d_n UC}]
  Using $Y_{ik}^* - Y_{jk}^* = (W_{ik} - W_{jk})' \beta_0 + g_{ik} - g_{jk}$, we obtain
  \begin{align*}
    d_{ij,n-2}^2(\beta) = &(\beta_0 - \beta)' \frac{1}{n-2} \sum_{k \neq i,j} (W_{ik} - W_{jk}) (W_{ik} - W_{jk})' (\beta_0 - \beta) \\
    &+ \frac{2}{n-2} \sum_{k \neq i,j} (W_{ik} - W_{jk})(g_{ik} - g_{jk}) (\beta_0 - \beta) + \frac{1}{n-2} \sum_{k \neq i,j} (g_{ik} - g_{jk})^2,\\
    d_{ij}^2(\beta) = &(\beta_0 - \beta)' \ex{(W_{ik} - W_{jk})(W_{ik} - W_{jk})'|Z_i,Z_j} (\beta_0 - \beta) \\
    &+ 2 \ex{(W_{ik} - W_{jk})(g_{ik} - g_{jk})|Z_i,Z_j} (\beta_0 - \beta) + \ex{(g_{ik} - g_{jk})^2|Z_i,Z_j}.
  \end{align*}
  Since $\mathcal B$ is bounded, for some positive constants $c_1$ and $c_2$, we have
  \begin{align}
    \label{eq: A d* bound}
    &\max_{i \neq j} \max_{\beta \in \mathcal B} \abs{d_{ij,n-2}^2(\beta) - d_{ij}^2(\beta)} \leqslant c_1 S_1 + c_2 S_2 + S_3,\\
        &S_1 \coloneqq \max_{i \neq j} \norm{\frac{1}{n-2} \sum_{k \neq i,j} (W_{ik} - W_{jk}) (W_{ik} - W_{jk})' -  \mathcal C(X_i,X_j)}, \notag\\ 
    &S_2 \coloneqq \max_{i \neq j} \norm{\frac{1}{n-2} \sum_{k \neq i,j} (W_{ik} - W_{jk}) (g_{ik} - g_{jk})' -  \ex{(W_{ik} - W_{jk})(g_{ik} - g_{jk})|Z_i,Z_j}},\notag \\
    &S_3 \coloneqq \max_{i \neq j} \abs{\frac{1}{n-2} \sum_{k \neq i,j} (g_{ik} - g_{jk})^2 -  \ex{(g_{ik} - g_{jk})^2|Z_i,Z_j}}. \notag
  \end{align}
  where we used $\mathcal C(X_i,X_j) = \ex{(W_{ik} - W_{jk})(W_{ik} - W_{jk})'|Z_i,Z_j}$. Hence, to prove the first statement of the lemma, it is sufficient to show that $S_\ell = O_p\left(({\ln n}/{n})^{1/2}\right)$ for $\ell \in \{1,2,3\}$.

  We start with $\ell = 1$. Conditional on $Z_i$ and $Z_j$, $\{(W_{ik} - W_{jk})(W_{ik} - W_{jk})'\}_{k \neq i,j}$ is a collection of iid matrices, which are uniformly bounded over $Z_i$ and $Z_j$. Hence, by Bernstein inequality \ref{the: A bounded Bernstein}, there exist positive constants $a$, $b$, and $C$ such that for all $\epsilon > 0$ and (almost) all $Z_i$ and $Z_j$, we have
  \begin{align*}
    \pr\left( \norm{\frac{1}{n-2} \sum_{k \neq i,j} (W_{ik} - W_{jk}) (W_{ik} - W_{jk})' -  \mathcal C(X_i,X_j)} > \epsilon | Z_i, Z_j \right) \leqslant C \exp \left( - \frac{(n-2) \epsilon^2}{a + b \epsilon}\right).
  \end{align*}
  Since $a$, $b$, and $C$ are uniform over $Z_i$ and $Z_j$, applying the union bound, we have
  \begin{align*}
    \pr \left(S_1 > \epsilon \right) \leqslant \binom{n}{2} C \exp \left( - \frac{(n-2) \epsilon^2}{a + b \epsilon}\right),
  \end{align*}
  which implies $S_1 = O_p\left(({\ln n}/{n})^{1/2}\right)$. The same argument can be applied to inspect that $S_\ell = O_p\left(({\ln n}/{n})^{1/2}\right)$ for $\ell \in \{2,3\}$, which together with \eqref{eq: A d* bound} delivers the desired result for $d_{ij,n-2}^2$. Essentially the same argument applies to demonstrate the result for $d_{ij,n}^2$.
\end{proof}

  {\noindent \bf Proof of Lemma~\ref{lem: homosked d UC}}. Recall that under \eqref{eq: homosked eps}, $q_{ij}^2(\beta) = d_{ij}^2(\beta) + 2 \sigma^2$. First, we show that
  \begin{align}
    \max_{i \neq j} \max_{\beta \in \mathcal B} \abs{\hat q_{ij}^2(\beta) - q_{ij}^2(\beta)} = \max_{i \neq j} \max_{\beta \in \mathcal B} \abs{\hat q_{ij}^2(\beta) - d_{ij}^2(\beta) - 2 \sigma^2} = O_p\left(({\ln n}/{n})^{1/2}\right). \label{eq: A q UC}
  \end{align}
  Next, $\hat q_{ij}^2(\beta)$ can be decomposed as
  \begin{align*}
    \hat q_{ij}^2(\beta)
    = &\underbrace{\frac{1}{n-2} \sum_{k \neq i,j} \left(Y_{ik}^* - Y_{jk}^* - (W_{ik} - W_{jk})'\beta\right)^2}_{d_{ij,n-2}^2(\beta)} + \underbrace{\frac{1}{n-2} \sum_{k \neq i,j} (\varepsilon_{ik} - \varepsilon_{jk})^2}_{\coloneqq 2 \hat \sigma^2_{ij}} \\
    &+ \underbrace{\frac{2}{n-2} \sum_{k \neq i,j} \left((W_{ik} - W_{jk})' (\beta_0 - \beta) + g_{ik} - g_{jk}\right) (\epsilon_{ik} - \epsilon_{jk})}_{\coloneqq \hat \Delta_{ij}(\beta)}.
  \end{align*}
  Thanks to the result of Lemma~\ref{lem: A d_n UC}(i), to verify \eqref{eq: A q UC},  it is sufficient to show that (i)~$\max_{i \neq j} \max_{\beta} \vert \hat \Delta_{ij}(\beta) \vert = O_p\left(({\ln n}/{n})^{1/2}\right)$, and (ii) $\max_{i \neq j} \vert 2 \hat \sigma^2_{ij} - 2 \sigma^2 \vert = O_p\left(({\ln n}/{n})^{1/2}\right)$. To inspect (i), recall that $\max_{i \neq j} \Vert{\frac{1}{n-2} \sum_{k \neq i,j} (W_{ik} - W_{jk}) (\varepsilon_{ik} - \varepsilon_{jk})}\Vert = O_p\left(({\ln n}/{n})^{1/2}\right)$ has already been demonstrated in Section~\ref{sssec: C_n}. By a nearly identical argument, we also have $\max_{i \neq j} \abs{\frac{1}{n-2} \sum_{k \neq i,j} \left(g_{ik} - g_{jk}\right) (\epsilon_{ik} - \epsilon_{jk})} =  O_p\left(({\ln n}/{n})^{1/2}\right)$. Together with boundedness of $\mathcal B$, these to convergence results imply that (i) holds. Next, to inspect (ii), note that, conditional on $\{Z_i,Z_j\}$, $\{\varepsilon_{ik} - \varepsilon_{jk}\}_{k \neq i,j}$ is a collection of iid sub-Gaussian random variables.
  Then Corollary \ref{cor: bernstein} guarantees that there exist some positive constants $a$, $b$, and $C$ such that for all $\epsilon > 0$ and (almost) all $Z_i$ and $Z_j$
  \begin{align}
    \pr\left(\abs{\frac{1}{n-2} \sum_{k \neq i,j} \left((\varepsilon_{ik} - \varepsilon_{jk})^2 - 2\sigma^2\right) |Z_i, Z_j } > \epsilon \right) \leqslant C \exp \left(-\frac{(n-2) \epsilon^2}{a + b \epsilon}\right). \label{eq: A 2 sigma sq prob bound}
  \end{align}
  Importantly, Assumption \ref{ass: basic}\eqref{item: eps moments} guarantees that $a$, $b$, and $C$ are uniform over $Z_i$ and $Z_j$. Hence, (ii) following from turning \eqref{eq: A 2 sigma sq prob bound} into an analogous unconditional statement and combining it with the union bound. Inspecting (i) and (ii), completes the proof that \eqref{eq: A q UC} holds. Finally, \eqref{eq: A q UC}, in turn, guarantees that  $\max_{i \neq j} \abs{\hat q_{ij}^2 - q_{ij}^2} = O_p\left(({\ln n}/{n})^{1/2}\right)$, which completes the proof of the first part of the lemma.

  To prove the second part, note $2 \hat \sigma^2 = \min_{i \neq j} \hat q_{ij}^2 = \min_{i \neq j} q_{ij}^2 + O_p\left(({\ln n}/{n})^{1/2}\right)$, and $2 \sigma^2 \leqslant \min_{i \neq j} q_{ij}^2 \leqslant 2 \sigma^2 + \min_{i \neq j} \ex{\left(g(\xi_i,\xi_k) - g(\xi_j,\xi_k)\right)^2|\xi_i,\xi_j}$. Note that Assumption \ref{ass: basic}\eqref{item: lipschitz g} guarantees that $\min_{i \neq j} \ex{\left(g(\xi_i,\xi_k) - g(\xi_j,\xi_k)\right)^2|\xi_i,\xi_j} \leqslant \overline G^2 \min_{i \neq j} \norm{\xi_i - \xi_j}^2$, which delivers the result. \qed

\subsection{Proofs of the results of Section \ref{sssec: general hetero rates}}

\subsubsection{Proof of Theorem \ref{the: Y star convergence}}
First, we state prove two auxiliary lemmas.

\begin{AppLem}
  \label{lem: N(i)}
  Suppose that the hypotheses of Theorem \ref{the: Y star convergence} are satisfied. Then, for any $C > 0$, there exists $a_C > 0$ such that we have $\min_i \frac{\abs{\mathcal B_i (\delta_{n,C}) }}{n-1} \geqslant C (n^{-1} \ln n)^{1/2}$ with probability approaching one,
  where $\mathcal B_i (\delta) \coloneqq \{i' \neq i: X_{i'} = X_i, \xi_{i'} \in B_\delta (\xi_i) \}$, and $\delta_{n,C} \coloneqq a_C (n^{-1} \ln n)^{\frac{1}{2 d_\xi}}$. 
\end{AppLem}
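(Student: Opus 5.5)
The plan is to lower-bound the conditional mean of $\abs{\mathcal B_i(\delta_{n,C})}$ using Assumption~\ref{ass: d* rate}\eqref{item: ball mass}, then obtain a uniform-in-$i$ lower bound via a Chernoff-type inequality and the union bound. Fix $i$ and condition on $Z_i=(X_i,\xi_i)$. Then $\abs{\mathcal B_i(\delta)}=\sum_{i'\neq i}\ind\{X_{i'}=X_i,\ \xi_{i'}\in B_\delta(\xi_i)\}$ is a sum of $n-1$ i.i.d.\ Bernoulli variables. Their success probability is
\begin{align*}
p_i(\delta)=\pr(X=X_i)\,\pr(\xi\in B_\delta(\xi_i)\mid X=X_i).
\end{align*}
By Assumption~\ref{ass: d* rate}\eqref{item: disc X}, $X$ takes finitely many values, each with positive probability; set $\underline\pi\coloneqq\min_r\pr(X=x_r)>0$. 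Since $\xi_i\in\supp{\xi|X=X_i}$ almost surely, Assumption~\ref{ass: d* rate}\eqref{item: ball mass} gives $p_i(\delta)\geqslant\underline\pi\kappa\delta^{d_\xi}$ for all $\delta\leqslant\overline\delta$. Plugging in $\delta_{n,C}=a_C(n^{-1}\ln n)^{1/(2d_\xi)}$, which is below $\overline\delta$ for large $n$, yields
\begin{align*}
p_i(\delta_{n,C})\geqslant \underline\pi\kappa a_C^{d_\xi}(n^{-1}\ln n)^{1/2}.
\end{align*}
Choose $a_C$ so that $\underline\pi\kappa a_C^{d_\xi}=2C$. The mean fraction $p_i$ is then at least $2C(\ln n/n)^{1/2}$, uniformly in $i$.

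For the deviation step, apply the multiplicative Chernoff bound (or Bernstein's inequality, which applies since the summands are bounded and have variance at most $p_i$), conditionally on $Z_i$:
\begin{align*}
\pr\bigl(\abs{\mathcal B_i}\leqslant (n-1)p_i/2 \mid Z_i\bigr)\leqslant\exp\bigl(-(n-1)p_i/8\bigr)\leqslant\exp\bigl(-c\,C(n\ln n)^{1/2}\bigr).
\end{align*}
This bound is uniform over $Z_i$, so it also holds unconditionally. A union bound over the $n$ agents gives total failure probability $n\exp(-cC\sqrt{n\ln n})\to0$. On the complement event, every $i$ satisfies $\abs{\mathcal B_i}/(n-1)\geqslant p_i/2\geqslant C(n^{-1}\ln n)^{1/2}$, which is the claim.

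The main obstacle is the deviation step, because the mean $p_i$ is vanishing. An additive Hoeffding bound would be too weak here; what makes the argument work is that the expected count $(n-1)p_i\asymp\sqrt{n\ln n}$ grows much faster than $\ln n$. This is exactly what the relative (multiplicative) Chernoff bound exploits to beat the factor $n$ from the union bound. The other points to check are minor: that $\xi_i$ lies in the conditional support almost surely, so the ball-mass condition applies to it, and that the excluded index $i'=i$ only changes $n$ to $n-1$.
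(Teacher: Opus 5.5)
Your proof is correct. It differs from the paper's in the concentration step.

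The paper works with the additive deviation $\abs{\mathcal Q_i(\delta)-p_Z(X_i,\xi_i;\delta)}$, where $\mathcal Q_i(\delta)=\abs{\mathcal B_i(\delta)}/(n-1)$. It applies the bounded Bernstein inequality (with the variance bounded by a constant) at $\epsilon=\tilde C(n^{-1}\ln n)^{1/2}$, then a union bound, which gives a polynomially small failure probability. It then uses the ball-mass bound $p_Z\geqslant\underline\kappa a^{d_\xi}(n^{-1}\ln n)^{1/2}$ and chooses $a_C$ so that $\underline\kappa a_C^{d_\xi}-\tilde C>C$.

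So your side remark that an additive bound ``would be too weak'' is mistaken here. The mean fraction is of order $(n^{-1}\ln n)^{1/2}$, which is exactly the scale of the uniform additive fluctuations. Because the lemma lets $a_C$ depend on $C$, the mean can be made to exceed those fluctuations by any fixed constant factor. Your proof does not rely on the remark, so nothing breaks.

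Comparing the two routes:
\begin{itemize}
\item Your multiplicative Chernoff bound gives a much smaller failure probability, $n\exp(-cC\sqrt{n\ln n})$, compared with $n^{1-c\tilde C^2}$ in the paper.
\item Your choice of constant is cleaner: $\underline\pi\kappa a_C^{d_\xi}=2C$, with only a factor-two loss.
\item Your argument would still work if the expected neighbourhood counts were much smaller, for example a large multiple of $\ln n$. A deviation bound that ignores the small variance $p_i$ would then fail.
\item The paper's route is slightly more elementary and reuses the Bernstein inequality already stated in its appendix.
\end{itemize}
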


\begin{proof}[Proof of Lemma \ref{lem: N(i)}]
  Let $p_Z(x,\xi;\delta) \coloneqq \pr(X_j = x) \pr(\xi_j \in B_\delta(\xi)|X_j = x)$, $\mathcal Q_i (\delta) \coloneqq \frac{\abs{\mathcal B_i(\delta)}}{n-1}$, and $p_r \coloneqq \pr(X = x_r)$. Then, by Bernstein inequality \ref{the: A bounded Bernstein}, we have for all $\epsilon \in (0,1)$
  \begin{align*}
    \pr(\abs{\mathcal Q_i (\delta) - p_Z(X_i,\xi_i;\delta)} \geqslant \epsilon |X_i, \xi_i ) \leqslant 2 \exp \left(- \frac{\frac{1}{2} (n-1) \epsilon^2}{1 + \frac{1}{3} \epsilon}\right) \leqslant 2 \exp\left(-\frac{1}{4} n \epsilon^2\right).
  \end{align*}
  Hence, we also have $\pr(\abs{\mathcal Q_i (\delta) - p_Z(X_i,\xi_i;\delta)} \geqslant \epsilon  ) \leqslant 2 \exp\left(-\frac{1}{4} n \epsilon^2\right)$
  and, using the union bound, $\pr(\max_{i} \abs{\mathcal Q_i (\delta) - p_Z(X_i,\xi_i;\delta)} \geqslant \epsilon  ) \leqslant 2 n \exp\left(-\frac{1}{4} n \epsilon^2\right)$. Hence, for any $\tilde C > 0$ (and large enough $n$), $\pr(\max_{i} \abs{\mathcal Q_i (\delta) - p_Z(X_i,\xi_i;\delta)} \geqslant \tilde C (n^{-1} \ln n)^{1/2}  ) \leqslant 2 n^{1 - \tilde C/4}$.
  Using Assumption \ref{ass: d* rate}\eqref{item: ball mass}, for $\delta \leqslant \overline \delta$,  we have $p_Z(X_i,\xi_i;\delta) 
      \geqslant \underline \kappa \delta^{d_\xi}$ a.s. where $\underline \kappa = \kappa \min_r p_r$.
  Then,  we have $\min p_Z (X_i,\xi_i; a (n^{-1} \ln n)^{\dxi}) \geqslant \underline{\kappa} a^{d_\xi} (n^{-1} \ln n)^{1/2}$ for $\delta = a (n^{-1} \ln n)^{\dxi}$ for a fixed $a > 0$ and for large enough $n$.
  Hence, with probability at least $1 - 2 n^{1 - \tilde C/4}$, we have
  \begin{align}
    \min_{i} \mathcal Q_i (a (n^{-1} \ln n)^{\dxi}) \geqslant  (\underline \kappa a^{d_\xi} - \tilde C) (n^{-1} \ln n)^{1/2}. \label{eq: A aux min Q}
  \end{align}
  Take some fixed $\tilde C > 4$. In this case, \eqref{eq: A aux min Q} holds with probability approaching one, and take any fixed $a_C > \left(\underline \kappa^{-1} (C + \tilde C)\right)^{1/d_\xi}$. For such $a_C$, $(\underline \kappa a^{d_\xi} - \tilde C) (n^{-1} \ln n)^{1/2} > C (n^{-1} \ln n)^{1/2}$. As a result, we conclude that with probability approaching one, we have $\min_{i} \mathcal Q_i (a (n^{-1} \ln n)^{\dxi}) > C (n^{-1} \ln n)^{1/2}$, which completes the proof.
\end{proof}

\begin{AppLem}
  \label{lem: Y* convergence}
  Suppose that the hypotheses of Theorem \ref{the: Y star convergence} are satisfied. Then we have:
  \begin{enumerate}[(i)]
    \item $\max_{i} \max_{i' \in \hatN_i(n_i)} n^{-1} \sum_{\ell} (Y_{i \ell}^* - Y_{i' \ell}^*)^2 = O_p\left((n^{-1} \ln n)^{\dxi}\right)$;

    \item $\max_i \max_{i' \in \hatN_i(n_i)} \max_k \abs{n^{-1} \sum_{\ell} Y_{k \ell}^* (Y_{i' \ell}^* - Y_{i \ell}^*)}= O_p\left((n^{-1} \ln n)^{\dxi}\right)$;

    \item $\max_{i} \max_{i' \in \hatN_i(n_i)} \int (g(\xi_i, \xi) - g(\xi_{i'}, \xi))^2 dP_\xi (\xi) = O_p\left((n^{-1} \ln n)^{\dxi}\right)$.
  \end{enumerate}
\end{AppLem}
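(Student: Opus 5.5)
We follow the argument of \citet{Zhang2017}, adapted to sub-Gaussian errors and discrete $X$. The plan has four steps.

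\textbf{Step 1: uniform concentration of the empirical similarity distance.} For every pair $(i,i')$ and every $k$, write
\begin{align*}
  (n-3)^{-1}\sum_{\ell\neq i,i',k}(Y_{i\ell}-Y_{i'\ell})Y_{k\ell}
  = n^{-1}\sum_{\ell}(Y^*_{i\ell}-Y^*_{i'\ell})Y^*_{k\ell} + \text{noise}.
\end{align*}
The noise collects the cross terms linear in $\varepsilon$, together with products $\varepsilon_{i\ell}\varepsilon_{k\ell}$ and $\varepsilon_{i'\ell}\varepsilon_{k\ell}$, which have zero conditional mean for distinct indices. Conditional on $\{Z_i\}$, each of these is an average of independent sub-Gaussian or sub-exponential terms. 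Corollary~\ref{cor: bernstein} with a union bound over $O(n^3)$ triples gives a uniform bound $O_p((\ln n/n)^{1/2})$. Replacing $n-3$ by $n$ and dropping the excluded indices costs $O(n^{-1})$ by boundedness. Hence, uniformly,
\begin{align*}
  \hat d_\infty^2(i,i') = \max_k \Big|n^{-1}\sum_\ell (Y^*_{i\ell}-Y^*_{i'\ell})Y^*_{k\ell}\Big| + O_p((\ln n/n)^{1/2}).
\end{align*}

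\textbf{Step 2: the selected neighbors have small population distance.} By Lemma~\ref{lem: N(i)}, with probability approaching one every $i$ has at least $n_i$ agents $i''$ with $X_{i''}=X_i$ and $|\xi_{i''}-\xi_i|\le\delta_n\asymp(\ln n/n)^{1/(2d_\xi)}$; we choose $C$ relative to $\overline C$ so that this count exceeds $n_i$. By the Lipschitz property of $g$ and boundedness, the true similarity of each such agent to $i$ is at most $\overline G\,\delta_n\cdot O(1)$. Since $\hatN_i(n_i)$ collects the $n_i$ smallest values of $\hat d^2_\infty$, every $i'\in\hatN_i(n_i)$ has $\hat d^2_\infty(i,i')=O_p(\delta_n)$. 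By Step 1, its population analogue $\max_k|n^{-1}\sum_\ell(Y^*_{i\ell}-Y^*_{i'\ell})Y^*_{k\ell}|$ is then also $O_p(\delta_n)$, since $(\ln n/n)^{1/2}\le\delta_n$. This gives part (ii) directly.

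\textbf{Step 3: from similarity to the squared $L^2$ distance.} For part (i) we use the trick from Section~\ref{ssec: Y star ID}:
\begin{align*}
  n^{-1}\sum_\ell(Y^*_{i\ell}-Y^*_{i'\ell})^2
  = n^{-1}\sum_\ell(Y^*_{i\ell}-Y^*_{i'\ell})Y^*_{i\ell} - n^{-1}\sum_\ell(Y^*_{i\ell}-Y^*_{i'\ell})Y^*_{i'\ell},
\end{align*}
and each term is bounded by the maximum over $k$ in part (ii). Here $k=i,i'$ is allowed once the diagonal terms are absorbed at cost $O(n^{-1})$. Because $X_{i'}=X_i$, the $w$ parts cancel, so $Y^*_{i\ell}-Y^*_{i'\ell}=g(\xi_i,\xi_\ell)-g(\xi_{i'},\xi_\ell)$. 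Part (iii) then follows by replacing the empirical average over $\ell$ with the integral against $P_\xi$. This replacement is done uniformly over pairs: the class $\{(g(\xi_a,\cdot)-g(\xi_b,\cdot))^2\}$ indexed by sample points is handled with a Bernstein bound plus a union bound over $O(n^2)$ pairs, excluding $\ell\in\{i,i'\}$. That gives an $O_p((\ln n/n)^{1/2})$ error, which is of smaller order.

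\textbf{Main obstacle.} The delicate step is Step 2's comparison argument. We must make sure the ranking is done within the cell $X=X_i$, and that the oracle neighbors from Lemma~\ref{lem: N(i)} number at least $n_i$ uniformly in $i$. We must also handle the fact that the Step 1 noise bound holds uniformly even though $\hatN_i$ is data-dependent; the uniform maximum over all pairs resolves this. A second point needing care is the product noise $\varepsilon_{i\ell}\varepsilon_{k\ell}$. These terms are sub-exponential rather than sub-Gaussian, so Corollary~\ref{cor: bernstein} must be applied in its Bernstein (sub-exponential) form, which still yields the $(\ln n/n)^{1/2}$ rate.
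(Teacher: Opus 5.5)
Your Steps 1 and 2 and your route to part (iii) match the paper's argument: uniform concentration of the $\hat d_\infty^2$ statistics over triples, comparison with the more than $n_i$ oracle neighbours in $\mathcal B_i(\delta_{n,C})$ from Lemma~\ref{lem: N(i)}, cancellation of the $w$-part within an $X$-cell, and a Bernstein-plus-union bound for the integral. The gap is at the crux.

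\textbf{The gap.} $\hat d_\infty^2(i,i')$ maximizes only over $k\neq i,i'$. This exclusion is necessary: for $k=i$ the product $(Y_{i\ell}-Y_{i'\ell})Y_{i\ell}$ contains $\varepsilon_{i\ell}^2$, whose mean is not zero. So Steps 1--2 only bound $\max_{k\neq i,i'}\abs{n^{-1}\sum_\ell (Y^*_{i\ell}-Y^*_{i'\ell})Y^*_{k\ell}}$, and part (ii) does not follow ``directly.'' In Step 3 you then use exactly the excluded cases $k=i$ and $k=i'$, saying they are allowed ``once the diagonal terms are absorbed at cost $O(n^{-1})$.'' That misidentifies the issue. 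These are full averages over $\ell$, and the neighbourhood selection carries no information about them. Worse, their difference is exactly $n^{-1}\sum_\ell (Y^*_{i\ell}-Y^*_{i'\ell})^2$, so the step is circular. Passing from the similarity distance (a maximum over third agents) to the $L^2$ distance is the entire content of part (i), and it needs an additional idea.

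\textbf{The missing idea: a proxy agent.} This is how the paper closes the step. By Lemma~\ref{lem: N(i)}, with probability approaching one every ball $\mathcal B_i(\delta_{n,C})$ contains of order $(n\ln n)^{1/2}$ agents. So for each $i$ and $i'\in\hatN_i(n_i)$ you can pick $\tilde k\in\mathcal B_i(\delta_{n,C})\setminus\{i'\}$ and $\tilde k'\in\mathcal B_{i'}(\delta_{n,C})\setminus\{i\}$. Since $X_{\tilde k}=X_i$, the $w$-parts cancel, and Assumption~\ref{ass: basic}\eqref{item: lipschitz g} gives $\abs{Y^*_{i\ell}-Y^*_{\tilde k\ell}}\leqslant\overline G\delta_{n,C}$ for every $\ell$. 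Hence
\begin{align*}
\Big\vert n^{-1}\sum_\ell (Y^*_{i\ell}-Y^*_{i'\ell})Y^*_{i\ell}-n^{-1}\sum_\ell (Y^*_{i\ell}-Y^*_{i'\ell})Y^*_{\tilde k\ell}\Big\vert\leqslant 2\overline Y\,\overline G\,\delta_{n,C},
\end{align*}
where $\overline Y$ bounds $\abs{Y^*_{ij}}$, and symmetrically for $i'$ with $\tilde k'$. Because $\tilde k,\tilde k'\notin\{i,i'\}$, your Step 2 bound applies to them. The $O(\delta_{n,C})$ replacement cost is exactly the target rate $(n^{-1}\ln n)^{1/(2d_\xi)}$. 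With this inserted, part (ii) holds for all $k$, your Step 3 decomposition gives part (i), and part (iii) follows as you describe.
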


\begin{proof}[Proof of Lemma \ref{lem: Y* convergence}]
   Note that both statements of the lemma trivially hold when $i = i'$. For this reason, below $\max_{i' \in \hatN_i(n_i)}$ should be read as $\max_{i' \in \hatN_i (n_i) : i' \neq i}$.

  \noindent
  \textit{Proof of Part (i).}
  First, take some $C > \overline C$, where $\overline C$ is defined in the text of Theorem~\ref{the: Y star convergence}. Lemma \ref{lem: N(i)} guarantees, that there exist some $a_C$ and $\delta_{n,C} = a_C (n^{-1} \ln n)^{\dxi}$ such that with probability approaching one, we have $\min_{i} \abs{\mathcal B_i (\delta_{n,C})} \geqslant C (n \ln n)^{1/2}$.
  Since we have $\max_{i} n_i \leqslant \overline C (n \ln n)^{1/2} < C (n \ln n)^{1/2}$,
  we conclude that $\min_{i} \abs{\mathcal B_i (\delta_{n,C})} > n_i$ holds with probability approaching one. For any agents $i$ and $i' \in \mathcal \hatN_i (n_i)$, let $k$ and $k'$ be two agents (other than $i$ and $i'$) such that $k \in \mathcal B_i (\delta_{n,C})$ and $k' \in \mathcal B_{i'} (\delta_{n,C})$ (by Lemma \ref{lem: N(i)}, this happens simultaneously for all $i$ and $i'$ with probability approaching one for large enough $n$). By boundedness of $Y^*$ and Assumption \ref{ass: basic}\eqref{item: lipschitz g}, for $\mathcal G \coloneqq \norm{Y^*}_{\infty} \overline G$, we have
  \begin{align}
    \vert{n^{-1} \sum_{\ell} Y_{i \ell}^{*2} - n^{-1} \sum_{\ell} Y^*_{i \ell} Y^*_{k \ell}}\vert \leqslant \mathcal G \delta_{n,C}, \quad \vert{n^{-1} \sum_{\ell} Y^*_{i' \ell} Y^*_{i \ell} - n^{-1} \sum_{\ell} Y^*_{i' \ell} Y^*_{k \ell}}\vert \leqslant \mathcal G \delta_{n,C}, \notag \\
    \vert{n^{-1} \sum_{\ell} Y_{i' \ell}^{*2} - n^{-1} \sum_{\ell} Y^*_{i' \ell} Y^*_{k' \ell}}\vert \leqslant \mathcal G \delta_{n,C}, \quad
    \vert{n^{-1} \sum_{\ell} Y^*_{i \ell} Y^*_{i' \ell} - n^{-1} \sum_{\ell} Y^*_{i \ell} Y^*_{k' \ell}}\vert \leqslant \mathcal G \delta_{n,C} \label{eq: A G delta bound}
  \end{align}
  with probability approaching one for all $i$, $i' \in \hatN_i (n_i)$. Then
  \begin{align*}
    n^{-1} \sum_{\ell} (Y^*_{i \ell} - Y^*_{i' \ell})^2 &\leqslant \vert{n^{-1} \sum_{\ell} Y_{i \ell}^{*2} - n^{-1} \sum_{\ell} Y^*_{i' \ell} Y^*_{i \ell} }\vert + \vert{n^{-1} \sum_{\ell} Y_{i' \ell}^{*2} - n^{-1} \sum_{\ell} Y^*_{i \ell} Y^*_{i' \ell} }\vert \\
    &\leqslant \vert{n^{-1} \sum_\ell (Y^*_{i \ell} - Y^*_{i' \ell}) Y^*_{k \ell} }\vert + \vert{n^{-1} \sum_\ell (Y^*_{i' \ell} - Y^*_{i \ell}) Y^*_{k' \ell} }\vert + 4 \mathcal G \delta_{n,C}.
  \end{align*}
  Note that since $Y^*$ is bounded, there exists $\Delta > 0$ such that for all $i$, $i'$, and $k$,
  \begin{align}
    \vert{n^{-1} \sum_\ell (Y^*_{i \ell} - Y^*_{i' \ell}) Y^*_{k \ell} }\vert \leqslant \vert{(n-3)^{-1} \sum_{\ell \neq i,i',k} (Y^*_{i \ell}- Y^*_{i' \ell}) Y^*_{k \ell} }\vert + \frac{\Delta}{n}.\label{eq: A bounded Y Delta n}
  \end{align}
  Moreover, as usual, combining Corollary \ref{cor: bernstein} with the union bound, we have
  \begin{align}
    r_n \coloneqq \max_{i \neq i' \neq k} \abs{\frac{\sum_{\ell \neq i,i',k} (Y_{i \ell}- Y_{i' \ell}) Y_{k \ell} - \sum_{\ell \neq i,i',k} (Y^*_{i \ell}- Y^*_{i' \ell}) Y^*_{k \ell}}{n-3}} = O_p \left(\left(\frac{\ln n}{n}\right)^{1/2}\right). \label{eq: A r_n}
  \end{align}
  So, we have
  \begin{align}
    &\max_{i} \max_{i' \in \hatN_i(n_i)} n^{-1} \sum_{\ell} (Y_{i \ell}^* - Y_{i' \ell}^*)^2 \leqslant \max_{i} \max_{i' \in \hatN_i(n_i)} \max_{k \neq i,i'} \vert{(n-3)^{-1} \sum_{\ell \neq i,i',k} (Y_{i \ell}- Y_{i' \ell}) Y_{k \ell} }\vert \notag \\ + &\max_{i} \max_{i' \in \hatN_i(n_i)} \max_{k' \neq i,i'} \vert{(n-3)^{-1} \sum_{\ell \neq i,i',k} (Y_{i' \ell}- Y_{i \ell}) Y_{k' \ell} }\vert  + 4 \mathcal G \delta_{n,C} + \frac{2 \Delta}{n} + 2 r_n \notag \\
    = &2 \max_{i} \max_{i' \in \hatN_i(n_i)} \hat d_{\infty}^2(i,i') + O_p((n^{-1} \ln n)^{\dxi}), \label{eq: A aux neigh bound}
  \end{align}
  where the last line uses the definitions of $\hat d_{\infty}^2(i,i')$ and $\delta_{n,C} = a_C (n^{-1} \ln n)^{\dxi}$.

  Next, let $\hat d_{\infty, *}^2(i, i') \coloneqq \max_{k \neq i,i'} \vert{(n-3)^{-1} \sum_{\ell \neq i,i',k} (Y_{i \ell}^* - Y_{i' \ell}^*) Y_{k \ell}^* }\vert$. For any $i$ and $i'$, \eqref{eq: A r_n} guarantees that $\max_{i} \max_{i' \neq i} \vert{\hat d_{\infty}^2(i,i') - \hat d_{\infty, *}^2 (i,i')}\vert = O_p\left((n^{-1} \ln n)^{1/2}\right)$.  Analogously to \eqref{eq: A G delta bound}, for all $i$ and $i' \in \mathcal B_i (\delta_{n,C})$, we have $\hat d_{\infty, *}^2 (i,i') \leqslant \mathcal G \delta_{n,C}$ and hence also
  \begin{align*}
    \hat d_{\infty}^2 (i,i') \leqslant \mathcal G \delta_{n,C} + O_{p,n}\left((n^{-1} \ln n)^{1/2}\right).
  \end{align*}
  Then, by definition of $\mathcal N_i$ and the fact that for all $i$ we have $n_i < \abs{\mathcal B_{i} (\delta_{n,C})}$ with probability approaching one, we conclude
  \begin{align}
    \max_{i} \max_{i' \in \hatN_i(n_i)} \hat d_{\infty}^2(i,i') &\leqslant \mathcal G \delta_{n,C} + O_{p,n}\left((n^{-1} \ln n)^{1/2}\right) = O_{p,n}\left((n^{-1} \ln n)^{\dxi}\right), \label{eq: A tilde d rate}
  \end{align}
  which together with \eqref{eq: A aux neigh bound} completes the proof of Part (i).

  \noindent
  \textit{Proof of Part (ii).} We prove the result by bounding the quantity of interest for $k \neq i, i'$ and then for $k = i$ (or $k = i'$). 
  First, we consider $k \neq i, i'$. Then, using \eqref{eq: A bounded Y Delta n} and \eqref{eq: A r_n},
  \begin{small}
  \begin{align}
    &\max_i \max_{i' \in \hatN_i(n_i)} \max_{k \neq i,i'} \vert{n^{-1} \sum_{\ell} Y_{k \ell}^* (Y_{i' \ell}^* - Y_{i \ell}^*)}\vert = \max_i \max_{i' \in \hatN_i(n_i)} \max_{k \neq i,i'} \vert{(n-3)^{-1} \sum_{\ell \neq i, i', k} Y_{k \ell} (Y_{i' \ell} - Y_{i \ell})}\vert \nonumber \notag\\ + &O_p\left(\left({\ln n}/{n}\right)^{1/2}\right) = \max_{i} \max_{i' \in \hatN_i(n_i)} \hat d_{\infty}^2(i,i') + O_p\left(\left({\ln n}/{n}\right)^{1/2}\right) = O_{p,n}\left((n^{-1} \ln n)^{\dxi}\right), \label{eq: A k neq i case}
  \end{align}
  \end{small}
  where the last equality uses \eqref{eq: A tilde d rate}.

  To complete the proof, consider $k = i$ (the case $k = i'$ is analogous). Then, as argued at the beginning of the proof of Part (i), for any pair agents $i$ and $i'$, there exist some $\tilde k \in \mathcal B_i (\delta_{n,C})$ (other than $i$ and $i'$) with probability approaching one. Then, by boundedness of $Y^*$ and Assumption \ref{ass: basic}\eqref{item: lipschitz g},
  \begin{align*}
    \max_i \max_{i' \in \hatN_i (n_i)} \vert{n^{-1} \sum_\ell Y_{i \ell}^* (Y_{i' \ell}^* - Y_{i \ell}^*) }\vert &=\max_i \max_{i' \in \hatN_i (n_i)} \vert{n^{-1} \sum_\ell Y_{\tilde k \ell}^* (Y_{i' \ell}^* - Y_{i \ell}^*) }\vert + O(\delta_{n,C}) \\&= O_{p,n}\left((n^{-1} \ln n)^{\dxi}\right),
  \end{align*}
  where the second equality follows from \eqref{eq: A k neq i case}. This completes the proof of Part (ii).

  \noindent
  \textit{Proof of Part (iii).} Recall that for $i' \in \hatN_i (n_i)$, we have $Y_{i \ell }^* - Y_{i' \ell}^* = g(\xi_i,\xi_\ell) - g(\xi_{i'},\xi_\ell)$. Hence, the result of Part (i) reads as
  \begin{align}
    \label{eq: A L2 sum}
    \max_{i} \max_{i' \in \hatN_i(n_i)} n^{-1} \sum_{\ell} (g(\xi_i, \xi_\ell) - g(\xi_{i'},\xi_\ell))^2 = O_p\left((n^{-1} \ln n)^{\dxi}\right).
  \end{align}
  Applying Bernstein inequality \ref{the: A bounded Bernstein} conditional on $\xi_i$ and $\xi_{i'}$ and the union bound, we obtain
  \begin{align}
    \label{eq: A L2 UC}
    \max_{i,i'} \Big\vert{n^{-1} \sum_{\ell} (g(\xi_i, \xi_\ell) - g(\xi_{i'},\xi_\ell))^2 - \int (g(\xi_i, \xi) - g(\xi_{i'}, \xi))^2 dP_\xi (\xi)}\Big\vert = O_p\left(\left({\ln n}/{n}\right)^{1/2}\right).
  \end{align}
  Finally, \eqref{eq: A L2 sum} and \eqref{eq: A L2 UC} together deliver the result.
\end{proof}

\begin{proof}[Proof of Theorem \ref{the: Y star convergence}]
  \noindent
  \textit{Proof of Part (i).} First, notice that
  \begin{align}
    &n^{-1} \sum_{j} \left(\hat Y_{ij}^* - Y_{ij}^*\right)^2 = n^{-1} \sum_j \left(\frac{\sum_{i' \in \hatN_i (n_i)} (Y_{i' j} - Y_{ij}^*) }{n_i}\right)^2 \leq 2 (\mathcal S_i + \mathcal J_i),  \label{eq: A 2 S J bound}  \\
    &\mathcal S_{i} \coloneqq \frac{1}{n} \sum_j \left(\frac{\sum_{i' \in \hatN_i (n_i)} (Y_{i' j} - Y_{i' j}^*) }{n_i} \right)^2, \quad \mathcal J_i \coloneqq \frac{1}{n} \sum_j \left(\frac{\sum_{i' \in \hatN_i (n_i)} (Y_{i' j}^* - Y_{i j}^*) }{n_i} \right)^2. \notag
  \end{align}
  We start with bounding $\mathcal S_i$. Note that, for all $i'$ and $j$, $Y_{i' j} - Y_{i' j}^* = \varepsilon_{i' j}$, with the convention $\varepsilon_{i' j} = - Y^*_{i' i'}$ when $i' = j$. Then $\mathcal S_{i}$ can be decomposed as follows
  \begin{align*}
    \mathcal S_{i} &= \frac{1}{n n_i^2} \sum_{j} \left(\sum_{i' \in \hatN_i (n_i)} \varepsilon_{i' j}^2 + \sum_{i' \in \hatN_i (n_i)} \sum_{i'' \in \hatN_i(n_i), i'' \neq i'} \varepsilon_{i' j} \varepsilon_{i'' j}  \right) \\
    &= \underbrace{\frac{1}{n_i^2} \sum_{i' \in \hatN_i (n_i)} \frac{1}{n} \sum_{j} \varepsilon_{i' j}^2}_{\coloneqq \mathcal S_{1i}} + \underbrace{\frac{1}{n_i^2} \sum_{i' \in \hatN_i (n_i)} \sum_{i'' \in \hatN_i(n_i), i'' \neq i'} \frac{1}{n} \sum_{j} \varepsilon_{i' j} \varepsilon_{i'' j}}_{\coloneqq \mathcal S_{2i}}.
  \end{align*}
  First, $\frac{1}{n} \sum_{j} \varepsilon_{i' j}^2 = \frac{1}{n} \sum_{j \neq i'} \varepsilon_{i' j}^2 - \frac{Y_{i' i'}^*}{n}$.
  Applying Corollary \ref{cor: bernstein} (conditional on $Z_{i'}$) and the union bound gives $\max_{i'} \vert{\frac{1}{n-1} \sum_{j \neq i'} \varepsilon_{i' j}^2 - \sigma_{i'}^2 }\vert = O_p\left(\left({\ln n}/{n}\right)^{1/2}\right)$, where $\sigma_i^2 \coloneqq \ex{\varepsilon_{ij}^2|Z_i} \leqslant C$ a.s. by Assumption \ref{ass: basic}\eqref{item: eps moments}. Finally, since $Y^*$ is bounded, we conclude
  \begin{align}
    \label{eq: A S1 bound}
    \max_{i'} \frac{1}{n} \sum_j \varepsilon_{i' j}^2 \leqslant C + O_p\left(\left({\ln n}/{n}\right)^{1/2}\right), \quad \max_{i} \mathcal S_{1i} \leqslant \frac{C}{n_i} + o_p (n_i^{-1}).
  \end{align}
  To bound $\mathcal S_{2i}$, note $\frac{1}{n} \sum_{j} \varepsilon_{i' j} \varepsilon_{i'' j} = \frac{1}{n} \sum_{j \neq i',i''} \varepsilon_{i' j} \varepsilon_{i'' j} - \frac{1}{n} (Y_{i' i'}^* \varepsilon_{i'' i'} + \varepsilon_{i' i''} Y_{i'' i''}^*)$.
  First, as usual, applying Bernstein inequality \ref{the: A unb Bernstein} (conditional on $Z_{i'}$ and $Z_{i''}$) and the union bound gives $\max_{i' \neq i''} \vert{\frac{1}{n-2} \sum_{j \neq i',i''} \varepsilon_{i' j} \varepsilon_{i'' j}}\vert = O_p\left(\left({\ln n}/{n}\right)^{1/2}\right)$. Similarly, since $\varepsilon_{i' i''}$'s are (uniformly) sub-Gaussian, we have $\max_{i' \neq i''} \abs{\frac{1}{n} \varepsilon_{i' i''}} = O_p\left(n^{-1} (\ln n)^{1/2} \right)$. Since $Y^*$ is bounded, we conclude
  \begin{align}
    \label{eq: A S2 bound}
    \max_{i' \neq i''} \vert{\frac{1}{n} \sum_j \varepsilon_{i' j} \varepsilon_{i'' j}}\vert = O_p\left(\left({\ln n}/{n}\right)^{1/2}\right), \quad \max_{i} \abs{\mathcal S_{2i}} = O_p\left(\left({\ln n}/{n}\right)^{1/2}\right).
  \end{align}
  Combining \eqref{eq: A S1 bound} and \eqref{eq: A S2 bound} gives $\max_{i} \mathcal S_{i} = O_p\left(\left({\ln n}/{n}\right)^{1/2}\right)$.

  Finally, note $\mathcal J_i \leqslant \frac{1}{n} \sum_j \frac{1}{n_i} \sum_{i' \in \hatN_i (n_i)} (Y_{i'j}^* - Y_{ij}^*)^2 = \frac{1}{n_i} \sum_{i' \in \hatN_i (n_i)} \frac{1}{n} \sum_{j} (Y_{i'j}^* - Y_{ij}^*)^2$.
  Applying Lemma \ref{lem: Y* convergence}(i), we obtain $\max_{i} \mathcal J_i = O_p\left((n^{-1} \ln n)^{\dxi}\right)$. This, together with the bound on $\mathcal S_i$ and \eqref{eq: A 2 S J bound}, delivers the desired result.

  \noindent
  \textit{Proof of Part (ii).}
  First, notice that
  \begin{small}
  \begin{align*}
      &n^{-1} \sum_\ell Y_{k \ell}^* (\hat Y_{i \ell}^* - Y_{i \ell}^*)
     = n_i^{-1} \sum_{i' \in \hatN_i(n_i)} \Big(n^{-1} \sum_\ell Y_{k \ell}^* (Y_{i' \ell}^* - Y_{i \ell}^*) + n^{-1} \sum_\ell Y_{k \ell}^* \varepsilon_{i' \ell}\Big),\\
    &\max_{k} \max_{i} \vert{n^{-1} \sum_\ell Y_{k \ell}^* (\hat Y_{i \ell}^* - Y_{i \ell}^*)}\vert \leqslant \underbrace{\max_i \max_{i' \in \hatN_i(n_i)} \max_k \vert{n^{-1} \sum_{\ell} Y_{k \ell}^* (Y_{i' \ell}^* - Y_{i \ell}^*)}\vert}_{= O_p\left(\left(\frac{\ln n}{n}\right)^{\dxi}\right)} + \max_{k} \max_{i'} \vert{ n^{-1}\sum_\ell Y_{k \ell}^* \varepsilon_{i' \ell}}\vert,
  \end{align*}
  \end{small}
  where the bound on the first term on the right-hand side follows from Lemma \ref{lem: Y* convergence}(ii). To complete the proof, we need to to bound the second term, which can be represented as
  \begin{align}
    n^{-1} \sum_\ell Y_{k \ell}^* \varepsilon_{i' \ell} = n^{-1} \sum_{\ell \neq k, i'} Y_{k \ell}^* \varepsilon_{i' \ell} + n^{-1} (Y_{k k}^* \varepsilon_{i' k} + Y_{k i'}^* \varepsilon_{i' i'}). \label{eq: A aux decomposition Part ii}
  \end{align}
  Again, applying Corollary \ref{cor: bernstein} (conditional on $Z_k$ and $Z_{i'}$) and the union bound, we obtain $\max_{k} \max_{i'} \vert{n^{-1} \sum_{\ell \neq k,i'} Y_{k \ell}^* \varepsilon_{i' \ell}}\vert = O_p\left(\left({\ln n}/{n}\right)^{1/2}\right)$. Similarly, since either $\varepsilon_{i' k}$ is (uniformly) sub-Gaussian (for $i' \neq k$) or is equal to $-Y_{k k}^*$ (which is bounded), we have $\max_{k} \max_{i'} \abs{n^{-1} Y_{kk}^* \varepsilon_{i' k}} = O_p\left(n^{-1} (\ln n)^{1/2}\right)$. Finally, since $\varepsilon_{i' i'} = - Y_{i' i'}^*$ and $Y^*$ is bounded, we have $\max_{k} \max_{i'} \abs{n^{-1} Y_{k i'}^* \varepsilon_{i' i'}} \leqslant C/n$. These bounds, together with \eqref{eq: A aux decomposition Part ii}, imply that $\max_{k} \max_{i'} \vert{ n^{-1}\sum_\ell Y_{k \ell}^* \varepsilon_{i' \ell}}\vert = O_p\left(\left({\ln n}/{n}\right)^{1/2}\right)$, which completes the proof.
  \end{proof}

\subsubsection{Proof of Theorem \ref{the: disc X d star rate}}

First, we state and prove the following auxiliary lemma.

\begin{AppLem}
  \label{lem: d* R_p}
  Suppose $\mathcal B = \mathbb R^p$. Then, under Assumptions \ref{ass: DGP}, \ref{ass: basic}, \ref{ass: d* rate}\eqref{item: disc X}, we have $\max_{i \neq j} \abs{d_{ij,n}^2 - d_{ij}^2} = O_p\left(\left({\ln n}/{n}\right)^{1/2}\right)$, where $d_{ij,n}^2 \coloneqq \min_{\beta \in \mathbb R_p} d_{ij,n}^2 (\beta)$, and $d_{ij,n}^2 (\beta)$ is given by \eqref{eq: d ij n star}.
\end{AppLem}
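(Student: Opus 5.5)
The plan is to exploit the discreteness of $X$ (Assumption~\ref{ass: d* rate}\eqref{item: disc X}) to turn both minimizations over the unbounded set $\mathcal B=\mathbb R^p$ into closed-form finite-dimensional weighted least-squares problems. The difference between them is then a smooth function of finitely many sample averages, and each of those averages converges uniformly by the Bernstein-plus-union-bound argument already used in Lemma~\ref{lem: A d_n UC}. Compactness of $\mathcal B$ cannot be used here, so this reduction replaces it.

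\textbf{Step 1 (reparametrization).} Write $Y^*_{ik}-Y^*_{jk}-(W_{ik}-W_{jk})'\beta=\Delta g_{ijk}-\Delta W_{ijk}'\gamma$ with $\gamma=\beta-\beta_0$, where $\Delta g_{ijk}\coloneqq g_{ik}-g_{jk}$. For fixed $(X_i,X_j)$, the vector $\Delta W_{ijk}=w(X_i,X_k)-w(X_j,X_k)$ depends on $k$ only through $X_k\in\{x_1,\dots,x_R\}$. Denote its value when $X_k=x_r$ by $a_r$, and let $A$ be the $R\times p$ matrix with rows $a_r'$. Only finitely many matrices $A$ arise, one per pair $(X_i,X_j)$. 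Define, for each $r$, the cell share $\hat p_{ij,r}\coloneqq n^{-1}\#\{k:X_k=x_r\}$, the cell mean $\hat m_{ij,r}$ of $\Delta g_{ijk}$, and the cell second moment $\hat s_{ij,r}$ of $\Delta g_{ijk}$. Completing the square within each cell gives
\begin{align*}
d^2_{ij,n}(\beta)=\sum_r \hat p_{ij,r}\big(\hat s_{ij,r}-\hat m_{ij,r}^2\big)+\sum_r \hat p_{ij,r}\big(\hat m_{ij,r}-a_r'\gamma\big)^2 .
\end{align*}
Minimizing over $\gamma\in\mathbb R^p$ is a weighted regression of $\hat m_{ij}$ on $A$, so
\begin{align*}
d^2_{ij,n}=\textstyle\sum_r \hat p_{ij,r}\hat s_{ij,r}-\hat m_{ij}'\hat P A (A'\hat PA)^{+}A'\hat P\hat m_{ij},
\end{align*}
where $\hat P=\mathrm{diag}(\hat p_{ij,r})$ and ${}^{+}$ denotes the Moore--Penrose inverse. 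The population quantity $d^2_{ij}$ has the identical formula with $p_r=\pr(X=x_r)$, $m_{ij,r}=\e[\Delta g_{ijk}\mid Z_i,Z_j,X_k=x_r]$ and $s_{ij,r}$ defined analogously. The terms $k\in\{i,j\}$ are bounded and enter $d^2_{ij,n}$ with weight $1/n$, so they contribute only $O(1/n)$ uniformly.

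\textbf{Step 2 (uniform convergence of the ingredients).} The shares satisfy $\max_r|\hat p_r-p_r|=O_p(n^{-1/2})$. Conditional on $(Z_i,Z_j)$, the summands over $k$ are i.i.d. and bounded, because $g$ is bounded. Applying Bernstein inequality~\ref{the: A bounded Bernstein} with constants uniform in $(Z_i,Z_j)$, followed by the union bound over the $\binom n2$ pairs and $R$ cells, gives that $\hat m_{ij,r}$ and $\hat s_{ij,r}$ (or equivalently the unnormalized cell sums divided by $n$) converge to their limits uniformly in $(i,j,r)$ at rate $O_p((\ln n/n)^{1/2})$. This is exactly the argument used for $S_1,S_2,S_3$ in Lemma~\ref{lem: A d_n UC}.

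\textbf{Step 3 (Lipschitz continuity of the closed form).} This is the main obstacle. The map $(p,m)\mapsto m'PA(A'PA)^{+}A'Pm$ involves a pseudo-inverse, and $A'PA$ may be singular; for example $X_i=X_j$ gives $A=0$. Pseudo-inverses are discontinuous in general, but they are locally Lipschitz on sets of constant rank. I will use the following observations.
\begin{enumerate}[(i)]
\item On the set where every $p_r\geq \min_r p_r/2>0$, which holds with probability approaching one, $\mathrm{rank}(A'PA)=\mathrm{rank}(A)$ does not depend on $p$.
\item The projection $PA(A'PA)^{+}A'P$ is unchanged if $A$ is replaced by a basis $B$ of its column space, for which $B'PB$ is invertible with eigenvalues bounded below on that set.
\item $(B'PB)^{-1}$ is therefore Lipschitz in $p$ there, and $m$ is bounded.
\end{enumerate}
Hence the closed form is Lipschitz in $(p,m,s)$ with a constant uniform over the finitely many matrices $A$. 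Combining this with Step~2 yields $\max_{i\neq j}|d^2_{ij,n}-d^2_{ij}|=O_p((\ln n/n)^{1/2})$, which is the claim.
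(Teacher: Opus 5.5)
Your argument is correct, but it takes a different route from the paper's proof.

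\textbf{The paper's route.} It uses the same discreteness fact you rely on: w.p.a.1, the nonzero eigenvalues of $\hat{\mathcal C}(x,\tilde x)=\sum_r \hat p_r\,\Delta w(x_r;x,\tilde x)\Delta w(x_r;x,\tilde x)'$ are bounded away from zero, uniformly over the finitely many $(x,\tilde x)$. The paper does not write down the value function. Instead it uses this eigenvalue bound to show that the minimum-norm minimizers of $d^2_{ij,n}(\beta)$ and of $d^2_{ij}(\beta)$ all lie in a common compact set $\overline{\mathcal B}$, for all pairs, w.p.a.1. Both minima over $\mathbb R^p$ then equal minima over $\overline{\mathcal B}$. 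The claim follows from Lemma~\ref{lem: A d_n UC} together with $\lvert\min f-\min g\rvert\le\sup\lvert f-g\rvert$.

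\textbf{Your route.} You reduce both minima to an explicit function of finitely many cell statistics $(\hat p_r,\hat m_{ij,r},\hat s_{ij,r})$. You then show this function is Lipschitz on the constant-rank region. You handle the pseudo-inverse correctly: passing to a basis $B$ of the column space of $A$ makes $B'PB$ invertible with eigenvalues bounded below, so the projection is Lipschitz in the weights.

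\textbf{What each buys.} The paper's proof is shorter, because it reuses the compact-$\mathcal B$ lemma as a black box and needs no continuity argument for pseudo-inverses. Yours is self-contained and avoids uniform-in-$\beta$ convergence altogether. It also makes explicit that $d^2_{ij,n}$ depends on the data only through finitely many averages.

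\textbf{One small imprecision.} The observations $k\in\{i,j\}$ need no separate treatment. Since $X_i$ and $X_j$ fall into cells, these terms only shift the cell sums by $O(1/n)$, which your Lipschitz bound absorbs. If the same point is phrased as an $O(1/n)$ contribution to $d^2_{ij,n}(\beta)$ itself, it is not uniform over $\beta\in\mathbb R^p$, because the summand $(\Delta g_{ijk}-\Delta W_{ijk}'\gamma)^2/n$ grows without bound in $\gamma$.
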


\begin{proof}[Proof of Lemma \ref{lem: d* R_p}]
  First, we argue that there exists some compact $\overline {\mathcal B}$ such that, for all pairs of agents, $d_{ij,n}^2 \coloneqq \min_{\beta \in \mathbb R^p} d_{ij,n}^2 (\beta) = \min_{\beta \in \overline{\mathcal B}} d_{ij,n}^2 (\beta)$ with probability approaching one.
  For all $x, \tilde x \in \supp{X}$, let $\Delta w(X_k;x,\tilde x) \coloneqq w(x,X_k) - w(\tilde x,X_k)$ and
  \begin{align*}
    \hatC (x, \tilde x) \coloneqq \frac{1}{n} \sum_{k=1}^n \Delta w(X_k;x,\tilde x) \Delta w(X_k;x,\tilde x)' = \sum_{r=1}^R \hat p_r H(x_r;x, \tilde x)',
  \end{align*}
  where $\hat p_r = n^{-1} \sum_{k=1}^n \ind\{X_k = x_r\}$ and $H (x_r;x,\tilde x) \coloneqq \Delta w(x_r; x, \tilde x) \Delta w(x_r;x, \tilde x)'$.

  Notice that since $\supp{X} = \{x_1, \dots, x_R\}$, there exists $C_\lambda > 0$ such that the minimal non-zero eigenvalue of $H(x_r;x,\tilde x)$ is greater than $C_\lambda$ uniformly over all $x_r,x,\tilde x \in \{x_1, \dots, x_R\}$. Formally, we have $\lambda_{min,+} (H(x_r;x,\tilde x)) > C_\lambda$ for all $x_r, x, \tilde x \in \{x_1, \dots, x_R\}$ such that $H(x_r;x,\tilde x)$ is non-zero, where $\lambda_{min,+}(H)$ denotes the minimal positive eigenvalue of a positive semidefinite (non-zero) matrix $H$. Next, notice that there exists $C_p > 0$ such that with probability approaching one, we have $\min_{r} \hat p_r > C_p$. Hence, we conclude that for some $C_{\mathcal C} > 0$ we have, with probability approaching one, $\lambda_{min,+} (\hatC (x,\tilde x)) > C_{\mathcal C}$
  for all $x, \tilde x \in \{x_1, \dots, x_R\}$ such that $\hatC (x,\tilde x)$ is non-zero. Next, for all $x, \tilde x \in \{x_1, \dots, x_R\}$, let $\hat O (x, \tilde x)$ be an orthogonal matrix, which diagonalizes $\hatC (x,\tilde x)$, i.e., $\hat O (x,\tilde x) \hatC (x, \tilde x) \hat O(x, \tilde x)'= \hat \Lambda(x,\tilde x)$, where $\hat \Lambda(x,\tilde x)$ is diagonal. Notice that non-zero elements $\hat \Lambda(x,\tilde x)$ of are bounded away from zero by $C_{\mathcal C}$ with probability approaching (uniformly over $x, \tilde x \in \{x_1, \dots, x_R\}$).

  Take any pair of agents $i$ and $j$. Since $\hat O(X_i,X_j)' \hat O(X_i,X_j)$ is an identity matrix, 
  \begin{align*}
    d_{ij,n}^2 
    &= \min_{\beta \in \mathbb R_p} \frac{1}{n} \sum_{k} (Y_{ik}^* - Y_{jk}^* - \Delta w (X_k;X_i,X_j)' \hat O(X_i,X_j)' \hat O(X_i,X_j) \ \beta )^2 \\
    &= \min_{b \in \mathbb R_p} \frac{1}{n} \sum_{k} (Y_{ik}^* - Y_{jk}^* - \Delta w (X_k;X_i,X_j)' \hat O(X_i,X_j)' b)^2,
  \end{align*}
  where the last equality follows from the change of the variable $b = \hat O(X_i,X_j) \beta$. The minimum is achieved at
  \begin{align*}
    \hat b_{ij} = \left(\hat \Lambda (X_i,X_j) \right)^{+} \left(\hat O(X_i,X_j) n^{-1} \sum_k \Delta w (X_k;X_i,X_j) (Y_{ik}^* - Y_{jk}^*) \right),
  \end{align*}
  where $+$ denotes the Moore–Penrose inverse. Notice that since the non-zero elements of diagonal matrix $\hat \Lambda(X_i,X_j)$ are bounded away from zero (uniformly in $X_i$ and $X_j$) with probability approaching one, we conclude that $\max_{i \neq j}\lambda_{max}\left(\left(\hat \Lambda(X_i,X_j)\right)^+\right) < \overline C_\Lambda$ with probability approaching one. Since $w$ and $Y^*$ are bounded,  we also have $\max_{i \neq j} \norm{\hat O(X_i,X_j) n^{-1} \sum_k w (X_k;X_i,X_j) (Y_{ik}^* - Y_{jk}^*)} < C$, and thus we conclude that for some $C_\beta > 0$, we have with probability approaching one $\max_{i \neq j} \Vert{\hat b_{ij}}\Vert \leqslant C_\beta$.
  
  Since $b = \hat O(X_i,X_j) \beta$, the minimum of $d_{ij,n}^2(\beta)$ is achieved at $\hat \beta_{ij} = \hat O(X_i,X_j)' \hat b_{ij}$, i.e., $d_{ij,n}^2 = d_{ij,n}^2 (\hat \beta_{ij})$, and, with probability approaching one, we also have $\max_{i \neq j} \Vert{\hat \beta_{ij}}\Vert \leqslant C_\beta$.
  Then, if $\overline{\mathcal B}$ includes all $\beta$ such that $\norm{\beta} \leqslant C_\beta$, we have $d_{ij,n}^2 = d_{ij,n}^2 (\hat \beta_{ij}) = \min_{\beta \in \overline{\mathcal B}} d_{ij,n}^2 (\beta)$. Such a compact set $\overline{\mathcal B}$ clearly exists, the proof of the first part is complete.

  Second, note that by the same argument, we have $d_{ij}^2 = \min_{\beta \in \mathbb R_p} d_{ij}^2(\beta) = \min_{\beta \in \overline{\mathcal B}} d_{ij}^2 (\beta)$, where, without loss of generality, $\overline{\mathcal B}$ can be taken the same as before.
  
  Finally, since $\overline{\mathcal B}$ is compact, we can apply the result of Lemma \ref{lem: A d_n UC}, which guarantees that $\max_{i \neq j} \sup_{\beta \in \overline{\mathcal B}} \abs{d_{ij,n}^2(\beta) - d_{ij}^2(\beta)} = O_p\left((\ln n/n)^{1/2}\right)$.  This necessarily implies that
  \begin{align*}
     O_p\left(( \ln n / n)^{1/2}\right) = \max_{i \neq j} \vert{\min_{\beta \in \overline{\mathcal B}} d_{ij,n}^2(\beta) - \min_{\beta \in \overline{\mathcal B}} d_{ij}^2(\beta)}\vert = \max_{i \neq j} \abs{d_{ij,n}^2 - d_{ij}^2},
  \end{align*}
  which completes the proof.
\end{proof}
 
\begin{proof}[Proof of Theorem \ref{the: disc X d star rate}]
  First, denote
  \begin{align*}
    \mathbf W_{i-j} = \begin{pmatrix}
      (W_{i1} - W_{j1})' \\
      (W_{i2} - W_{j2})' \\
      \cdots \\
      (W_{in} - W_{jn})'
    \end{pmatrix},
    \quad
    \mathbf Y^*_{i-j} = \begin{pmatrix}
      Y^*_{i1} - Y^*_{j1} \\
      Y^*_{i2} - Y^*_{j2} \\
      \cdots \\
      Y^*_{in} - Y^*_{jn}
    \end{pmatrix},
    \quad
    \hat{\mathbf Y}_{i-j}^* = \begin{pmatrix}
      \hat Y^*_{i1} - \hat Y^*_{j1} \\
      \hat Y^*_{i2} - \hat Y^*_{j2} \\
      \cdots \\
      \hat Y^*_{in} - \hat Y^*_{jn}
    \end{pmatrix}.
  \end{align*}
  Then
  \begin{align*}
    \hat d_{ij}^2 = n^{-1} \left({\mathbf Y}_{i-j}^{* \prime} \mathbf P_{i-j} {\mathbf Y}_{i-j}^* + \Delta \hat{\mathbf Y}_{i-j}^{* \prime} \mathbf P_{i-j} {\mathbf Y}_{i-j}^* + {\mathbf Y}_{i-j}^{* \prime} \mathbf P_{i-j} \Delta \hat{\mathbf Y}_{i-j}^* + \Delta \hat {\mathbf Y}_{i-j}^{* \prime} \mathbf P_{i-j} \Delta \hat{\mathbf Y}_{i-j}^*\right),
  \end{align*}
  where $\Delta \hat{\mathbf Y}_{i-j}^* = \hat{\mathbf Y}_{i-j}^* - {\mathbf Y}_{i-j}^*$ and $\mathbf P_{i-j} = \mathbf I_n - \mathbf W_{i-j} (\mathbf W_{i-j}' \mathbf W_{i-j})^+ \mathbf W_{i-j}'$, where $+$ stands for the Moore–Penrose inverse.
  Note that $n^{-1} {\mathbf Y}_{i-j}^{* \prime} \mathbf P_{i-j} {\mathbf Y}_{i-j}^* = d_{ij,n}^2$, where $d_{ij,n}^2$ is as defined in Lemma \ref{lem: d* R_p}, which ensures $\max_{i \neq j} \abs{d_{ij,n}^2 - d_{ij}^2} = O_p((\ln n/n)^{1/2})$. Hence, to complete the proof, it would be sufficient to show that $\max_{i \neq j} \vert{n^{-1} \Delta \hat{\mathbf Y}_{i-j}^{* \prime} \mathbf P_{i-j} {\mathbf Y}_{i-j}^*}\vert = O_p\left(\left({\ln n}/{n}\right)^\dxi\right)$, and $\max_{i \neq j} \vert{n^{-1} \Delta \hat{\mathbf Y}_{i-j}^{* \prime} \mathbf P_{i-j} \Delta \hat{\mathbf Y}_{i-j}^*}\vert = O_p\left(\left({\ln n}/{n}\right)^\dxi\right)$. To inspect these, note
  \begin{align*}
    \max_{i \neq j} \vert{n^{-1} \Delta \hat{\mathbf Y}_{i-j}^{* \prime} \mathbf P_{i-j} {\mathbf Y}_{i-j}^*}\vert &\leqslant \max_{i \neq j} \vert{n^{-1} \Delta \hat{\mathbf Y}_{i-j}^{* \prime} {\mathbf Y}_{i-j}^*}\vert = \max_{i \neq j} \vert{n^{-1} \sum_\ell (Y_{i \ell}^* - Y_{j \ell}^*) (\Delta \hat Y_{i \ell}^* - \Delta \hat Y_{j \ell}^*) }\vert \\
    & \leqslant 4 \max_k \max_i \vert{n^{-1} \sum_\ell Y_{k \ell}^* (\hat Y_{i \ell}^* - Y_{i \ell}^*) }\vert = O_p\left(\left({\ln n}/{n}\right)^{\frac{1}{2 d_\xi}}\right),
  \end{align*}
  where the last equality follows from Theorem \ref{the: Y star convergence}(ii).
  Similarly,
  \begin{align*}
    \max_{i \neq j} \vert{n^{-1} \Delta \hat{\mathbf Y}_{i-j}^{* \prime} \mathbf P_{i-j} \Delta \hat {\mathbf Y}_{i-j}^*}\vert &\leqslant \max_{i \neq j} \vert{n^{-1} \Delta \hat{\mathbf Y}_{i-j}^{* \prime} \Delta \hat {\mathbf Y}_{i-j}^*}\vert = \max_{i \neq j} \vert{n^{-1} \sum_\ell (\Delta \hat Y_{i \ell}^* - \Delta \hat Y_{j \ell}^*)^2 }\vert \\
    & \leqslant 4 \max_i n^{-1} \sum_\ell (\hat Y_{i \ell}^* - Y_{i \ell}^*)^2 = O_p\left(\left({\ln n}/{n}\right)^{\frac{1}{2 d_\xi}}\right),
  \end{align*}
  where the last equality is due Theorem \ref{the: Y star convergence}(i), which completes the proof.
\end{proof}

\subsubsection{Proof of Lemma \ref{lem: d star rate}}
\begin{proof}[Proof of Lemma \ref{lem: d star rate}]
  Let $\hat d_{ij}^2 (\beta) \coloneqq \frac{1}{n} \sum_{k} (\hat Y_{ik}^* - \hat Y_{jk}^* - (W_{ik} - W_{jk}) \beta )^2$,
  so $\hat d_{ij}^2 = \min_{\beta \in \mathcal B} \hat d_{ij}^2 (\beta)$.
  \begin{align*}
    \max_{i \neq j} \vert{\hat d_{ij}^2 - d_{ij}^2}\vert = \max_{i \neq j} \vert{\min_{\beta \in \mathcal B} \hat d_{ij}^2 (\beta) - \min_{\beta \in \mathcal B} d_{ij}^2 (\beta) }\vert \leqslant 2 \max_{i \neq j} \max_{\beta \in \mathcal B} \vert{\hat d_{ij}^2 (\beta) - d_{ij}^2(\beta)}\vert.
  \end{align*}
  Hence, it suffices to show $\max_{i \neq j} \max_{\beta \in \mathcal B} \Vert{\hat d_{ij}^2(\beta) - d_{ij}^2(\beta)}\Vert = O_p\left(\left({\ln n}/{n}\right)^{1/2} + \mathcal R_n^{-1/2}\right)$.
  Let $\Delta \hat Y_{ik}^* \coloneqq \hat Y_{ik}^* - Y_{ik}^*$, and note that $\hat d_{ij}^2 (\beta) - d_{ij}(\beta)$ can be decomposed as
  \begin{align}
    &\hat d_{ij}^2 (\beta) - d_{ij}(\beta) = d_{ij,n}^2(\beta) - d_{ij}(\beta)\notag\\
    &+ \frac{2}{n} \sum_{k} \left(Y_{ik}^* - Y_{jk}^* + (W_{ik} - W_{jk})' \beta\right) (\Delta \hat Y_{ik}^* - \Delta \hat Y_{jk}^*)+ \frac{1}{n} \sum_{k} (\Delta \hat Y_{ik}^* - \Delta \hat Y_{jk}^*)^2.\label{eq: A aux d decomposition}
  \end{align}
  By the Cauchy-Schwartz inequality,
  \begin{small}
  \begin{align*}
    &\max_{i \neq j} \frac{1}{n} \sum_{k} (\Delta \hat Y_{ik}^* - \Delta \hat Y_{jk}^*)^2 \leqslant 4 \max_{i} \frac{1}{n} \sum_i (\hat Y_{ik}^* - Y_{ik}^*)^2 = O_p (\mathcal R_n^{-1}),\\
    &\max_{i \neq j} \max_{\beta \in \mathcal B} \abs{\frac{1}{n} \sum_{k} (Y_{ik}^* - Y_{jk}^* + (W_{ik} - W_{jk})'\beta) (\Delta \hat Y_{ik}^* - \Delta \hat Y_{jk}^*) } \leqslant C \sqrt{\max_{i \neq j}\frac{1}{n} \sum_{k} (\Delta \hat Y_{ik}^* - \Delta \hat Y_{jk}^*)^2}  = O_p (\mathcal R_n^{-1/2}),
  \end{align*}
  \end{small}
  where the second inequality guaranteed by the fact that $Y^*$, $W$, and $\mathcal B$ are bounded. These bounds, paired with Lemma \ref{lem: A d_n UC} and \eqref{eq: A aux decomposition Part ii}, deliver the result.
\end{proof}

\subsection{Proof of Theorem \ref{the: Y star UC}}

\begin{proof}[Proof of Theorem \ref{the: Y star UC}] First, we decompose $\tilde Y_{ij}^*$ as
  \begin{align*}
    \tilde Y_{ij}^* &= \frac{1}{m_{ij}} \sum_{(i',j') \in \hatM_{ij}} Y_{i' j'} = Y_{ij}^* + \frac{1}{m_{ij}} \sum_{(i',j') \in \hatM_{ij}}  \left( (g(\xi_{i'},\xi_{j'}) - g(\xi_i,\xi_j)) + \varepsilon_{i' j'} \right).
  \end{align*}
  To prove the result, it suffices to show that $\max_{i,j} \vert{\frac{1}{m_{ij}} \sum_{(i',j') \in \hatM_{ij}} (g(\xi_{i'},\xi_{j'}) - g(\xi_i,\xi_j))}\vert$ and $\max_{i,j} \vert{\frac{1}{m_{ij}} \sum_{(i',j') \in \hatM_{ij}} \varepsilon_{i' j'}}\vert$ are both $o_p(1)$.

  First, notice that Lemma \ref{lem: Y* convergence}(iii) combined with the hypotheses of the theorem implies that $\max_{i} \max_{i' \in \mathcal \hatN_i (n_i)} \norm{\xi_i - \xi_{i'}} = o_p(1)$. Hence, Using Assumption \ref{ass: basic}\eqref{item: lipschitz g},
  \begin{align*}
    \max_{i,j} \Big\vert{\frac{1}{m_{ij}} \sum_{(i',j') \in \hatM_{ij}} (g(\xi_{i'},\xi_{j'}) - g(\xi_i,\xi_j))}\Big\vert &\leqslant \max_{i,j} \max_{i' \in \hatN_i(n_i)} \max_{j' \in \hatN_j(n_j)} \abs{g(\xi_{i'},\xi_{j'}) - g(\xi_i,\xi_j)}\\
    &\leqslant 2 \overline G \max_{i} \max_{i' \in \mathcal \hatN_i (n_i)} \norm{\xi_i - \xi_{i'}} = o_p(1).
  \end{align*}
  
  Next, let $\overline{\mathcal M}$ denote the set of all possible realizations of $\hatM_{ij}$ formed according to \eqref{eq: hat Mij def}. Unlike $\hatM_{ij}$, $\overline{\mathcal M}$ is not random. Note that by construction 
  \begin{align}
    \label{eq: A aux bound using M}
    \max_{i,j} \Bigg\vert{\frac{1}{m_{ij}} \sum_{(i',j') \in \hatM_{ij}} \varepsilon_{i' j'}}\Bigg\vert \leqslant \max_{\mathcal M \in \overline{\mathcal M}} \Bigg\vert{\frac{1}{\abs{\mathcal M}} \sum_{(i',j') \in \mathcal M} \varepsilon_{i' j'}}\Bigg\vert.
  \end{align}
  Note that, for $\mathcal M \in \overline{\mathcal M}$, $\abs{\mathcal M} \geqslant \underline m \coloneqq \underline n (\underline n - 1)/2$, where $\underline n = \min_{i} n_i$. Applying Bernstein inequality \ref{the: A unb Bernstein} conditional on $\{X_i,\xi_i\}_{i=1}^n$ and using uniform sub-Gaussianity of the errors, we conclude that there exist some positive constants $C,a$, and $b$ such that for all $\epsilon > 0$ and for all $\mathcal M \in \overline{\mathcal M}$
  \begin{align}
    \pro{\abs{\frac{1}{\abs{\mathcal M}} \sum_{(i',j') \in \mathcal M} \varepsilon_{i' j'}} > \epsilon} &\leqslant C \exp\left(-\frac{\abs{\mathcal M} \epsilon^2}{a + b \epsilon}\right) \leqslant C \exp\left(-\frac{\underline m \epsilon^2}{a + b \epsilon}\right).\label{eq: A prob bound on eps M}
  \end{align}
  To apply the union bound, we need to bound the cardinality of $\overline{\mathcal M}$. Note that $\abs{\overline{\mathcal M}} \leqslant \abs{\overline{\mathcal N}}^2$, where $\abs{\overline{\mathcal N}}$ is the total number of possible realizations of $\mathcal{\hat N}_i (n_i)$ also allowing for possible variability in $n_i$. For a given $n_i$, the number of possible realizations of $\mathcal{\hat N}_i (n_i)$ is $\binom{n}{n_i} \leqslant \binom{n}{\overline n}$ with $\overline n = \max_i n_i$, where we also used monotonicity of $\binom{n}{k}$ for $k \leqslant \lceil n/2 \rceil$. Hence, factoring in variability in $n_i$ taking possible values from $\underline n$ to $\overline n$, we conclude that $\abs{\overline{\mathcal N}} \leqslant \overline n^2 \binom{n}{\overline n}$. Thus, combining the union bound with \eqref{eq: A prob bound on eps M}, we conclude
  \begin{align}
    \pro{\max_{\mathcal M \in \overline{\mathcal M}}\abs{\frac{1}{\abs{\mathcal M}} \sum_{(i',j') \in \mathcal M} \varepsilon_{i' j'}} > \epsilon} \leqslant \overline n^4 \binom{n}{\overline n}^2 C \exp\left(-\frac{\underline m \epsilon^2}{a + b \epsilon}\right).
    \label{eq: A unif Y tilde aux bound}
  \end{align}
  Since $\binom{n}{\overline n} < (\frac{n \times e}{\overline n})^{\overline n} $, we have $\ln \left(\binom{n}{\overline n}\right) < \overline n \ln (n) < \overline C n^{1/2} (\ln n)^{3/2}$ for sufficiently large $\overline n$. Since $\underline m > C_m n \ln n$ for some $C_m > 0$, we conclude that the probability in \eqref{eq: A unif Y tilde aux bound} goes to zero for any fixed $\epsilon > 0$, which, together with \eqref{eq: A aux bound using M}, completes the proof.
\end{proof}

\subsection{Proofs of the results of Section \ref{ssec: NID}}
\begin{proof}[Proof of Lemma \ref{lem: nonparam ID}]
  The proof is by contradiction. Suppose that for some $\mu_{ij}^*(x,\tilde x) \neq \msh (x,\tilde x)$, we have $\msd_{ij}^2 (x, \tilde x) = \msd_{ij}^2 (\mu_{ij}^* (x, \tilde x), x, \tilde x) = 0$, which implies
  \begin{align*}
    0 &= \ex{\left(Y_{ik}^* - Y_{jk}^* + \mu_{ij}^*(x,\tilde x) \right)^2|X_i,\xi_i,X_j,\xi_j,X_k = x} \\
      &=  \ex{\left(-\msh (x,\tilde x) + g(\xi_i,\xi_k) - g(\xi_j,\xi_k) + \mu_{ij}^*(x,\tilde x) \right)^2|X_i,\xi_i,X_j,\xi_j,X_k = x}.
  \end{align*}
  Since $\mathcal E_{x,\tilde x} \subseteq \supp{\xi_k|X_k=x}$, the above implies
  \begin{align}
    \label{eq: A delta g MSW}
    g(\xi_i,\xi_k) - g(\xi_j,\xi_k) = \msh (x,\tilde x) - \mu_{ij}^*(x,\tilde x) \neq 0 
  \end{align}
  for all $\xi_k \in \supp{\xi_k|X_k=x}$. Next, consider
  \begin{small}
  \begin{align}
    &\ex{\left(Y_{ik}^* - Y_{jk}^* - \mu_{ij}^*(x,\tilde x) \right)^2|X_i,\xi_i,X_j,\xi_j,X_k = \tilde x} \nonumber \\ 
    = &\ex{\left(\msh (x,\tilde x) + g(\xi_i,\xi_k) - g(\xi_j,\xi_k) - \mu_{ij}^*(x,\tilde x) \right)^2|X_i,\xi_i,X_j,\xi_j,X_k = \tilde x} \nonumber \\
    \geqslant &\pr(\xi_k \in \mathcal E_{x,\tilde x}|X_k = \tilde x) \ex{\left(\msh (x,\tilde x) + g(\xi_i,\xi_k) - g(\xi_j,\xi_k) - \mu_{ij}^*(x,\tilde x) \right)^2|X_i,\xi_i,X_j,\xi_j,X_k = \tilde x,\xi_k \in \mathcal E_{x,\tilde x}} \nonumber \\
    =& 4 \pr(\xi_k \in \mathcal E_{x,\tilde x}|X_k = \tilde x) (\msh (x,\tilde x) - \mu_{ij}^*(x,\tilde x))^2 > 0, \label{eq: A contr ineq}
  \end{align}
  \end{small}
  where the last equality follows from \eqref{eq: A delta g MSW}, and the last inequality follows from \eqref{eq: A delta g MSW} and Assumption \ref{ass: nonparam w}\eqref{item: w xi supp}. Note that \eqref{eq: A contr ineq} implies that $\msd_{ij}^2 (x, \tilde x) = \msd_{ij}^2 (\mu_{ij}^*(x,\tilde x),x,\tilde x) > 0$, which contradicts the initial hypothesis.
\end{proof}

\subsection{Bernstein inequalities}
In this section we specify the Bernstein inequalities, which we refer to in the proofs.

\subsubsection{Bernstein inequality for bounded random variables}

\begin{AppThe}[Bernstein inequality; see, for example, \citetsupp{bennett1962probability}]
	\label{the: A bounded Bernstein}
	Let $Z_1, \dots, Z_n$ be mean zero independent random variables. Assume there exists a positive constant $M$ such that $\abs{Z_i} \leqslant M$ with probability one for each $i$. Also let $\sigma^2 \coloneqq \frac{1}{n} \sumin \ex{Z_i^2}$. Then, for all $\epsilon > 0$ , we have $\pr\left(\abs{\frac{1}{n} \sumin Z_i} \geqslant \epsilon\right) \leqslant 2 \exp \left(- \frac{n \epsilon^2}{2 \left(\sigma^2 + \frac{1}{3} M \epsilon\right)}\right)$.
\end{AppThe}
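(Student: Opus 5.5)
This is the classical Bernstein inequality for bounded variables, so I will follow the standard Chernoff (exponential moment) route. The plan has four steps: bound the moment generating function of each summand, multiply across the independent summands, apply Markov's inequality and choose $\lambda$, then symmetrize to get the two-sided bound.

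\emph{Step 1 (one-sided Chernoff bound).} Set $S_n \coloneqq \sum_{i=1}^n Z_i$. For any $\lambda > 0$, Markov's inequality applied to $e^{\lambda S_n}$ gives
\begin{align*}
\pr(S_n \geqslant n\epsilon) \leqslant e^{-\lambda n \epsilon} \prod_{i=1}^n \ex{e^{\lambda Z_i}},
\end{align*}
where the product form uses independence.

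\emph{Step 2 (moment generating function bound).} I will bound each factor using the expansion
\begin{align*}
e^{x} = 1 + x + \sum_{k\geqslant 2} \frac{x^k}{k!}.
\end{align*}
Mean zero kills the linear term. For $k \geqslant 2$, boundedness gives
\begin{align*}
\ex{Z_i^k} \leqslant M^{k-2}\ex{Z_i^2}.
\end{align*}
Use $k! \geqslant 2\cdot 3^{k-2}$, and restrict to $\lambda M < 3$. Summing the resulting geometric series yields
\begin{align*}
\ex{e^{\lambda Z_i}} \leqslant 1 + \frac{\lambda^2 \ex{Z_i^2}}{2(1-\lambda M/3)} \leqslant \exp\left(\frac{\lambda^2 \ex{Z_i^2}}{2(1-\lambda M/3)}\right).
\end{align*}
The last step uses $1+u \leqslant e^u$. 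Taking the product over $i$ and writing $\sigma^2 = n^{-1}\sum_i \ex{Z_i^2}$, the right-hand side of Step 1 is at most
\begin{align*}
\exp\left(-\lambda n\epsilon + \frac{n\lambda^2\sigma^2}{2(1-\lambda M/3)}\right).
\end{align*}

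\emph{Step 3 (choice of $\lambda$).} This is the only step needing care. Rather than optimizing exactly, I will plug in
\begin{align*}
\lambda = \frac{\epsilon}{\sigma^2 + M\epsilon/3}.
\end{align*}
This choice satisfies $\lambda M/3 < 1$. It also gives
\begin{align*}
1 - \frac{\lambda M}{3} = \frac{\sigma^2}{\sigma^2 + M\epsilon/3},
\end{align*}
so the quadratic term becomes
\begin{align*}
\frac{n\epsilon^2}{2(\sigma^2 + M\epsilon/3)}.
\end{align*}
The exponent therefore simplifies to
\begin{align*}
-\frac{n\epsilon^2}{\sigma^2+M\epsilon/3} + \frac{n\epsilon^2}{2(\sigma^2+M\epsilon/3)} = -\frac{n\epsilon^2}{2(\sigma^2 + M\epsilon/3)}.
\end{align*}
The degenerate case $\sigma^2 = 0$ is trivial, since then $Z_i = 0$ almost surely.

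\emph{Step 4 (two-sided bound).} Apply Steps 1--3 to the variables $-Z_i$. These satisfy the same hypotheses: mean zero, independence, the bound $M$, and the same $\sigma^2$. This bounds $\pr(S_n \leqslant -n\epsilon)$ by the same quantity. A union bound over the two tails gives the factor $2$ in the statement.
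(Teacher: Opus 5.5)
Your proof is correct and complete: the moment bound $k!\geqslant 2\cdot 3^{k-2}$, the choice $\lambda=\epsilon/(\sigma^2+M\epsilon/3)$, and the handling of the degenerate case $\sigma^2=0$ all check out. The paper gives no proof of its own and only cites Bennett (1962), which uses the same exponential-moment (Chernoff) method; Bennett derives a sharper bound and relaxes it to this Bernstein form, whereas you reach the Bernstein form directly.
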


\subsubsection{Bernstein inequality for unbounded random variables}

\begin{AppLem}[Moments of a sub-Gaussian random variable]
	\label{lem: A subG moments}
	Let $Z$ be a mean zero random variable satisfying $\ex{e^{\lambda Z}} \leqslant e^{v \lambda^2}$ for all $\lambda \in \mathbb R$, for some $v > 0$. Then for every integer $q \geqslant 1$, we have $\ex{Z^{2q}} \leqslant q! (4v)^{q}$.
\end{AppLem}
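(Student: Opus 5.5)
The plan is the standard route from a sub-Gaussian moment generating function bound to a Gaussian tail bound, and then to moments via the layer-cake formula. The only delicate point is the exact constant in $q!(4v)^q$; I have not closed it, and I say below what I would try.

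\emph{Step 1 (Chernoff tail).} For $t>0$ and $\lambda>0$, Markov's inequality applied to $e^{\lambda Z}$ gives $\pr(Z\geqslant t)\leqslant e^{v\lambda^2-\lambda t}$. The choice $\lambda=t/(2v)$ gives $\pr(Z\geqslant t)\leqslant e^{-t^2/(4v)}$. The same argument applied to $-Z$ (the hypothesis holds for all $\lambda\in\mathbb R$) gives $\pr(-Z\geqslant t)\leqslant e^{-t^2/(4v)}$.

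\emph{Step 2 (layer-cake).} Write $Z^{2q}=Z_+^{2q}+Z_-^{2q}$ and treat each part separately. For the positive part,
\begin{align*}
\ex{Z_+^{2q}}=\int_0^\infty 2q\,t^{2q-1}\pr(Z>t)\,dt\leqslant \int_0^\infty 2q\,t^{2q-1}e^{-t^2/(4v)}\,dt=(4v)^q\,q\int_0^\infty w^{q-1}e^{-w}\,dw=q!\,(4v)^q ,
\end{align*}
using the substitution $w=t^2/(4v)$. The negative part satisfies the same bound. This is where the constant becomes the issue: naive summation of the two parts gives $2\,q!(4v)^q$, and replacing the tail bound by $\min\{1,2e^{-t^2/(4v)}\}$ does not remove the factor either. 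A direct computation shows the resulting bracket equals $q!\sum_{k\leqslant q}(\ln 2)^k/k!$, which is strictly larger than $q!$.

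\emph{Step 3 (removing the factor $2$: the main obstacle).} I would first exploit symmetry through $\ex{\cosh(\lambda Z)}\leqslant e^{v\lambda^2}$. Dropping the constant term of the series gives
\begin{align*}
\frac{\lambda^{2q}\ex{Z^{2q}}}{(2q)!}\leqslant e^{v\lambda^2}-1 \quad \text{for all } \lambda,
\end{align*}
and hence
\begin{align*}
\ex{Z^{2q}}\leqslant (2q)!\,v^q\inf_{y>0}\frac{e^{y}-1}{y^q}.
\end{align*}
For $q=1$ the infimum is attained as $y\to 0$ and gives $\ex{Z^2}\leqslant 2v\leqslant 4v$, so the claim holds there. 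For larger $q$ this bound is tight only up to a bounded factor, so I would combine it with Step 2. Concretely, for each $q$ I would take the minimum of $2\,q!(4v)^q$ and $(2q)!\,v^q\inf_{y>0}(e^y-1)/y^q$, and use the Stirling-type estimates $q!\leqslant e\sqrt q\,(q/e)^q$ and $\binom{2q}{q}\leqslant 4^q/\sqrt{\pi q}$ to compare each with $q!(4v)^q$.

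If neither comparison reaches the constant exactly, the fallback is the cleaner statement $\ex{Z^{2q}}\leqslant 2\,q!(4v)^q$. As far as I can see, this weaker version is what the paper's downstream uses need: it is only invoked to verify the moment condition of a Bernstein-type inequality for unbounded variables, where multiplicative constants are absorbed into the constants $a$, $b$, $C$ appearing there.
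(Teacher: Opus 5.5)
Your Steps 1--2 are correct, but they only give $\mathbb{E}[Z^{2q}]\leqslant 2\,q!(4v)^q$. Step 3, as planned, cannot remove the factor $2$, because the Taylor-coefficient inequality $x^{2q}\leqslant (2q)!(\cosh x-1)$ is too lossy. At $q=2$ it gives $24\,v^2\inf_{y>0}(e^y-1)/y^2\approx 37.1\,v^2$, against the target $32\,v^2$; the tail route gives $\approx 61.9\,v^2$. For large $q$ the Taylor route behaves like $\sqrt2\,q!(4v)^q$ and the tail route like $2\,q!(4v)^q$, so taking the minimum of the two does not reach the constant either. As it stands, the proposal proves only your fallback, not the stated lemma.

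The missing idea is to bound the moment directly with a single $\lambda$, using a sharper pointwise inequality. Since $\sup_{s\geqslant 0}s^{2q}e^{-s}=(2q/e)^{2q}$ and $e^{\abs{x}}\leqslant 2\cosh x$, you have $x^{2q}\leqslant 2(2q/e)^{2q}\cosh x$ for all real $x$. Apply this with $x=\lambda Z$, use your symmetrization $\mathbb{E}[\cosh(\lambda Z)]\leqslant e^{v\lambda^2}$, and take $\lambda^2=q/v$:
\begin{align*}
\mathbb{E}[Z^{2q}]\leqslant 2\Big(\frac{2q}{e}\Big)^{2q}\frac{e^{q}v^q}{q^q}=2\Big(\frac{q}{e}\Big)^{q}(4v)^q\leqslant\frac{2}{\sqrt{2\pi q}}\,q!\,(4v)^q<q!\,(4v)^q .
\end{align*}
The last steps use Stirling's lower bound $q!\geqslant\sqrt{2\pi q}\,(q/e)^q$ and $\sqrt{2\pi}>2$. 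Integrating the Chernoff tail over all thresholds produces $q!$ where a single $\lambda$ produces only $(q/e)^q$. That loss of $\sqrt{2\pi q}$ is more than the factor $2$ you were trying to remove.

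For comparison, the paper gives no argument of its own; it cites Theorem 2.1 of Boucheron, Lugosi and Massart, which is exactly your Steps 1--2. In that reference, ``sub-Gaussian with variance factor $v'$'' means $\ln\mathbb{E}[e^{\lambda Z}]\leqslant\lambda^2v'/2$. The theorem assumes tails $e^{-x^2/(2v')}$ and concludes $2\,q!(2v')^q$. The hypothesis here corresponds to $v'=2v$, so the citation actually delivers $2\,q!(4v)^q$, which is your fallback rather than the stated $q!(4v)^q$. Your remark that the constant is immaterial downstream is correct: Corollary~\ref{cor: bernstein} only needs the moment condition of Theorem~\ref{the: A unb Bernstein} up to constants. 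Proving the lemma as stated, however, requires an argument like the one above.
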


\begin{proof}[Proof of Lemma \ref{lem: A subG moments}]
	See Theorem 2.1 in \citetsupp{boucheron2013concentration}.
\end{proof}

\begin{AppThe}[Bernstein inequality for unbounded random variables]
	\label{the: A unb Bernstein}
	Let $Z_1, \dots, Z_n$ be independent random variables. Assume that there exist some positive constants $\nu$ and $c$ such that $\frac{1}{n} \sumin \ex{Z_i^2} \leqslant \nu$ such that, for all integers $q \geqslant 3$, $\frac{1}{n} \sum \ex{\abs{Z_i}^q} \leqslant \frac{q! c^{q-2}}{2} \nu.$
	Then, for all $\epsilon > 0$, we have $\pr\left(\abs{\frac{1}{n} \sumin (Z_i - \ex{Z_i})} \geqslant \epsilon\right) \leqslant 2 \exp \left(- \frac{n \epsilon^2}{2 (\nu + c \epsilon)}\right)$.
\end{AppThe}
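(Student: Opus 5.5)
The plan is the standard Chernoff (exponential Markov) argument: bound each moment generating function using the moment condition, then optimize over the exponent. It suffices to prove the one-sided bound
\begin{align*}
\pr\Big(\tfrac{1}{n}\textstyle\sum_{i=1}^n (Z_i - \ex{Z_i}) \geqslant \epsilon\Big) \leqslant \exp\Big(-\tfrac{n\epsilon^2}{2(\nu + c\epsilon)}\Big).
\end{align*}
The variables $-Z_1,\dots,-Z_n$ satisfy exactly the same hypotheses, since those involve only $Z_i^2$ and $\abs{Z_i}^q$. So the same bound holds for the lower tail, and a union bound yields the factor $2$.

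\textbf{Step 1: moment generating function bound.} Fix $\lambda \in (0, 1/c)$. Use the elementary inequality $e^{x} \leqslant 1 + x + \sum_{q \geqslant 2} \abs{x}^q/q!$ together with Fubini/monotone convergence. This gives
\begin{align*}
\ex{e^{\lambda Z_i}} \leqslant 1 + \lambda \ex{Z_i} + \sum_{q\geqslant 2} \frac{\lambda^q \ex{\abs{Z_i}^q}}{q!}.
\end{align*}
Next apply $\ln(1+x) \leqslant x$ and subtract $\lambda\ex{Z_i}$:
\begin{align*}
\ln \ex{e^{\lambda (Z_i - \ex{Z_i})}} \leqslant \sum_{q \geqslant 2} \frac{\lambda^q \ex{\abs{Z_i}^q}}{q!}.
\end{align*}
Now sum over $i$ and average. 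For the $q=2$ term use $\frac1n\sum_i \ex{Z_i^2} \leqslant \nu$. For $q \geqslant 3$ use the hypothesis $\frac1n\sum_i \ex{\abs{Z_i}^q} \leqslant \frac{q!\,c^{q-2}}{2}\nu$. Summing the resulting geometric series gives
\begin{align*}
\frac1n \sum_{i=1}^n \ln \ex{e^{\lambda (Z_i - \ex{Z_i})}} \leqslant \frac{\nu\lambda^2}{2}\sum_{q\geqslant 2}(c\lambda)^{q-2} = \frac{\nu \lambda^2}{2(1-c\lambda)}.
\end{align*}
The only point needing care is that the hypothesis is stated for the average over $i$ rather than for each $Z_i$ separately. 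This causes no difficulty because we only ever need the sum of the logarithms of the moment generating functions.

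\textbf{Step 2: Chernoff bound and choice of $\lambda$.} By independence and Markov's inequality,
\begin{align*}
\pr\Big(\textstyle\sum_i (Z_i - \ex{Z_i}) \geqslant n\epsilon\Big) \leqslant \exp\Big(-n\lambda\epsilon + \frac{n\nu\lambda^2}{2(1-c\lambda)}\Big).
\end{align*}
Choose $\lambda = \epsilon/(\nu + c\epsilon)$, which lies in $(0, 1/c)$. Then $1 - c\lambda = \nu/(\nu+c\epsilon)$, and the exponent becomes
\begin{align*}
-\frac{n\epsilon^2}{\nu + c\epsilon} + \frac{n\epsilon^2}{2(\nu+c\epsilon)} = -\frac{n\epsilon^2}{2(\nu+c\epsilon)}.
\end{align*}
This is the one-sided bound.

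There is no real obstacle in this argument. The only step requiring attention is justifying the term-by-term series bound on the moment generating function, including that it is finite for $\lambda < 1/c$; this is ensured precisely by the moment growth condition.
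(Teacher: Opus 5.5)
Your proof is correct and is essentially the argument the paper relies on. The paper only cites Corollary 2.11 of Boucheron, Lugosi and Massart (2013), whose proof is this same Chernoff argument: bound the log-moment generating function by $\nu\lambda^2/(2(1-c\lambda))$ using the moment condition, then choose $\lambda$, here with $v = n\nu$ and $t = n\epsilon$.
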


\begin{proof}[Proof of Theorem \ref{the: A unb Bernstein}]
	See Corollary 2.11 in \citetsupp{boucheron2013concentration}.
\end{proof}

Specifically, we make use of the following corollary.

\begin{AppCor}
	\label{cor: bernstein}
	Let $Z_1, \dots, Z_n$ be mean zero independent random variables. Assume that there exists some $v > 0$ such that $\ex{e^{\lambda Z_i}} \leqslant e^{v \lambda^2}$ for all $\lambda \in \mathbb R$ and for all $i \in \{1, \dots, n\}$. Then, there exist some positive constants $C$, $a$, and $b$ such that for all constants $\alpha_1$, \dots, $\alpha_n$ satisfying $\max_{i} \abs{\alpha_i} < \overline \alpha$ and for all $\epsilon > 0$,
	\begin{align*}
		\pr \left(\abs{n^{-1} \sumin \alpha_i Z_i } \geqslant \epsilon \right) \leqslant C \exp\left(-\frac{n \epsilon^2}{a + b \epsilon}\right)
	\end{align*}
	and
	\begin{align*}
		\pr \left(\abs{n^{-1} \sumin \alpha_i \left(Z_i^2 - \ex{Z_i^2}\right) } \geqslant \epsilon \right) \leqslant C \exp\left(-\frac{n \epsilon^2}{a + b \epsilon}\right).
	\end{align*}
\end{AppCor}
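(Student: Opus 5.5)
Both inequalities will follow from Theorem~\ref{the: A unb Bernstein}, applied to the independent variables $U_i \coloneqq \alpha_i Z_i$ and $V_i \coloneqq \alpha_i (Z_i^2 - \ex{Z_i^2})$. Both have mean zero. The work is to check that Theorem~\ref{the: A unb Bernstein}'s moment conditions hold with constants $\nu$ and $c$ that depend only on $v$ and $\overline \alpha$. The conclusion $2\exp(-n\epsilon^2/(2(\nu + c\epsilon)))$ then has the stated form with $C = 2$, $a = 2\nu$ and $b = 2c$. These constants are uniform over all admissible $\alpha_1,\dots,\alpha_n$ and over the distributions of the $Z_i$ (beyond the common $v$), which is exactly what the later union-bound arguments need.

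\textbf{Weighted sums.} Lemma~\ref{lem: A subG moments} gives $\ex{Z_i^{2q}} \leqslant q!(4v)^q$ for every integer $q \geqslant 1$. With $q=1$, this yields $\ex{U_i^2} \leqslant \overline\alpha^2 \cdot 4v$. For $q \geqslant 3$, I would use Jensen to write $\ex{\abs{Z_i}^q} \leqslant (\ex{Z_i^{2q}})^{1/2} \leqslant (q!)^{1/2}(4v)^{q/2}$. Since $(q!)^{1/2} \leqslant q!$, this gives
\begin{align*}
\ex{\abs{U_i}^q} \leqslant q!\,(2\overline\alpha\sqrt v)^{q}.
\end{align*}
Taking $c = 2\overline\alpha \sqrt v$ and $\nu = 2(2\overline\alpha\sqrt v)^2$, the right-hand side equals $\frac{q!}{2}c^{q-2}\nu$, and $\ex{U_i^2} \leqslant \nu$ as well. (Alternatively, since $\ex{e^{\lambda U_i}} \leqslant e^{v\overline\alpha^2\lambda^2}$, Hoeffding gives a pure Gaussian tail. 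That tail is weaker-form-compatible because enlarging the denominator $a$ to $a + b\epsilon$ only loosens the bound.)

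\textbf{Centered squares.} By convexity, $\abs{x-y}^q \leqslant 2^{q-1}(\abs{x}^q + \abs{y}^q)$. Combining this with Jensen, $(\ex{Z_i^2})^q \leqslant \ex{Z_i^{2q}}$, gives
\begin{align*}
\ex{\abs{V_i}^q} \leqslant \overline\alpha^q\, 2^{q}\, \ex{Z_i^{2q}} \leqslant q!\,(8v\overline\alpha)^q ,
\end{align*}
where the last step is again Lemma~\ref{lem: A subG moments}. Taking $c = 8v\overline\alpha$ and $\nu = 2(8v\overline\alpha)^2$, the right-hand side equals $\frac{q!}{2}c^{q-2}\nu$ for all $q\geqslant 3$, and the same bound with $q=2$ gives $\ex{V_i^2} \leqslant \nu$. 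Applying Theorem~\ref{the: A unb Bernstein} then finishes the proof.

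The only point requiring care, and the one I expect to be the main obstacle, is turning the even-moment bound of Lemma~\ref{lem: A subG moments} into the factorial growth $q!\,c^{q-2}$ for all $q$, including odd $q$. The Jensen step $(\ex{\abs{Z}^{2q}})^{1/2}$ handles this, together with the crude bound $(q!)^{1/2} \leqslant q!$, or $q! \leqslant (2q)!$ where needed. Everything else is bookkeeping of constants: finally take the larger of the two pairs $(a,b)$ so that a single $(C,a,b)$ serves both inequalities.
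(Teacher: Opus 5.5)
Your proposal is correct and follows the same route as the paper, whose proof simply invokes Lemma~\ref{lem: A subG moments} and Theorem~\ref{the: A unb Bernstein}. You fill in the omitted bookkeeping: Cauchy--Schwarz for the odd moments, the $2^{q-1}$ convexity bound for the centered squares, and the choices $c=2\overline\alpha\sqrt v$ and $c=8v\overline\alpha$ with $\nu=2c^2$, all of which check out and give constants depending only on $v$ and $\overline\alpha$.
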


\begin{proof}[Proof of Corollary \ref{cor: bernstein}]
	Follows from Lemma \ref{lem: A subG moments} and Theorem \ref{the: A unb Bernstein}.
\end{proof}

\begin{Rem}
	Note that the constants $C$, $a$, and $b$ depend on $v$ and $\overline \alpha$ only.
\end{Rem}

\section{Illustration of Assumption~\ref{ass: local smooth g}}
\label{sec: non-diff illustration}
Suppose $\xi$ is scalar and $g (\xi_i,\xi_k) = \kappa \abs{\xi_i - \xi_k}$. Note that, as a function of $\xi_i$, $g(\xi_i,\xi_k)$ is non-differentiable at $\xi_i = \xi_k$. When $\xi_i, \xi_j \leqslant \xi_k$, we have
\begin{align*}
g(\xi_i,\xi_k) - g(\xi_j,\xi_k) = \kappa (\xi_k - \xi_i) - \kappa (\xi_k - \xi_j) = - \kappa (\xi_i - \xi_j).
\end{align*}
If $\xi_i,\xi_j \geqslant \xi_k$,
\begin{align*}
g(\xi_i,\xi_k) - g(\xi_j,\xi_k) = \kappa (\xi_i - \xi_k) - \kappa (\xi_j - \xi_k) = \kappa (\xi_i - \xi_j).
\end{align*}
So, we can take
\begin{align*}
G (\xi_i,\xi_k) = \begin{cases}
  - \kappa, & \xi_i < \xi_k\\
  \kappa, & \xi_i \geqslant \xi_k
\end{cases}.
\end{align*}
Then, the remainder $r_g  (\xi_i,\xi_j,\xi_k) = 0$ when $\xi_i,\xi_j \leqslant \xi_k$ or $\xi_i,\xi_j \geqslant \xi_k$. However, if, for example, $\xi_i \leqslant \xi_k \leqslant \xi_j$,
\begin{align*}
g(\xi_i,\xi_k) - g(\xi_j,\xi_k) = \kappa( 2 \xi_k - \xi_i - \xi_j).
\end{align*}
Since $G(\xi_i,\xi_k) = - \kappa$,
\begin{align*}
g(\xi_i,\xi_k) - g(\xi_j,\xi_k) = - \kappa (\xi_i - \xi_j) + 2 \kappa (\xi_k - \xi_j) = G(\xi_i,\xi_k) (\xi_i - \xi_j) + r_g(\xi_i,\xi_j,\xi_k),
\end{align*}
so $r_g(\xi_i,\xi_j,\xi_k) = 2 \kappa (\xi_k - \xi_j)$ when $\xi_i \leqslant \xi_k \leqslant \xi_j$. Clearly, in this case, the linearization remainder is no longer $O(\abs{\xi_i - \xi_j}^2)$. Importantly, Assumption \ref{ass: local smooth g} allows for this possibility: the remainder $r_g$ is bounded by $C \delta_n^2$ only when $\abs{\xi_i - \xi_k} > \delta_n$ and $\abs{\xi_j - \xi_i} \leqslant \delta_n$. Under these restrictions, $\xi_k$ cannot lie between $\xi_i$ and $\xi_j$ implying that in this case $r_g  (\xi_i,\xi_j,\xi_k) = 0$, so the bound on $r_g(\xi_i,\xi_j,\xi_k)$ imposed by Assumption \ref{ass: local smooth g} is trivially satisfied.

\section{Imputation of $\hat Y_{ij}^*$ with Missing Data}
\label{sec: missing data imputation}
In this section, we discuss how one can adjust the previously constructed estimator $\hat Y_{ij}^*$ when some interaction outcomes are missing. We will stick with the notation previously introduced in Section~\ref{ssec: missing}.

We start with constructing $\hat d_{\infty}^2(i,j)$ previously defined in \eqref{eq: hat d inf def}. In the studied setting, $\hat d_{\infty}^2(i,j)$ can be computed as
\begin{align}
  \label{eq: d inf missing def}
  \hat d_{\infty}^2(i,j) \coloneqq \max_{k \neq i,j} \Big\vert{\abs{\mathcal O_{ijk}}^{-1} \sum_{\ell \in \mathcal O_{ijk}} (Y_{i \ell} - Y_{j \ell}) Y_{k \ell}}\Big\vert,
\end{align}
where $\mathcal O_{ijk} = \{\ell: D_{i \ell} = D_{j \ell} = D_{k \ell} = 1 \}$. In practice, if $\abs{\mathcal{O}_{ijk}}$ is too small for a given $k$, this $k$ can be dropped from the calculation of $\hat d_{\infty}^2(i,j)$ above; it is sufficient to compute the maximum in \eqref{eq: d inf missing def} over a subset of all the other agents so long as it is not too small (otherwise, we can set $\hat d_{\infty}^2(i,j) = \infty$).

Next, we proceed with calculation of $\hat Y_{ij}^*$. To this end, we need to introduce a $j$-specific neighborhood of $i$ denoted by $\hatN_{ij}(n_{ij})$. As before, for simplicity, we will stick with discrete $X$. First, let us define a pool of candidate agents (donors) which  can potentially be used to impute $\hat Y_{ij}^*$ denoted by $\mathcal P_{ij} \coloneqq \{i':X_{i'} = X_i, D_{i'j} = 1\}$, where the requirement $D_{i'j} = 1$ ensures that $Y_{i'j}$ is observed. Then $\hatN_{ij}(n_{ij})$ is constructed as a collection of $n_{ij}$ agents from $\mathcal P_{ij}$ closest to agent $i$ in terms of $\hat d_{\infty}^2(i,i')$, i.e.,
\begin{align*}
  \hatN_{ij}(n_{ij}) \coloneqq \{i' \in \mathcal P_{ij}: \text{Rank}(\hat d_{\infty}^2(i,i')|\mathcal P_{ij}) \leqslant n_i \},
\end{align*}
and $\hat Y_{ij}^*$ can be estimated as
\begin{align*}
  \hat Y_{ij}^* = \frac{\sum_{i' \in \hatN_{ij}(n_{ij})} Y_{i' j}}{n_{ij}}.
\end{align*}

When the observed matrix $Y$ is so sparse and $\mathcal P_{ij}$ is so limited in the first place to find enough decent matches for imputation of $Y_{ij}^*$, one can try to sequentially impute the elements of $Y^*$ with a sufficient number of candidate donors in $\mathcal P$ first, and then use the newly imputed $\hat Y_{i'j}^*$'s as observed outcomes $Y_{i'j}$ to expand the previously thin pool of potential donors $\mathcal P_{ij}$. One can apply this procedure sequentially if needed to impute the whole $Y^*$ or a part of it.

Once $\hat Y^*$ is constructed, we can estimate $\hat d_{ij}^2$ as in \eqref{eq: hat d star def}, or if some elements of $\hat Y^*$ are still missing, we can modify \eqref{eq: hat d star def} in the same fashion as in \eqref{eq: general q}, where the overlaps should be defined based on whether the imputed $\hat Y^*$'s are missing or not. Once $\hat d_{ij}^2$'s are obtained, we can calculate $\hat \beta$ as in \eqref{eq: general beta}, or even use the imputed outcomes instead of the plain outcomes in \eqref{eq: general beta}, including in the definition of the overlap $\mathcal O_{ij}$ again.

\bibliographystylesupp{apalike}
\bibliographysupp{library}

\end{document}